\def\etal.{et\penalty50\ al.}
\theoremstyle{plain}
\newtheorem{theorem}{Theorem}[section]
\newtheorem{lemma}[theorem]{Lemma}
\newtheorem{claim}[theorem]{Claim}
\theoremstyle{definition}
\newtheorem{definition}{Definition}[section]
\theoremstyle{remark}
\newtheorem{question}{Question}[section]
\newtheorem{remark}[question]{Remark}
\newtheorem{openproblem}[question]{Open Problem}
\theoremstyle{plain}
\newtheorem*{theorem*}{Theorem}
\DeclareMathOperator{\OPT}{OPT}
\DeclareMathOperator{\ALG}{ALG}
\DeclareMathOperator{\AR}{AR}
\DeclareMathOperator{\Ima}{Im}
\DeclareMathOperator*{\argmin}{arg\,min}
\DeclareMathOperator{\curdeg}{cur\,deg}
\title{On Conceptually Simple Algorithms for Variants of Online Bipartite Matching}
\author{Allan Borodin\thanks{Research is supported by NSERC.} \\ University of Toronto \\ \textsf{bor@cs.toronto.edu}  \and
Denis Pankratov\footnotemark[1] \\ University of Toronto \\ \textsf{denisp@cs.toronto.edu} \and
Amirali Salehi-Abari\footnotemark[1] \\ University of Toronto \\ \textsf{abari@cs.toronto.edu}}
\begin{document}

\maketitle

\begin{abstract}
We present a series of results regarding 
conceptually simple algorithms for bipartite matching in various online and related models. 

We  first consider a deterministic adversarial model. The best approximation ratio possible for a one-pass deterministic online algorithm  is $1/2$, which is achieved by any  greedy algorithm. D\"urr \etal. \cite{Durr2016} recently presented a $2$-pass algorithm called \textsc{Category-Advice} that achieves approximation ratio $3/5$. We extend their algorithm to multiple passes. We prove the exact approximation ratio for the \textsc{$k$-pass Category-Advice} algorithm for all $k \ge 1$, and show that the approximation ratio converges to the inverse of the golden ratio $2/(1+\sqrt{5}) \approx 0.618$ as $k$ goes to infinity. The convergence is extremely fast --- the \textsc{$5$-pass Category-Advice} algorithm is already within $0.01\%$ of the inverse of the golden ratio.

We then consider a natural greedy algorithm in the online stochastic IID model -- 
\textsc{MinDegree}. This algorithm is an online version of a well-known and extensively studied offline algorithm \textsc{MinGreedy}. \textsc{MinGreedy} is known to have excellent empirical performance, as well as excellent performance under various input models. Alas, recently it was shown to have approximation ratio $1/2$ in the adversarial offline setting --- the approximation ratio achieved by any greedy algorithm. We obtain a result in the online known IID model that, in spirit, is similar to the offline result, but the proof is different. Namely, we show that \textsc{MinDegree} cannot achieve an approximation ratio better than $1-1/e$, which is guaranteed by any consistent greedy algorithm in the known IID model.

Finally, following the work in Besser and Poloczek \cite{BesserP17}, we depart from an adversarial or stochastic ordering and investigate a natural randomized algorithm (\textsc{MinRanking}) in the priority model. Although the priority model allows the algorithm to choose the input ordering in a general but well defined way, this natural algorithm cannot obtain the approximation of the \textsc{Ranking} algorithm in the ROM model. 
\end{abstract}

\section{Introduction}
\label{sec:intro}
Maximum bipartite matching (MBM) is a classic extensively studied graph problem. Let $G=(U,V,E)$ be a bipartite graph, where $U$ and $V$ are the vertices on the two sides, and $E \subseteq U \times V$ is a set of $m$ edges. In 1931, K\"onig and Egerv\'{a}ry independently gave a characterization of MBM in terms of the minimum vertex cover. The celebrated Hopcroft-Karp algorithm~\cite{HopcroftKarp1973} running in time $O(m\sqrt{n})$ was discovered in 1973 where $n$ is the number of vertices. The first improvement in the regime of relatively sparse graphs came forty years later, in 2013, when Madry~\cite{Madry2013} developed a $\widetilde{O}(m^{10/7})$ algorithm based on electrical flows. For dense graphs, i.e., when $m \approx n^2$, Mucha and Sankowski~\cite{MuchaSankowski2004} describe an algorithm running in time $O(n^\omega)$, where $\omega \le 2.373$ is the matrix multiplication constant.  We refer the interested reader to \cite{Madry2016} and \cite{DuanP14} and references therein for more information on MBM in the offline setting. 
While current algorithms for solving MBM optimally in the offline setting are reasonably efficient, they still fall short of linear time algorithms. For large graphs, linear or near linear time algorithms might be required. In that regard, a $(1-\epsilon)$-approximation can be computed in $O(m/\epsilon)$ time by a version of the Hopcroft-Karp algorithm in the offline setting ~\cite{DuanP14}. 
Arguably, such algorithms are not that conceptually simple and require a reasonable understanding of the problem. We now outline the computational models relevant to our focus on simple algorithms, and then highlight our results.     

\paragraph{Conceptually Simple Algorithms.} Our focus in this paper is on ``conceptually simple algorithms'' and, in particular, on such algorithms with regard to variants of the online bipartite matching problem. We will not define ``conceptual simplicity'' but claim that certain types of algorithms  (e.g., greedy and  local search) usually fall within this informal ``you know it when you see it'' concept. The online model is sometimes necessitated by  applications, and can be studied  
 with respect to a completely adversarial model, the random order  model (ROM), or a distributional input model (e.g., known and unknown IID input models). In all of these models, the algorithm has no control over the ordering of the inputs and 
must make irrevocable decisions for each input item as it arrives. As such, online algorithms are a prime example of a conceptually simple algorithmic paradigm that can be extended in various ways leading to simple offline algorithms. These online extensions can provide much improved performance both in terms of worst-case approximation ratios and in terms of performance on real data. See, for example, the experimental analysis of MaxSat provided by Poloczek and Williamson \cite{PoloczekW16}. However, this 
still begs the question as to why we should restrict ourselves to conceptually simple algorithms when much better offline algorithms are known. Indeed as already stated, there are 
polynomial time offline optimal algorithms and efficient FPTAS algorithms running in time $O(m/\epsilon)$ for bipartite matching.  
  
While conceptually simple  algorithms may not match the best approximations realized by more complex methods, they are usually very efficient (i.e., linear or near linear time with small constant factors) and often work well on realistic data exceeding 
worst-case approximation bounds. Conceptually simple algorithms can also 
be used as a 
preprocessing step for initializing a local search algorithm as in 
Chandra and 
Halld{\'{o}}rsson \cite{ChandraH01a}. Moreover, and this 
can be even more critical in many settings, conceptually simple algorithms are easy to implement and modify with relatively little knowledge about the problem domain. Indeed, conceptual simplicity is arguably the main reason for the use of  simple mechanisms in auctions (see, for example, Lucier and Syrgkanis \cite{LucierS15}) and the success of MapReduce in distributed parallel applications as introduced by Dean and Ghemawat \cite{DeanG08}.  

We will consider two departures from the adversarial and distributional one-pass online models. In the first departure, we consider a multi-pass online algorithm generalizing the two-pass algorithm in D{\"u}rr \etal.\cite{Durr2016}. In this regard we are also motivated by the Poloczek et al. \cite{PoloczekSWZ17} two-pass algorithm 
for MaxSat.  The D{\"u}rr \etal. two-pass algorithm and our extension to a $k$-pass online bipartite algorithm can also be viewed as an $O(n)$ space 
semi-streaming algorithm in the vertex input model. We can also view these multi-pass algorithms as de-randomizations of randomized online algorithms. 

The second departure is that of priority algorithms \cite{BorodinNR2003}, a model for greedy and more generally myopic algorithms that extend online algorithms by allowing the algorithm to determine (in some well-defined way) the order of input arrivals. 

Other conceptually simple  generalizations of the traditional online model are clearly possible, such as the ability to modify previous decision (e.g., as in \cite{EpsteinLSW13}) and parallel executions of online algorithms (e.g., as in \cite{Hall2002, Buchbinder2016, PenaB16}).   

\paragraph{Adversarial Online Model.} In 1990, Karp, Vazirani, and Vazirani~\cite{KarpVV90} initiated the study of MBM in the online setting. In this setting, nodes $V$ are the offline nodes known to an algorithm beforehand, and nodes $U$ arrive online in some adversarial order. When a node in $U$ arrives, all its neighbors in $V$ are revealed simultaneously. An online algorithm is required to make an irrevocable decision with regard to which neighbor (if any) the arrived node is matched to. 
Any greedy algorithm (yielding a maximal matching) achieves a $1/2$ approximation and 
Karp, Vazirani, and Vazirani showed that no deterministic algorithm can achieve an (asymptotic) approximation ratio better than $1/2$ in the adversarial online model. They gave a randomized online algorithm called \textsc{Ranking} and showed that it achieves a $1-1/e \approx 0.632$ expected approximation ratio. Moreover, they proved that no randomized algorithm can beat $1-1/e$ in the adversarial online model. Seventeen years after the publication of the \textsc{Ranking} algorithm a mistake was found in the analysis of the Ranking algorithm. The mistake was discovered 
independently by Krohn and Varadarajan and by Goel and Mehta in \cite{GoelM2008}, 
and a correct proof was provided by Goel and Mehta. 
Since then many different and simplified proofs of the approximation ratio of \textsc{Ranking} have been given (see \cite{GoelM2008,BirnbaumM2008,DevanurJK2013}). Thus, the one-pass  adversarial online setting for MBM is now reasonably well understood.

\paragraph{Online Stochastic Models.} Feldman \etal. \cite{FeldmanMVM2009} introduced the known IID distributional model for MBM, which is motivated by practical considerations. In the applications of MBM to online advertising, one often knows some statistics about the upcoming queries (online nodes). Feldman \etal. model this by the notion of a type graph $G = (U, V, E)$ and a probability distribution $p : U \rightarrow [0,1]$. The online nodes are sampled from $p$ independently one at a time. An algorithm knows $G$ and $p$ beforehand. As before, an algorithm is required to make an irrevocable decision about which neighbor to match the newly arriving online node to. In this setting, the adversary can only choose the type graph and the distribution $p$ but doesn't have further control over the online sequence of nodes, as those are sampled in the IID fashion. Thus, the adversary is more restricted than in the adversarial online model. Feldman \etal. \cite{FeldmanMVM2009} describe an algorithm beating the $1-1/e$ barrier and achieving approximation ratio $\approx 0.67$. This work was followed by a long line of work including   ~\cite{Bahmani2010,Manshadi2011,HaeuplerMZ2011,Jaillet2014,BrubachSSX2016}. So far, the best approximation ratio for arbitrary arrival rates is $\approx 0.706$ due to Jaillet and Lu \cite{Jaillet2014}. 

Other online stochastic input models have been studied; e.g., Known IID with integer types, Random Order Model (ROM), and Unknown IID. In addition, practical considerations have motivated a study of MBM-like problems
and extensions that include Vertex-Weighted Matching, Edge-Weighted Matching, Adwords, DisplayAds, Online Submodular Welfare Maximization, among the most well-known. For more information and background on these models and problems, we refer the reader to the excellent survey by Mehta \cite{Mehta2013}. The main reason for such an abundance of models and problems is that in recent years the online advertising industry (and more generally
online auctions) has grown extensively. Online advertising platforms essentially solve various flavors of MBM. Online MBM is an important example of a problem that is at the intersection of practice and theory.

\paragraph{Semi-streaming Model.}

One-pass and multi-pass streaming algorithms are important algorithmic models motivated by the 
necessity to process extremely large data streams where the amount of data may preclude 
storing all of the input and hence small 
space bounds are imposed throughout the computation. 
The usual assumption is that the data stream is created by an adversary but 
other input streams (e.g., distributional) are also possible. 
Much of the streaming literature concerns various forms of counting and statistics gathering. Semi-streaming algorithms are streaming algorithms designed for (say) 
graph search and optimization problems  
where the output itself requires $O(n)$ space 
and hence a realistic goal is to 
maintain ${\tilde O}(n)$ space rather than space $O(m)$. 
The semi-streaming model (with edge inputs) was introduced by  Feigenbaum \etal. \cite{FeigenbaumKMSZ05} and subsequently studied in \cite{EpsteinLMS11, KonradMM12, EggertKMS12}. In particular, Eggert \etal. \cite{EggertKMS12} provide a FPTAS multi-pass semi-streaming algorithm for MBM using space ${\tilde O}(n)$ in the edge input model. In the vertex input semi-streaming model, Goel \etal. \cite{GoelKK12} give a {\it deterministic} $1-1/e$ approximation and Kapralov \cite{Kapralov13} proves that no semi-streaming algorithm can improve upon this ratio. (See also the recent survey by McGregor \cite{McGregor17}.)
The difference between semi-streaming
algorithms and online algorithms in the sense of competitive analysis is that 
streaming algorithms do not have to make online decisions but must maintain 
small space throughout the computation while online algorithms must make 
irrevocable decisions for each input item but have no  space requirement. The Goel \etal. result shows the power of deterministic semi-streaming over deterministic online algorithms. 
In some cases, streaming algorithms are designed so as
to make results available                                   
at any time (after each input item) during the computation and hence some streaming 
algorithms 
can also be viewed both as a streaming algorithm and an online algorithm. 
Conversely, any algorithm that restricts itself to ${\tilde O}(n)$ 
space can also be considered as a semi-streaming algorithm.  

\paragraph{Priority Model.} The priority model~\cite{BorodinNR2003} captures the notion of greedy-like algorithms. In this setting, an algorithm has some control over the order of online nodes. More specifically, an input to the algorithm is represented as a set of input items coming from some (possibly infinite) universe. For example, the universe of items in the MBM problem is the set of all pairs $(u, N(u))$ where $u$ is an online node and $N(u)$ is its neighborhood, and an instance is a collection of such pairs specifying the graph $G$. The algorithm first commits to an ordering of the entire universe.\footnote{In practice, the ordering is determined by a priority function that maps each possible input item into a real number and then the ordering is determined by the priority values.} Then, the adversary picks an instance $G$ and reveals the online nodes in the order specified by the algorithm. The algorithm is still required to make an irrevocable decision about a newly arriving node. This captures the idea that many offline greedy algorithms sort the input items, and then do a single pass over the items in the sorted order. More generally, priority algorithms can adaptively reorder so as to select the next input item to process.  Many problems have been studied in the priority model \cite{AngelopoulosB2002, BorodinBL2005, YeB08, PoloczekD2011, DavisI2009, BorodinIYZ2012, BesserP17}. The original deterministic priority model was extended to the randomized priority model in \cite{AngelopoulosB2010}. We shall use the term {\it fully randomized priority algorithm} to indicate that the ordering of the input items and the decisions for each item are both randomized. When only the decisions are randomized (and the ordering is deterministic) we will simply say {\it randomized priority algorithm}. With regards to maximum matching, Aronson \etal. \cite{AronsonDFS1995} proved that an algorithm that picks a random vertex and matches it to a random available neighbor (if it exists) achieves approximation ratio $1/2 + \epsilon$ for some $\epsilon > 0$ in general graphs. Besser and Poloczek ~\cite{BesserP17} show that the algorithm that picks a random vertex of minimum degree and matches it to a randomly selected neighbor cannot 
improve upon the $1/2$ approximation ratio (with high probability) even for bipartite graphs. Pena and Borodin \cite{PenaB16} show that no deterministic (respectively, fully randomized) priority algorithm can achieve approximation ratio better than $1/2$ 
(respectively 
$53/54$)  for the  MBM problem. (See also \cite{BesserP2017} with respect to the  
the difficulty of proving inapproximation results for all randomized 
priority algorithms.)

\paragraph{Advice Model.} D\"urr \etal. in \cite{Durr2016} studied the online MBM problem in the adversarial (tape) \emph{advice model}. Advice can be viewed as a generalization of randomness. A randomized algorithm is given random bits (independent of the input instance), whereas an advice algorithm is given advice bits prior to processing the online input. In the most unrestricted advice setting, the advice bits are set by an all-powerful oracle.  
D\"urr \etal. show that $\Theta_\epsilon(n)$ advice bits are necessary and sufficient \footnote{More precisely, $O(\frac{n}{\epsilon^5})$ advice bits are sufficient and 
$\Omega(\log( \frac{1}{\epsilon})n)$ advice bits are necessary to achieve a $(1-\epsilon)$ approximation. The approximation algorithm is based on the multi-pass semi-streaming result of Eggert \etal. \cite{EggertKMS12} and lacks the simplicity of the
D\"urr \etal. \cite{Durr2016} \textsc{Category Advice} algorithm.}  to guarantee approximation ratio $1-\epsilon$ for MBM. They also show that $O(\log n)$ advice bits are sufficient for a deterministic advice algorithm to guarantee a $1-1/e$ approximation ratio. (This result is based on the randomization to advice transformation due to B\"ockenhauer \etal. \cite{BockenhauerKKK2017}.) Construction of the $O(\log n)$ advice bits is based on examining the behavior of the \textsc{Ranking} algorithm on all $n!$ possible random strings for a given input of length $n$, which requires exponential time. It is not known if there is an efficient way to construct $O(\log n)$ advice bits. More generally, one may put computational or information-theoretic restrictions on the advice string, and ask what approximation ratios are achievable by online algorithms with restricted advice. Not only is this framework of theoretical value, but it also gives rise to classes of conceptually simple offline algorithms if the advice string is restricted to be efficiently computable. 
D\"urr \etal. \cite{Durr2016} present a deterministic advice algorithm \textsc{Category-Advice} achieving approximation ratio $3/5$ with an $n$-bit advice string, where the advice string itself is computable by an online algorithm. This algorithm can obviously be viewed as a 2-pass online algorithm. 
\subsection{Summary of Our Results}
In this subsection, we briefly describe our results on conceptually simple algorithms under the input models discussed above.

\paragraph{Our Online Multi-pass Results.}
We generalize the \textsc{Category-Advice} algorithm to a  \textsc{$k$-pass Category-Advice} algorithm for $k \ge 1$. For each $k \ge 1$, we prove that the exact approximation ratio of \textsc{$k$-pass Category-Advice} algorithm is $F_{2k}/F_{2k+1}$, where $F_n$ is the $n$th Fibonacci number. Our bounds show that the analysis of D\"urr \etal. for the \textsc{$2$-pass Category-Advice} algorithm was tight. Our result immediately implies that the approximation ratio of \textsc{$k$-pass Category-Advice} converges to the inverse of the golden ratio $2/(1+\sqrt{5}) \approx 0.618$ as $k$ goes to infinity.

\paragraph{Our Results for the Known IID Model.} A greedy algorithm always matches an online vertex if it has at least one available neighbor. We observe that every online algorithm for bipartite matching  
can be converted into a greedy online algorithm without hurting its approximation ratio. The greedy property does not specify how to break ties when several neighbors of an online vertex are available. Goel and Mehta~\cite{GoelM2008} proved that most greedy algorithms (i.e., no matter what tie-breaking rules are used as long as they are consistent) achieve at least a $1-1/e$ approximation ratio in the known IID model. As mentioned earlier, algorithms achieving better approximation ratios are known. They are usually stated as non-greedy algorithms, but using a general conversion they can be turned into greedy algorithms, albeit with somewhat unnatural tie-breaking rules. These algorithms have polynomial time preprocessing of the type graph and, thus, are feasible from a theoretical point of view. The preprocessing step, although polynomial time, might not be feasible from a practical point of view on large type graphs. Moreover, we argue that these algorithms are not that ``conceptually simple.'' At present, the research literature on conceptually simple tie-breaking rules for greedy algorithms in the known IID model is scarce. We introduce and study a deterministic greedy online algorithm arising out of a natural, conceptually simple, and efficient tie-breaking rule  --- \textsc{MinDegree}. This algorithm is motivated by a well-known offline matching algorithm  --- \textsc{MinGreedy \cite{Tinhofer84}}. We show that \textsc{MinDegree} does not beat the $1-1/e$ approximation achieved by any greedy algorithm (using consistent tie-breaking).

\paragraph{Our Results for the Priority Model.}
When we consider the \textsc{Ranking} algorithm in the ROM model, we have an instance
of a priority algorithm where both the order of input arrivals and the 
decisions made for each online input are randomized. This ordering is simply 
a uniform random permutation of the set of adversarial chosen input items. 
Is there a more ``informed'' way to deterministically or randomly choose 
the ordering within the priority framework? A natural idea is to give priority to the online nodes having the smallest degree since intuitively they would seem to 
be the hardest to match if not seen early. Using the Ranking algorithm to match 
online nodes, our intuition turns out to 
be misleading. We show that giving priority to nodes having the smallest degree using some deterministic (or the uniform random tie-breaking) rule cannot match the approximation achieved by a uniform ordering of the online nodes. In contrast to the \textsc{$2$-pass Category-Advice} $3/5$ approximation, 
our analysis can also be used to show that a deterministic two-pass algorithm that computes  the degree of the offline vertices in the first pass and then reorders the offline vertices  according to non-decreasing degree (breaking ties by the initial ordering) will not achieve an asymptotic approximation ratio better than $1/2$.

\paragraph{Organization.} The rest of this paper is organized as follows. Section~\ref{sec:prelim} introduces key definitions, notation, and a statement of the \textsc{Ranking} algorithm. In Section~\ref{sec:multipass} we describe the \textsc{$k$-pass Category-Advice} algorithm and prove the exact approximation ratio for each $k \ge 1$, giving matching upper and lower bounds. 
In Section~\ref{sec:stochastic}, we provide an analysis of \textsc{MinDegree} 
in the known IID model. In Section~\ref{sec:priority}, we describe the \textsc{MinRanking} algorithm and provide analysis of the family of graphs due to Besser and Poloczek \cite{BesserP17}. We conclude and present some open problems in Section~\ref{sec:conclusion}.

\section{Preliminaries}
\label{sec:prelim}
We are given a bipartite graph $G=(U,V,E)$ where $U$ and $V$ form a partition of the vertices, and  
edges are $E \subseteq U \times V$. Names of the vertices are natural numbers. We consider the maximum bipartite matching problem in various online models. We typically think of $U$ as the online vertices that are revealed to a given algorithm one at a time, and $V$ as the offline vertices. When a vertex $u$ from $U$ is revealed, all edges incident on $u$ are also revealed simultaneously. The online models differ in how vertices in $U$ (and the 
order of arrivals) are chosen 
--- adversarially, sampled from a known distribution (with a known type graph) in the IID fashion, or via a limited adversary (as in the priority model where the adversary chooses the graph but not the sequence of arrivals).

Let $M \subseteq E$ be some matching in $G$. For $u \in U$, we write $u \in M$ when there exists $v \in V$ such that $(u,v) \in M$. Similarly, we write $v \in M$ when there exists $u \in U$ such that $(u,v) \in M$. For $u \in M$, we write $M(u)$ to indicate the neighbor of $u$ in $M$. 
We write $\OPT(G)$ to stand for an offline optimum matching in $G$.

Given a deterministic algorithm $\ALG$ in any of the models being considered and a bipartite graph $G$, we let $\ALG(G)$ stand for the matching returned by the algorithm $\ALG$ on input $G$. Abusing notation, we will also use $\ALG(G)$ 
(resp. $\OPT(G)$) to stand for the size of the  matching returned by the algorithm on $G$ (resp. by $\OPT(G)$).

\begin{definition} 

For a deterministic algorithm $\ALG$ and adversarial input, we define its \emph{asymptotic approximation ratio} as
\[ \AR(\ALG) = \lim_{\OPT(G) \rightarrow \infty} \inf_{G} \frac{\ALG(G)}{\OPT(G)}.\]

For a deterministic algorithm and distributional input, its \emph{asymptotic approximation ratio} is
\[ \AR(\ALG) = \lim_{\mathbb{E}[\OPT(G)] \rightarrow \infty} \inf_{G} 
\frac{\mathbb{E}[\ALG(G)]}{\mathbb{E}[\OPT(G)]},\] 
where the expectation is over the input distribution.

For a randomized algorithm and adversarial input,  its \emph{asymptotic approximation ratio} is
\[ \AR(\ALG) = \lim_{\OPT(G) \rightarrow \infty} \inf_{G}
\frac{\mathbb{E}[\ALG(G)]}{\OPT(G)},\]
where the expectation is taken over the randomization 
in the algorithm.

\end{definition}

\begin{definition}
An online (or priority) MBM algorithm is greedy if whenever a newly arriving online node has at least one available neighbor the algorithm matches the arrived node to one of its neighbors.
\end{definition}

\begin{remark}\label{rem:greedywlog}
Any online (or priority) algorithm for MBM achieving approximation ratio $\rho$ (in an adversarial or stochastic input setting) can be turned into a \emph{greedy} algorithm achieving approximation ratio $\ge \rho$. Informally, the idea is a simulation of the non-greedy algorithm in which we  replace any non-greedy decision by forcing a match while remembering the configuration of the non-greedy algorithm.
\end{remark}

In the bipartite graphs $G=(U,V,E)$ that we consider in this paper, we shall often refer to the so-called ``parallel edges.'' Let $U' = \{u'_1, \ldots, u'_k\} \subseteq U$ and $V' = \{v'_1, \ldots, v'_k\} \subseteq V$ be two distinguished subsets such that for all $i \in [k]$ we have $(u'_i, v'_i) \in E$ (there might be other edges incident on $U'$ and $V'$). The parallel edges between $U'$ and $V'$ are precisely the edges $(u_i', v_i')$. 

\subsection{The \textsc{Ranking} Algorithm}
In the adversarial model, graph $G$ as well as an order $\pi$ of its online vertices $U$ is chosen by an adversary. Karp, Vazirani, and Vazirani \cite{KarpVV90} presented an optimal randomized algorithm in the adversarial model called \textsc{Ranking} (see Algorithm~\ref{algorithm:ranking}). They showed that $\AR(\textsc{Ranking}) = 1-1/e$ and that no randomized algorithm can do better in the adversarial model. \textsc{Ranking} works by sampling a permutation $\sigma$ of the offline nodes uniformly at random and running the natural simple greedy algorithm on $G$ breaking ties using $\sigma$. That is, when an online vertex $u$ arrives, if $u$ has many unmatched neighbors $v$ then $u$ is matched with $v$ that minimizes $\sigma(v)$. 

\begin{algorithm}
\caption{The \textsc{Ranking} algorithm.}\label{algorithm:ranking}
\begin{algorithmic}
\Procedure{Ranking}{$G=(U,V,E)$}
\State{Pick a permutation $\sigma : V\rightarrow V$ uniformly at random}
\ForAll{$u \in U$ }
\State{When $u$ arrives, let $N(u)$ be the set of unmatched neighbors of $u$}
\If{$N(u) \neq \emptyset$}
\State{Match $u$ with $\argmin_v \{\sigma(v) \mid v \in N(u)\}$}
\EndIf
\EndFor
\EndProcedure
\end{algorithmic}
\end{algorithm}

Let $\pi$ be a permutation of $U$ in which the nodes are revealed online, and $\sigma$ be a permutation of $V$ chosen by the algorithm. In this case, we write \textsc{Ranking}$(\pi, \sigma)$ to denote the matching returned by \textsc{Ranking} given $\pi$ and $\sigma$. When $\pi$ is clear from the context we will omit it and simply write \textsc{Ranking}$(\sigma)$.

\subsection{Known IID Model with Uniform Probability Distribution}

In the known IID model, graph $G = (U,V,E)$ is called a \emph{type graph}. Nodes in $u$ are sometimes referred to as ``types.'' The type graph specifies the distribution from which the actual \emph{instance graph} $\widehat{G}$ is generated. An instance graph $\widehat{G}=(\widehat{U}, \widehat{V}, \widehat{E})$ is generated by setting $\widehat{V} = V$, sampling each  $\widehat{u} \in \widehat{U}$ IID uniformly from $U$, and letting $(\widehat{u},\widehat{v}) \in \widehat{E}$ if and only if the corresponding $(u,v) \in E$.

Note that each online $\widehat{u}$ is drawn independently  from $U$ with replacement, thus it is possible that the same $u \in U$ appears multiple times in $\widehat{U}$. The type graph is chosen adversarially and is revealed to the algorithm in advance. After that, nodes $\widehat{u}$ are generated one by one on the fly and presented to the algorithm one at a time. An online algorithm makes an irrevocable decision on how to match a newly generated $\widehat{u}$ (if at all) without the knowledge of the future. We shall sometimes refer to $U$ and $\widehat{U}$ as the online side. Note that the known IID model can be defined more generally by allowing an arbitrary distribution $p$ on $U$ from which $\widehat{u}$ is sampled IID. We don't do that since our (negative) results only need the case when $p$ is the uniform distribution. We make one additional assumption that $|\widehat{U}| = |U|$. This is not necessary, but all our examples are in this setting.

For a given type graph $G$ we shall write $\mu_G$ to denote the corresponding distribution defined above. Observe that in this setting, we have $\OPT(G) = \mathbb{E}_{\widehat{G} \sim \mu_G} (M(\widehat{G}))$, and $\ALG(G) = \mathbb{E}_{\widehat{G} \sim \mu_G} (\ALG(\widehat{G}))$.

Algorithms can be taken to be greedy without loss of generality in the known IID model (see Remark~\ref{rem:greedywlog}). In addition, a basic guarantee on the performance of a greedy algorithm in the known IID model holds as long as its tie-breaking rules satisfy natural consistency conditions. The conditions say that if an online node $u$ is matched with $u^*$ in some run of the algorithm and $u^*$ is available in another ``similar'' run of the algorithm, $u$ should still be matched with $u^*$ (``similar'' means that neighbors of $u$ in the second run form a subset of the neighbors of $u$ in the first run). More formally:

\begin{definition}
\label{def:consistency}
Let $\ALG$ be a greedy algorithm. Let $\widehat{G}(\pi)$ denote a type graph with online nodes presented in the order given by $\pi$. Let $N_\pi(u)$ denote the set of neighbors of $u$ that are available to be matched when $u$ is processed by $\ALG$ running on $\widehat{G}(\pi)$. We say that a greedy algorithm is \emph{consistent or that it satisfies consistency conditions} if the following holds: $\forall \pi, \pi', u$ if $N_{\pi'}(u) \subseteq N_{\pi}(u)$ and $u$ is matched with $u^* \in N_{\pi'}(u)$ by $\ALG(\widehat{G}(\pi))$ then $u$ is also matched with $u^*$ by $\ALG(\widehat{G}(\pi'))$.
\end{definition}

Goel and Mehta~\cite{GoelM2008} proved that such greedy algorithms achieve approximation at least $1-1/e$ in the known IID model.\footnote{The Goel and Mehta~\cite{GoelM2008} result is even stronger as it holds for the ROM model, but we do not need the stronger result in our paper.}

\begin{theorem}[\cite{GoelM2008}]
\label{thm:known_iid_greedy_positive}
Every consistent greedy algorithm $\ALG$ in the known IID model satisfies
\[\AR(\ALG) \ge 1-1/e.\]
\end{theorem}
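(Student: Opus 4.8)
The plan is to reduce the known IID statement to the ROM guarantee, or to prove it directly via a factor-revealing / economic-argument analysis in the style of the modern \textsc{Ranking} proofs adapted to the known IID setting. Concretely, I would first recall that in the known IID model with $|\widehat{U}|=|U|=n$, the instance graph $\widehat{G}$ is obtained by drawing $n$ online nodes IID uniformly from the types $U$. The key structural observation is that conditioning on the \emph{multiset} of realized types, the order in which they are presented is a uniformly random permutation of that multiset; thus a consistent greedy algorithm run on the known IID instance behaves, conditionally, exactly like a greedy algorithm run in a random-order arrival model on the realized multigraph. This is precisely where the consistency condition earns its keep: it guarantees that the matching decisions depend on the arrival order only through the "available neighbor" structure, so that the ROM analysis transfers.

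The main steps, in order, would be: (1) Fix the type graph $G$ and let $\widehat{G}$ be the random instance; compute $\OPT(G)=\mathbb{E}_{\widehat{G}\sim\mu_G}[M(\widehat{G})]$ as in the preliminaries, and note that it suffices to lower bound $\mathbb{E}[\ALG(\widehat{G})]$ against $(1-1/e)\,\mathbb{E}[\OPT(\widehat{G})]$ pointwise-in-expectation. (2) For a fixed offline vertex $v\in V$, track the probability $p_t$ that $v$ is still unmatched after the first $t$ online arrivals. Using greediness and consistency, show that whenever an online arrival is a type $u$ adjacent to $v$ and no "earlier-preferred" neighbor grabbed the arriving copy, $v$ gets matched; this yields a recursion/inequality for $p_t$ of the form $p_{t+1}\le p_t\cdot(1-\tfrac{c_v}{n})$ for an appropriate "effective degree" $c_v$, aggregated over $v$. (3) Sum the resulting bound over all $v\in V$ and over the $n$ steps, and compare with the LP/fractional optimum (or with $\mathbb{E}[\OPT]$ directly), extracting the constant $1-(1-1/n)^n \to 1-1/e$ from the standard product bound $\prod (1-1/n) \approx e^{-1}$. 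A clean way to package step (3) is the primal-dual / economic-argument framework: assign each matched edge a revenue split between its online and offline endpoints and show the split is feasible for the matching LP dual up to a factor $1-1/e$ in expectation.

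The hard part will be step (2): making rigorous the claim that the arrival order within a realized multiset is uniform and, simultaneously, that consistency lets us ignore the correlations introduced by repeated types (the same $u\in U$ may arrive multiple times, and an algorithm's decision on a later copy of $u$ depends on what the earlier copies did). One must be careful that the "random permutation of the multiset" reduction does not secretly assume the arrivals are distinct. I expect to handle this by conditioning on the realized count vector $(n_u)_{u\in U}$, observing that conditionally the sequence is a uniformly random arrangement of these counts, invoking Theorem~\ref{thm:known_iid_greedy_positive}'s hypothesis that $\ALG$ is consistent to argue that the per-step matching probabilities are governed by the same recursion as in the ROM analysis of \cite{GoelM2008}, and then taking expectations over the count vector. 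Since the excerpt explicitly grants that the Goel--Mehta result holds in the stronger ROM model, an alternative and shorter route is simply to argue that a consistent greedy algorithm's behavior in the known IID model is dominated (in the appropriate stochastic sense) by its behavior in ROM on the realized graph, and cite \cite{GoelM2008} as a black box; I would present the direct recursion as the backbone and fall back on this reduction if the conditioning argument becomes too delicate.
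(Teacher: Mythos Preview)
The paper does not supply its own proof of this theorem; it is quoted as a result of Goel and Mehta~\cite{GoelM2008}, and the accompanying footnote explicitly says that their result is in fact proved in the stronger ROM model, with the known IID statement being the special case the present paper needs. So there is no ``paper's proof'' to compare against here.

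That said, your proposal is on the right track, and your ``alternative and shorter route'' is exactly the intended derivation: condition on the realized multiset of types, observe that under IID uniform sampling the arrival order is then a uniformly random permutation of this multiset, and invoke the ROM guarantee of~\cite{GoelM2008} on the realized instance graph $\widehat{G}$. Taking expectations over the multiset yields $\mathbb{E}[\ALG(\widehat{G})]\ge (1-1/e)\,\mathbb{E}[\OPT(\widehat{G})]$ directly. This is a one-paragraph reduction, and it is what the footnote is pointing at; you should lead with it rather than treat it as a fallback. Your longer ``direct'' route via a per-vertex recursion $p_{t+1}\le p_t(1-c_v/n)$ and a primal--dual revenue split is viable in spirit---it is essentially how one would reprove the ROM bound from scratch---but as written step~(2) is vague about what $c_v$ is and how consistency alone pins down the recursion when the same type arrives multiple times; getting that rigorous would amount to redoing the Goel--Mehta ROM analysis, which is unnecessary given that the paper is happy to cite it.
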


\subsection{Priority Model}

In the priority model, an input instance consists of several data items. For the MBM problem, a data item is a pair $(u, N(u))$, where $u$ is the name of an online vertex and $N(u)$ is the set of names of its offline neighbors. Let $\mathcal{I}$ be the infinite universal set of all data items. An input instance is then a finite subset $S = \{(u_i, N(u_i))\} \subseteq \mathcal{I}$ that defines the bipartite graph $G = (U,V,E)$ in a natural way: $U = \{u_i\}, V = \cup_i N(u_i),$ and $(u,v) \in E$ if and only if there exists $j$ such that $u = u_j$ and $v \in N(u_j)$. A \emph{fixed priority} algorithm a priori chooses a permutation $\tau : \mathcal{I} \rightarrow \mathcal{I}$ of the entire universe of data items. The adversary then chooses an instance $S \subseteq \mathcal{I}$ and presents it to the algorithm one data item at a time in the order of $\tau$. The algorithm is required to make an irrevocable decision on how to match (if at all) the newly arriving online node $u$. An \emph{adaptive priority} algorithm can change the permutation $\tau$ of the universe of data items after each arriving node $u$. This gives us two models of deterministic priority algorithms. There are two natural ways of introducing randomness in this model: (1) allow the algorithm to use randomness in selecting $\tau$, (2) allow the algorithm to use randomness for making decisions on how to match the newly arriving node $u$. This leads to 8 different possible priority models. 
In this paper, we shall be interested in an adaptive priority randomized algorithm called \textsc{MinRanking} (see Section~\ref{sec:priority}). 
As defined, \textsc{MinRanking} is a fully randomized priority algorithm, (i.e. in terms of both the ordering and decisions being randomized) but it can also be modified to be a randomized priority algorithm in the sense that 
its irrevocable decisions are randomized (as done in the \textsc{Ranking} algorithm) whereas the 
input sequence is 
chosen deterministically.

\section{Deterministic Multipass Online Algorithms}
\label{sec:multipass}
A simple greedy algorithm achieves approximation ratio $1/2$ in the online adversarial model since it computes a maximal matching. Moreover, no deterministic online algorithm can achieve an approximation guarantee better than $1/2$ in the online adversarial model. To break through the $1/2$ barrier either the input or algorithmic model needs to be changed. D\"urr \etal.  \cite{Durr2016} modify the algorithmic model to allow for a second pass over the input. They give a deterministic 2-pass algorithm, called \textsc{Category-Advice}, that achieves a $3/5$ approximation ratio. The \textsc{Category-Advice} algorithm belongs to the class of \textsc{Ranking}-based algorithms called category algorithms that were introduced in the work of D\"urr \etal.

A category algorithm considers a permutation $\sigma$ of the offline nodes. Instead of running \textsc{Ranking} directly with $\sigma$, a category function $c: V \rightarrow \mathbb{Z}\cup \{\pm \infty\}$ is computed first. The updated permutation $\sigma_c$ is the unique permutation satisfying the following defining property: for all $v_1, v_2 \in V$, we have $\sigma_c(v_1) < \sigma_c(v_2)$ if and only if $c(v_1) < c(v_2)$ or ($c(v_1) = c(v_2)$ and $\sigma(v_1) < \sigma(v_2)$). Then \textsc{Ranking} is performed with $\sigma_c$ as the permutation of the offline nodes. In other words, a category algorithm partitions the offline nodes into $|\Ima(c)|$ categories and specifies the ranking of the categories, the ranking within the category is induced by the initial permutation $\sigma$.

The \textsc{Category-Advice} algorithm starts with an arbitrary permutation $\sigma$, e.g., $\sigma$ could be induced by the names of nodes $V$. In the first pass, the algorithm runs \textsc{Ranking} with $\sigma$. Let $M$ be the matching obtained in the first pass. The category function $c: V \rightarrow [2]$ is then defined as follows: $c(v) = 1$ if $v \not\in M$ and $c(v) = 2$ otherwise. In the second pass, the \textsc{Category-Advice} algorithm runs \textsc{Ranking} with $\sigma_c$. The output of the second run of \textsc{Ranking} is declared as the output of the \textsc{Category-Advice} algorithm. In other words, in the second pass the algorithm gives preference to those vertices that were {\bf not} matched in the first pass. 
(We observe that the Besser-Poloczek graphs, as depicted in 
Figure \ref{figure:besser-poloczek}, show that the category function $c(v) = $ 
degree of $v$ will not yield an asymptotic approximation better than $1/2$.)  

In this section, we present a natural extension of the \textsc{Category-Advice} algorithm to multiple passes, called \textsc{$k$-pass Category-Advice} (see Algorithm~\ref{algo:ext-cat-adv}). Let $F_n$ denote the $n$th Fibonacci number, i.e., $F_1 = F_2 = 1$, and $F_n = F_{n-1} + F_{n-2}$ for $n \ge 3$. For each $k \ge 1$, we prove that the \textsc{$k$-pass Category-Advice} algorithm\footnote{A notable feature of this multi-pass algorithm is that after pass $i$, the algorithm can deterministically commit to matching a subset of size $\frac{F_{2i}}{F_{2i+1}}|M|$ where $M$ is a maximum matching. This will follow from
Lemma \ref{lem:ranking-mono}.} achieves the approximation ratio $F_{2k}/F_{2k+1}$. Moreover, we show that this is tight, i.e., there exists a family of bipartite graphs, one for each $k \ge 1$, such that the \textsc{$k$-pass Category-Advice} algorithm computes a matching that is $F_{2k}/F_{2k+1}$ times the size of the maximum matching. In particular, we show that the analysis of the $3/5$ approximation ratio in \cite{Durr2016} is tight. 
It immediately follows from our results that as $k$ goes to infinity, the approximation guarantee of the \textsc{$k$-pass Category-Advice} algorithm tends to the inverse of the golden ratio $2/(1+\sqrt{5}) \approx 0.618$. The convergence is extremely fast -- the approximation ratio guaranteed by the $5$-pass algorithm is already within $0.01\%$ of the inverse of the golden ratio. The following main theorem of this section follows immediately from Lemmas~\ref{lem:k-pass-main-pos} and \ref{lem:k-pass-main-neg}.

\begin{theorem}
The exact approximation ratio of the  \textsc{$k$-pass Category-Advice} algorithm is $F_{2k}/F_{2k+1}$, where $F_n$ is the $n$th Fibonacci number. Thus, the approximation ratio of the \textsc{$k$-pass Category-Advice} algorithms tends to the inverse of the golden ration $2/(1+\sqrt{5}) \approx 0.618$ as $k$ goes to infinity. This holds even when $k$ is allowed to depend on $n$ arbitrarily. For constant $k$, the algorithm can be implemented as a semi-streaming algorithm using $O(n)$ space.\footnote{To the best of our knowledge, published semi-streaming algorithms for bipartite matching use space ${\tilde \Omega}(n$) where the ${\tilde \Omega}$ hides polylogarithmic factors.}
\end{theorem}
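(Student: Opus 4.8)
The theorem bundles an exact approximation ratio, its limit, a robustness statement, and an implementation remark; I would obtain all of it by combining the two one-sided bounds (Lemmas~\ref{lem:k-pass-main-pos} and~\ref{lem:k-pass-main-neg}) and then reading the remaining claims off the closed form $F_{2k}/F_{2k+1}$. For the \emph{exact} value it suffices to prove (i) the lower bound $\ALG(G)\ge\frac{F_{2k}}{F_{2k+1}}\OPT(G)$ for every bipartite $G$ (positive lemma) and (ii) a family of graphs with $\OPT\to\infty$ on which equality is approached (negative lemma). For (i), fix a maximum matching $M^*$ of $G$ and let $\rho_i$ denote the worst-case fraction of $|M^*|$ present in the matching $M_i$ after pass $i$; pass $0$ produces the empty matching, so $\rho_0=0$. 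Each pass runs \textsc{Ranking}, so $M_i$ is maximal, and by Lemma~\ref{lem:ranking-mono} the matched offline vertices only accumulate across passes, so $|M_i|$ is non-decreasing. The inductive step is to show that putting the currently unmatched offline vertices at the top of the ranking in pass $i+1$ forces enough alternating-path surgery on $M^*\triangle M_i$ to raise the recovered fraction to at least $f(\rho_{i-1})$, where $f(x)=\frac{1+x}{2+x}$; unrolling from $\rho_0=0$ gives $\rho_i\ge f^{(i)}(0)$, and a short induction using $F_{2i+1}+F_{2i}=F_{2i+2}$ and $F_{2i+1}+F_{2i+2}=F_{2i+3}$ shows $f^{(i)}(0)=\frac{F_{2i}}{F_{2i+1}}$. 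Part (ii) exhibits, for each $k$, a graph (parametrized so that $\OPT\to\infty$) on which pass $i$ recovers exactly an $f(\rho_{i-1})$ fraction, so that after $k$ passes the matching is exactly $\frac{F_{2k}}{F_{2k+1}}\OPT$; this also shows the D\"urr et al.\ analysis of the $k=2$ case is tight.

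Given the two lemmas, the rest is arithmetic. Writing $\varphi=\frac{1+\sqrt5}{2}$, the map $f(x)=\frac{1+x}{2+x}$ has $f'(x)=\frac{1}{(2+x)^2}\in(0,\tfrac14)$ on $[0,\infty)$, so $f$ is an increasing contraction whose unique fixed point is the positive root of $x^2+x-1=0$, namely $\frac1\varphi=\frac{2}{1+\sqrt5}$; hence $\frac{F_{2k}}{F_{2k+1}}=f^{(k)}(0)$ increases to $\frac{2}{1+\sqrt5}$. Binet's formula gives $\frac{F_{2k}}{F_{2k+1}}=\frac1\varphi\cdot\frac{1-\varphi^{-4k}}{1+\varphi^{-4k-2}}$, so the gap to the limit shrinks like $\varphi^{-4k}$, which is why $F_{10}/F_{11}=55/89$ already differs from $\frac2{1+\sqrt5}$ by under $0.01\%$. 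For ``$k$ allowed to depend on $n$'': the positive lemma is a per-instance inequality for the number of passes actually executed, so it applies verbatim with $k$ replaced by any $k(n)$ with $n=|V(G)|$; and since the negative-lemma family for each fixed pass count $j$ can be taken with instances of arbitrarily large size (adjusting the free parameters of the construction), the worst ratio over size-$n$ instances is exactly $\frac{F_{2k(n)}}{F_{2k(n)+1}}$, so the asymptotic ratio equals $\lim_n \frac{F_{2k(n)}}{F_{2k(n)+1}}$ — in particular $\frac2{1+\sqrt5}$ whenever $k(n)\to\infty$, and never more. The semi-streaming claim is then immediate: one pass is a single scan over the vertex stream maintaining only the current matching and the previous pass's matching (the category function is just the indicator of ``matched in the previous pass''), i.e.\ $O(n)$ words, so $k=O(1)$ passes give an $O(n)$-space semi-streaming algorithm in the vertex-arrival model.

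I expect the main obstacle to be the negative lemma: building, for every $k$, an explicit (recursively constructed) tight family and verifying that \textsc{Category-Advice}'s behavior on it realizes the recursion $\rho_i=f(\rho_{i-1})$ exactly at every pass. Choosing the initial permutation and the ``regeneration'' gadgets so that pass $i+1$ fixes precisely the golden-ratio share of the remaining gap — no more — is the delicate part, and it is what upgrades the $k=2$ argument of D\"urr et al.\ to a tight bound for all $k$. A secondary difficulty lies in the positive lemma's inductive step: the naive invariant ``after pass $i$ every augmenting path in $M^*\triangle M_i$ has length $\ge 2i+1$'' is already false for $i=2$ (it would yield $2/3$, not the correct $3/5$), so the induction must be run from a more refined statement about the alternating structure of $M^*\triangle M_i$, strong enough to both imply $\rho_i\ge f(\rho_{i-1})$ and be reestablished after one more pass.
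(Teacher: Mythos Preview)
Your outline is structurally right --- the theorem is indeed just the conjunction of the positive and negative lemmas plus some arithmetic --- and your treatment of the limit, the convergence rate, and the semi-streaming remark is fine. But your plan for the positive lemma has a real gap, and your argument for ``$k$ depending on $n$'' is not quite the right one.

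\textbf{Positive lemma.} You propose a per-instance recursion $\rho_{i+1}\ge f(\rho_i)$ with $f(x)=\frac{1+x}{2+x}$, and you already notice that the obvious invariant fails and that some unspecified ``more refined statement'' is needed. The paper does \emph{not} try to prove a per-pass recursion on a fixed instance. Instead it runs the induction on $k$ at the level of worst-case ratios, via the following structural observation: in the $(k{+}1)$-pass algorithm, after the first pass partition $U=U_1\dot\cup U_2$ and $V=V_1\dot\cup V_2$ into matched/unmatched. There are no $U_2$--$V_2$ edges, and since the $V_2$ vertices are now ranked above all of $V_1$, the next $k$ passes restricted to the induced subgraph on $U_1\cup V_2$ are \emph{literally} a run of the $k$-pass algorithm on that subgraph. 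The induction hypothesis then gives $|M_{12}|\ge \frac{F_{2k}}{F_{2k+1}}|V_2|$, and combining this with $|M_{11}|=|U_1|-|M_{12}|$ and a maximality bound $|M_{21}|\ge\tfrac12(|U_2|-|U_1|+|M_{12}|)$ yields
\[
|M_{k+1}|\ \ge\ \max\Bigl\{\,|M_1|,\ \tfrac12\bigl(n+\tfrac{F_{2k}}{F_{2k+1}}(n-|M_1|)\bigr)\Bigr\},
\]
which is minimized exactly at $\frac{1+\alpha}{2+\alpha}\,n=f(\alpha)\,n$ with $\alpha=\frac{F_{2k}}{F_{2k+1}}$. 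So the recursion $f$ you identified is correct, but it relates worst-case ratios for $k$ and $k{+}1$ passes, not successive passes on a single instance; the reduction to a subgraph is the missing idea.

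\textbf{``$k$ depending on $n$''.} Your padding argument runs into a circularity: the hard graph for $j$ passes has size on the order of $F_{2j+1}$, so if $k(n)$ grows faster than logarithmically you cannot fit $G_{k(n)}$ inside a size-$n$ instance. The paper handles this by proving an extra property of the recursive family $G_k$: for every $k'>k$, the $k'$-pass algorithm on $G_k$ matches only $F_{2k}+1$ vertices (one more than the $k$-pass algorithm). Thus for any $n$ you take $G_k$ with $F_{2k+1}\approx n$ and conclude that the $k(n)$-pass algorithm --- however large $k(n)$ is --- achieves only $(F_{2k}+1)/F_{2k+1}\to 1/\varphi$ on it. This ``stability under extra passes'' is what you need to state and prove alongside the basic tightness.

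One small correction: the category function is not merely ``matched in the previous pass''; it records, for each offline vertex, the \emph{first} pass in which it was matched. This does not affect the $O(n)$-space claim for constant $k$, but it matters for the positive argument (the ordering within $V_1$ in later passes is governed by this history).
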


\subsection{Positive Result}

The 2-pass algorithm in \cite{Durr2016} is called \textsc{Category-Advice}, so we refer to our generalization as \textsc{$k$-pass Category-Advice}. The pseudocode appears in Algorithm~\ref{algo:ext-cat-adv}. The algorithm is defined iteratively with each iteration corresponding to a new pass. The algorithm initializes $\sigma$ of the offline nodes to the identity permutation. 
The algorithm maintains a category function $c: V \rightarrow \mathbb{Z}\cup\{\pm \infty\}$. Initially, $c$ is set to $-\infty$ everywhere. In the $i$th pass, the algorithm runs \textsc{Ranking} on $\sigma_c$. Let $M_i$ be the resulting matching. For each $v \in V$, if $c(v) = -\infty$ and $v \in M_i$ then $c(v)$ is updated to $-i$. The algorithm uses the updated $c$ in the next pass. In words, $c$ records for each vertex $v$ the (negative of the) first pass, in which $v$ was matched. In the subsequent pass, the algorithm gives preference to the nodes that were unmatched, followed by nodes that were matched for the first time in the latest round, etc.

\begin{algorithm}[!h]
\caption{The \textsc{$k$-pass Category-Advice} algorithm.}\label{algo:ext-cat-adv}
\begin{algorithmic}
\Procedure{$k$-pass Category Advice}{}
\State{$\sigma \leftarrow $ the identity permutation of $V$}
\State{initialize array $c$ of size $V$ with $-\infty$}
\For{$i$ from $1$ to $k$}
\State{Perform the $i$th pass: $M_i \leftarrow $\textsc{Ranking}$(\sigma_c)$}
\For{$v \in V$}
\If{$c(v) = -\infty$ and $v \in M_i$}
\State $c(v) \leftarrow -i$
\EndIf
\EndFor
\EndFor
\Return{$M_k$}
\EndProcedure
\end{algorithmic}
\end{algorithm}

Before we prove the main result of this subsection, we recall several useful results regarding the analysis of the \textsc{Ranking} algorithm.

\begin{lemma}[Lemma 2 in Birnbaum and Mathieu~\cite{BirnbaumM2008}]
\label{lem:perfect-match-assump}
Let $G=(U,V,E)$ be a given bipartite graph. Let $\pi$ be a permutation of the online side, and $\sigma$ be a permutation of the offline side. Let $x \in U \cup V$, and set $H = G \setminus x$. Let $\pi_H$ and $\sigma_H$ be the permutations of $U_H$ and $V_H$ induced by $\pi$ and $\sigma$. If the matchings \textsc{Ranking}$(\pi_H, \sigma_H)$ and \textsc{Ranking}$(\pi, \sigma)$ are not identical, then they differ by a single alternating path starting at $x$.
\end{lemma}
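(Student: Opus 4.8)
The plan is to reconstruct the standard ``parallel--execution'' argument: run \textsc{Ranking} on $G$ and on $H=G\setminus x$ side by side, presenting the online vertices to both runs in the common order $\pi$ (and simply skipping the step where $x$ would be processed, in the $H$-run, when $x\in U$), and to maintain, after each online vertex has been handled, an invariant that pins down how the two partial matchings can differ. Writing $M^{(t)},M'^{(t)}$ for the partial matchings of the two runs after $t$ online vertices have been processed, the invariant I would carry is: there is a ``marked'' vertex $z_t$ (with $z_0=x$) such that $M^{(t)}\triangle M'^{(t)}$ is an alternating path from $x$ to $z_t$ (empty when $z_t=x$), whose edge at $x$, if any, lies in $M^{(t)}$ (this is automatic since $x\notin H$, so $x$ cannot be incident to an edge of $M'^{(t)}$); and, crucially, apart from this path the two runs are in \emph{identical} states --- every offline vertex has the same matched/unmatched status (and, if matched, to the same online vertex) in both runs, and the only offline vertex whose availability can differ between the runs is the marked endpoint $z_t$.

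I would first dispatch the degenerate case: if $x$ is left unmatched by $\textsc{Ranking}(\pi,\sigma)$ (which for $x\in U$ is the only relevant possibility, and for $x\in V$ means no online vertex ever selected $x$ as its $\sigma$-minimum available neighbor), then deleting $x$ affects no decision of the algorithm, the two matchings coincide, and the hypothesis of the lemma is not met. Otherwise the base case holds with the empty path and $z_0=x$. For the inductive step, when the next online vertex $u$ arrives, the invariant forces its set of available neighbors to agree between the two runs except possibly at $z_t$; hence if $u$ makes the same choice in both runs (in particular if $z_t$ is irrelevant to $u$) the path is unchanged. If $u$ diverges, this can only be because $z_t$ is a neighbor of $u$ and is the $\sigma$-minimum available neighbor of $u$ in exactly the one run in which $z_t$ is free; in that run $u$ is matched to $z_t$, appending the edge $\{z_t,u\}$ to the path, while in the other run $u$ is matched either to nothing --- freezing the path with new endpoint $u$ --- or to its true $\sigma$-minimum available neighbor $v'$, which one checks is fresh (not an internal vertex of the path, since those are matched in both runs and hence unavailable, and not $x$, which is matched in the $G$-run under the non-degenerate case), so the path grows by the two edges $\{z_t,u\},\{u,v'\}$ with new marked vertex $v'$. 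In every case the invariant is restored; after the last online vertex has been processed, the surviving path is exactly the asserted single alternating path starting at $x$.

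The main obstacle is making the inductive case analysis airtight, and for that the ``apart from the path, the two runs are in identical states'' clause of the invariant is doing all the work: it is what guarantees that a behavioral divergence of $u$ can be localized to the single endpoint $z_t$, and it is what lets one conclude that the vertex $v'$ produced in the two-edge extension is genuinely new. A secondary point requiring care is the symmetric handling of the two orientations of the discrepancy --- $z_t$ free in the $G$-run versus free in the $H$-run, which merely swap the roles of $M^{(t)}$ and $M'^{(t)}$ but are otherwise identical --- together with the minor bookkeeping differences between $x\in U$ and $x\in V$ in the base case and in identifying which run owns the edge of the path at $x$.
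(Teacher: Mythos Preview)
The paper does not supply its own proof of this lemma; it is quoted verbatim as Lemma~2 of Birnbaum and Mathieu and used as a black box. Your proposal correctly reconstructs the standard parallel-execution argument that appears in that reference: run \textsc{Ranking} on $G$ and on $H$ in lockstep under the common order $\pi$, and maintain the invariant that the symmetric difference of the two partial matchings is a single alternating path whose non-$x$ endpoint is the unique offline vertex whose availability differs between the two runs. Your case analysis of the inductive step (same choice; divergence extending the path by one edge and terminating; divergence extending by two edges with a fresh endpoint) is the standard one, and your check that the new endpoint $v'$ is genuinely fresh is exactly the point that needs care. One small remark: your justification ``not $x$, which is matched in the $G$-run under the non-degenerate case'' is slightly off as stated, since at the moment $u$ is processed $x$ need not yet be matched in the $G$-run; the cleaner reason $v'\neq x$ is simply that $v'$ is selected as an available neighbor in the $H$-run and $x\notin H$ (when $x\in V$), or that $v'\in V$ while $x\in U$. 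With that tweak, your argument is the Birnbaum--Mathieu proof.
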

Let $M$ be a maximum matching in $G$. Choose $x \not\in M$ and delete it. The above lemma implies that the size of the matching found by \textsc{Ranking} on $H$ is at most the size of the matching found by \textsc{Ranking} on $G$, while the size of a maximum matching hasn't changed. We can repeat this procedure until the resulting graph has a perfect matching $M$. This means that the worst-case approximation ratio of \textsc{Ranking} is achieved on bipartite graphs with a perfect matching. Since the \textsc{$k$-pass Category-Advice} algorithm is a \textsc{Ranking}-based algorithm, the same conclusion holds for \textsc{$k$-pass Category-Advice}. Hence, from now on we assume that the input graph $G=(U,V,E)$ is such that $|U|=|V|=n$ and $G$ has a perfect matching. Before proving the main result of this subsection, we need one more lemma regarding \textsc{Ranking}.

\begin{lemma}[Implicit in the proof of Lemma 4 in Birnbaum and Mathieu~\cite{BirnbaumM2008}]\label{lem:ranking-mono}
Let $\sigma_1$ be a permutation of the offline vertices and let $M_1 = $\textsc{Ranking}$(\sigma_1)$. Let $v$ be an offline vertex such that $v \not\in M_1$. Let $\sigma_2$ be a permutation of the offline vertices obtained from $\sigma_1$ by improving the rank of $v$. Let $M_2 = $\textsc{Ranking}$(\sigma_2)$. Then for every online node $u$ if $u \in M_1$ then $u \in M_2$ and $\sigma_2(M_2(u)) \le \sigma_2(M_1(u))$.
\end{lemma}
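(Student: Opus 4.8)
The plan is to run the two executions $\textsc{Ranking}(\sigma_1)$ and $\textsc{Ranking}(\sigma_2)$ in lockstep over the common (fixed) online arrival order $u_1,\dots,u_n$ and to maintain an invariant on the sets of still-available offline vertices. Write $A_i(t)$ for the set of offline vertices unmatched in run $i$ after $u_1,\dots,u_t$ have been processed, and $M_i(u_s)$ for the partner assigned to $u_s$ in run $i$ (undefined if $u_s$ is left unmatched). The two structural facts I will lean on throughout are: (i) $\sigma_1$ and $\sigma_2$ induce the same linear order on $V\setminus\{v\}$, so any $\argmin$ taken over a set not containing $v$ is the same under either permutation; and (ii) since $v\notin M_1$, the vertex $v$ is available in run $1$ at every step, so whenever an online neighbour of $v$ arrives and $v$ is still free, run $1$ greedily matches it to a vertex that is $\sigma_1$-better than $v$, hence never to $v$, whereas run $2$ may take $v$.

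I would prove by induction on $t$ (the base case $t=0$ being trivial, $A_1(0)=A_2(0)=V$) that after step $t$ exactly one of the following holds. (A) $A_1(t)=A_2(t)$ and $M_1(u_s)=M_2(u_s)$ for every $s\le t$. (B) There is an offline vertex $z$ with $\sigma_2(z)>\sigma_2(v)$ (or, as an absorbing degenerate case, a formal symbol $z$ of rank $+\infty$) such that $A_2(t)=\bigl(A_1(t)\setminus\{v\}\bigr)\cup\{z\}$ --- so $v$ is matched in run $2$ while, when $z$ is a genuine vertex, $z$ is matched in run $1$ but free in run $2$ --- and moreover every $u_s$ with $s\le t$ that is matched in run $1$ is also matched in run $2$ with $\sigma_2(M_2(u_s))\le\sigma_2(M_1(u_s))$. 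Since this last clause is precisely the assertion of the lemma at $t=n$, and case (A) at $t=n$ gives $M_1=M_2$, establishing the invariant finishes the proof.

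The inductive step is a case analysis on whether $v$ and (in case (B)) $z$ lie in $N(u_t)$. By fact (i), as long as the arriving vertex sees the same available neighbours in both runs --- which happens precisely when neither $v$ nor $z$ is an available neighbour of $u_t$ --- the two runs make the same decision and the current case is preserved. Otherwise run $1$ matches $u_t$ to $a=\argmin_{\sigma_1}\bigl(N(u_t)\cap A_1(t-1)\bigr)$ and run $2$ to $b=\argmin_{\sigma_2}\bigl(N(u_t)\cap A_2(t-1)\bigr)$, and one checks that either $a=b$ (in case (B) this may consume $z$ and send us to the degenerate subcase $A_2(t)=A_1(t)\setminus\{v\}$), or $b\in\{v,z\}$ with $\sigma_2(b)<\sigma_2(a)$; in the latter situation we are in case (B) with the defect vertex reset to $a$, the vertex run $1$ just matched $u_t$ to, whose $\sigma_2$-rank exceeds $\sigma_2(v)$ precisely because $v$ has jumped over it in passing from $\sigma_1$ to $\sigma_2$. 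In every subcase the identity $A_2(t)=\bigl(A_1(t)\setminus\{v\}\bigr)\cup\{z'\}$ is verified by accounting for the single vertex each run deleted, and the inequality $\sigma_2(M_2(u_t))\le\sigma_2(M_1(u_t))$ holds by construction (with equality whenever $a=b$); note that whenever run $1$ matches $u_t$ so does run $2$, since the vertex run $1$ picks is never $v$ and hence lies in $A_2(t-1)$.

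The main obstacle is bookkeeping: one must organize the subcases so that the ``single defect'' stays a single defect, which amounts to correctly identifying the new defect vertex $z'$ after a divergence and recognizing the absorbing case in which $A_2(t)$ is simply $A_1(t)\setminus\{v\}$. It is worth noting that one cannot shortcut this by decomposing the rank improvement of $v$ into adjacent transpositions, because the hypothesis $v\notin M_1$ need not be preserved under a single step (a single improvement can already pull $v$ into the matching), so the coupling has to be carried out in one shot.
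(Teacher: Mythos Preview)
Your argument is correct and is precisely the single-defect coupling that underlies the Birnbaum--Mathieu alternating-path lemma. Note that the paper does not give its own proof of this statement: it simply records the lemma and attributes it to \cite{BirnbaumM2008}, so there is no proof in the paper to compare against. What you have written is the standard argument, and your closing remark that one cannot reduce to adjacent transpositions (because the hypothesis $v\notin M_1$ can be destroyed after a single swap) is a nice observation.

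One small bookkeeping correction: the parenthetical ``(in case (B) this may consume $z$ and send us to the degenerate subcase $A_2(t)=A_1(t)\setminus\{v\}$)'' is attached to the wrong branch. Since $z\notin A_1(t-1)$ you always have $a\neq z$, so when $a=b$ both runs delete the same vertex and the defect $z$ persists unchanged. The absorbing subcase is actually entered when $N(u_t)\cap A_1(t-1)=\emptyset$ (so $a$ is undefined and run~1 leaves $u_t$ unmatched) while $z\in N(u_t)$, so that run~2 matches $u_t$ to $z$; this should be listed as its own branch rather than folded under $a=b$. It causes no trouble for the invariant, since $u_t\notin M_1$ means there is no inequality to verify at that step. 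Similarly, your justification that the new defect $a$ satisfies $\sigma_2(a)>\sigma_2(v)$ ``because $v$ has jumped over it'' is literally the reason only at the first divergence; at later divergences the reason is the chain $\sigma_2(a)>\sigma_2(z)>\sigma_2(v)$, using the invariant on the old defect $z$.
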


This lemma immediately implies the following:
\begin{lemma}\label{lem:k-pass-mono}
Let $M_k$ be the matching obtained by running the \textsc{$k$-pass Category-Advice} algorithm on some instance. Let $M_{k'}$ be the matching obtained by running the \textsc{$k'$-pass Category-Advice} algorithm on the same instance. If $k \le k'$ then $|M_k| \le |M_{k'}|$.
\end{lemma}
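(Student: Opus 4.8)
The plan is to reduce the statement to a one-pass monotonicity fact and then derive that fact by iterating Lemma~\ref{lem:ranking-mono}. First I would observe that the \textsc{$k'$-pass Category-Advice} algorithm, during its first $k$ passes, does literally the same thing as the \textsc{$k$-pass Category-Advice} algorithm: in Algorithm~\ref{algo:ext-cat-adv} the only difference between the two is the upper limit of the outer \textbf{for} loop, which does not influence iterations $1,\dots,k$. A one-line induction on the pass index then shows that both runs compute the same permutations, the same matchings $M_1,\dots,M_k$, and the same category array through pass $k$. So, writing $M_1,M_2,\dots,M_{k'}$ for the sequence of matchings in a single run of the \textsc{$k'$-pass} algorithm, $M_k$ is exactly the output of the \textsc{$k$-pass} algorithm while $M_{k'}$ is the output of the \textsc{$k'$-pass} algorithm, and it suffices to prove $|M_j|\le|M_{j+1}|$ for every $j$.

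Fix $j$ and let $c$, $c'$ be the category arrays used in passes $j$ and $j+1$, so $M_j=\textsc{Ranking}(\sigma_c)$ and $M_{j+1}=\textsc{Ranking}(\sigma_{c'})$; by construction $c'$ differs from $c$ only in that the vertices of $M_j$ that still had category $-\infty$ are moved to category $-j$. Unwinding the definition of $\sigma_c$, I would check that $\sigma_{c'}$ is precisely $\sigma_c$ with the set $T_j=\{\,v\in V : c'(v)=-\infty\,\}$ promoted to the very top, keeping the relative order inside $T_j$ (the $\sigma$-order in either case) and the relative order of all other vertices (also unchanged): in $\sigma_c$ the top block consists of $T_j$ together with the vertices just matched in pass $j$, interleaved by $\sigma$, and pulling $T_j$ in front of those yields exactly $\sigma_{c'}$. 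The point is that every vertex of $T_j$ has never been matched through pass $j$, so $T_j\subseteq V\setminus M_j$: the promoted vertices are all unmatched in $M_j$. It therefore suffices to prove the following one-pass claim: if $\sigma_1$ is any ordering of $V$, $M_1=\textsc{Ranking}(\sigma_1)$, $R\subseteq V\setminus M_1$, and $\sigma_2$ is obtained from $\sigma_1$ by moving $R$ to the top while preserving all relative orders, then $|\textsc{Ranking}(\sigma_2)|\ge|M_1|$.

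I would prove this claim by processing the vertices of $R$ one at a time in increasing $\sigma_1$-order $r_1,\dots,r_t$; at step $i$ I improve the rank of $r_i$ so that it lands immediately below $r_1,\dots,r_{i-1}$ (already at the top), so that after step $t$ the ordering is $\sigma_2$. Each step is a single rank improvement of a currently unmatched offline vertex, so Lemma~\ref{lem:ranking-mono} applies and gives $|M^{(i-1)}|\le|M^{(i)}|$ --- in fact every matched online vertex stays matched --- and chaining over $i$ gives $|M_1|\le|\textsc{Ranking}(\sigma_2)|$. What makes the induction go through is the invariant that $r_i,r_{i+1},\dots,r_t$ are all unmatched in $M^{(i-1)}$: it holds at the start because $R\subseteq V\setminus M_1$, and it is preserved because a single rank improvement changes the \textsc{Ranking} matching along at most one alternating path starting at the promoted vertex --- the structural fact underlying Lemmas~\ref{lem:perfect-match-assump} and~\ref{lem:ranking-mono} --- so the only offline vertex whose status can flip from unmatched to matched at step $i$ is $r_i$ itself.

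The one genuinely delicate point is exactly this invariant: Lemma~\ref{lem:ranking-mono} may only be invoked for a vertex that is unmatched in the \emph{current} matching, and a priori a rank improvement could match another vertex of $R$ and derail the induction; ruling this out is where the single-alternating-path description of how \textsc{Ranking} reacts to promoting one vertex is needed, together with the observation that on such a path the promoted vertex is the unique one passing from unmatched to matched. Everything else --- the reduction to a single pass, the bookkeeping identifying $\sigma_{c'}$ as ``$\sigma_c$ with $T_j$ promoted'', and the telescoping of the size inequalities --- is routine.
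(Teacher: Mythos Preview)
Your proposal is correct and follows the same route as the paper, which simply writes ``Simple induction'' and defers everything to Lemma~\ref{lem:ranking-mono}. You have carefully spelled out the details the paper omits---in particular the invariant that each $r_i$ is still unmatched when it is promoted, which indeed requires the single-alternating-path structure behind Lemmas~\ref{lem:perfect-match-assump} and~\ref{lem:ranking-mono} rather than the bare statement of Lemma~\ref{lem:ranking-mono} alone.
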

\begin{proof}
Simple induction.
\end{proof}

We are now ready to prove the main result of this subsection.

\begin{lemma}\label{lem:k-pass-main-pos}
The \textsc{$k$-pass Category-Advice} algorithm achieves approximation ratio $F_{2k}/F_{2k+1}$.
\end{lemma}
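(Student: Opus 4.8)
The plan is to prove, by induction on the pass index $i$, the equivalent statement that $|M_i|\ge \tfrac{F_{2i}}{F_{2i+1}}\,n$; the case $i=k$ is the lemma. By the reduction following Lemma~\ref{lem:perfect-match-assump} I may assume $G$ has a perfect matching $M^*$ and $|U|=|V|=n$. Write $u_i := n-|M_i|$ for the number of unmatched vertices on each side after pass $i$; since $F_{2i+1}-F_{2i}=F_{2i-1}$, the claim is equivalent to $u_i\le r_i\,n$ with $r_i := F_{2i-1}/F_{2i+1}$, and a short Fibonacci computation ($F_{2i+3}=2F_{2i+1}+F_{2i}$ together with $F_{2i}/F_{2i+1}=1-r_i$) turns the sequence $(r_i)$ into the recursion $r_{i+1}=\tfrac1{3-r_i}$ with $r_1=\tfrac12$. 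As $x\mapsto \tfrac1{3-x}$ is increasing on $[0,1)$, the induction closes once I establish the single-step estimate
\[
u_{i+1}\ \le\ \frac{n^2}{3n-u_i},
\]
for then $u_i\le r_i n$ gives $u_{i+1}\le \tfrac{n^2}{3n-r_i n}=r_{i+1}n$. The base case is immediate: $M_1=\textsc{Ranking}(\sigma)$ is a maximal matching, so $V(M_1)$ is a vertex cover of $M^*$ and hence $2|M_1|\ge n$, i.e.\ $u_1\le n/2=r_1 n$.

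For the inductive step I would use two ingredients. The first is a monotonicity (``commitment'') statement: every online vertex matched in pass $i$ is still matched in pass $i+1$, so $M_i\subseteq M_{i+1}$ on the online side (in particular $u_{i+1}\le u_i$). This is where Lemma~\ref{lem:ranking-mono} enters. One checks that the permutation $\sigma_c$ used in pass $i+1$ is obtained from the one used in pass $i$ by taking the offline vertices never matched through pass $i$ and moving them to the front of the current top category (ahead of the vertices first matched in pass $i$), leaving all other relative orders unchanged. Writing this reordering as a sequence of single-vertex rank improvements — each applied to a vertex still unmatched in the corresponding intermediate run — and iterating Lemma~\ref{lem:ranking-mono} gives both the monotonicity and the auxiliary fact that the $\sigma_c$-rank of each matched online vertex's partner is non-increasing from pass to pass (this is the ``commit to matching a subset'' phenomenon of the footnote). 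The second ingredient is the quantitative gain needed when $u_i$ exceeds the fixed point $\tfrac{3-\sqrt5}{2}\,n$: there one must show genuine progress, $u_{i+1}\le \tfrac{n^2}{3n-u_i}<u_i$. For this I would use that in pass $i+1$ the still-unmatched offline vertices occupy the very top of the priority order, trace the $M^*$-edges incident to them, and invoke the principle that in any run of \textsc{Ranking} a matched online vertex passes over its $M^*$-partner only in favour of a strictly better-$\sigma_c$-ranked offline vertex; Lemmas~\ref{lem:perfect-match-assump} and~\ref{lem:ranking-mono} are used to control how these preferences chain. For $i=1$ this is exactly the $3/5$ analysis of D\"urr~\etal.

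The main obstacle is this quantitative gain for general $i$. Already from pass $2$ on, the top priority class is stratified by the pass in which each vertex was first matched (and, because online vertices can ``upgrade'' their partner from pass to pass, the set of never-matched offline vertices can be strictly smaller than $u_i$), so the chaining argument must handle this stratified structure rather than a single top class; the bookkeeping has to be done so that the per-pass shrinkage of the unmatched set is governed by exactly the recursion $r_{i+1}=1/(3-r_i)$ and nothing weaker. A secondary technical point, needed already for the monotonicity ingredient, is realizing the pass-$(i{+}1)$ permutation as an iterate of single-vertex rank improvements applied only to then-unmatched vertices: one must order the promotions carefully so that promoting one vertex does not invalidate the hypothesis of Lemma~\ref{lem:ranking-mono} for the next.
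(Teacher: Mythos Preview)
Your proposed single-step estimate $u_{i+1}\le n^2/(3n-u_i)$ is not merely hard to prove---it is false. Take the paper's tight instance $G_2$ from Section~\ref{sec:multipass} (so $n=F_5=5$): one checks directly that $|M_1|=|M_2|=3$, hence $u_1=u_2=2$, while $n^2/(3n-u_1)=25/13<2$. The same occurs on $G_3$ ($n=13$, $u_1=u_2=5$, and $169/34<5$). On these hard instances the early passes do not shrink the number of unmatched vertices at all---they only change \emph{which} vertices are matched---so no pass-by-pass recursion that depends only on $u_i$ can deliver the Fibonacci ratio. The obstacle you flag is therefore not a bookkeeping difficulty; the inequality itself is wrong, and with it the whole inductive scheme.

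The paper's proof avoids this by inducting on the total number of passes $k$ rather than on the pass index within a fixed run. For the $(k{+}1)$-pass algorithm it splits off pass~$1$, partitions each side into matched/unmatched parts $U=U_1\cup U_2$, $V=V_1\cup V_2$, and uses three observations: there are no $U_2$--$V_2$ edges (maximality of $M_1$); $V_2$ is perfectly matchable into $U_1$ via $M^*$; and---this is the structural insight that replaces your missing step---passes $2$ through $k{+}1$ restricted to $U_1\times V_2$ behave exactly as a fresh $k$-pass \textsc{Category-Advice} on that subgraph, because all of $V_2$ permanently sits ahead of all of $V_1$ in every subsequent $\sigma_c$, and within $V_2$ the category function evolves identically to a fresh run. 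The inductive hypothesis applied to that subgraph gives $|M_{12}|\ge (F_{2k}/F_{2k+1})|U_2|$; a maximality bound on $|M_{21}|$ together with $|M_{k+1}|\ge |M_1|$ (Lemma~\ref{lem:k-pass-mono}) then yields $|M_{k+1}|\ge \frac{F_{2(k+1)}}{F_{2(k+1)+1}}\,n$. The recursion thus relates the $(k{+}1)$-pass answer on $G$ to the $k$-pass answer on a smaller graph, not consecutive passes on $G$ itself.
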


\begin{proof}[Proof by induction on $k$.]
Base case: $k = 1$. The \textsc{$1$-pass Category-Advice} algorithm is the simple deterministic greedy algorithm, which achieves a $1/2 = F_2/F_3$ approximation ratio.

Inductive step. Assume that the  \textsc{$k$-pass Category-Advice} algorithm achieves approximation ratio $F_{2k}/F_{2k+1}$. Let's consider the \textsc{$(k+1)$-pass Category-Advice} algorithm running on some bipartite input graph $G=(U \cup V, E)$, where $U$ are the online nodes. By Lemma~\ref{lem:perfect-match-assump}, we may assume without loss of generality that $|U|=|V|$ and $G$ has a perfect matching. Let $U_1 \subseteq U$ be the set of nodes matched in the first pass of the algorithm. Let $V_1 \subseteq V$ be the nodes that $U_1$ nodes are matched to. Define $U_2 = U \setminus U_1$ and $V_2 = V \setminus V_1$. Note that there are no edges between $U_2$ and $V_2$; otherwise some node of $U_2$ would have been matched to some node of $V_2$ in the first round. Let $M_i$ be the matching found by the \textsc{$(k+1)$-pass Category-Advice} algorithm in round $i$, where $i \in [k+1]$. Also define $M_{11} = M_{k+1} \cap U_1 \times V_1$, $M_{12} = M_{k+1} \cap U_1 \times V_2$, and $M_{21} = M_{k+1} \cap U_2 \times V_1$. We are interested in computing a bound on $|M_{k+1}| = |M_{11}| + |M_{12}| + |M_{21}|$. Figure~\ref{fig:pos-dem} is a graphical depiction of the variables described.

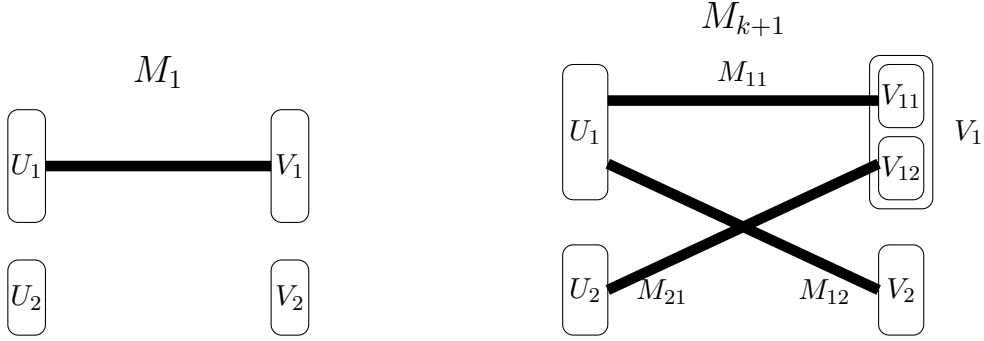
\begin{figure}[h]
\centering

\begin{tikzpicture}[scale=0.5]
\draw[rounded corners] (0,0) rectangle (-1,-3) node[pos=.5] {$U_1$}; 
\draw[rounded corners] (0,-4) rectangle (-1,-6) node[pos=.5] {$U_2$};
\draw[rounded corners] (7,0) rectangle (6,-3) node[pos=.5] {$V_1$}; 
\draw[rounded corners] (7,-4) rectangle (6,-6) node[pos=.5] {$V_2$};

\draw[line width=4] (0, -1.5) -- (6, -1.5);

\node at (3,1) {\Large $M_1$};

\end{tikzpicture} 
\hspace{3cm}
\begin{tikzpicture}[scale=0.6]
\draw[rounded corners] (0,0) rectangle (-1,-3) node[pos=.5] {$U_1$}; 
\draw[rounded corners] (0.0,-4) rectangle (-1,-6) node[pos=.5]{$U_2$} ;
\draw[rounded corners] (7.2,0.2) rectangle (5.8,-3.2); 
\draw[rounded corners] (7,-4) rectangle (6,-6) node[pos=.5] {$V_2$};

\draw[rounded corners] (7,0) rectangle (6,-1.4) node[pos=.5] {$V_{11}$};
\draw[rounded corners] (7,-1.6) rectangle (6,-3) node[pos=.5] {$V_{12}$};

\draw[line width=4] (0, -0.8) -- (6, -0.8) node[above,pos=.5]{$M_{11}$};
\draw[line width=4] (0, -2.2) -- (6, -5) node[below, pos=.8]{$M_{12}$};
\draw[line width=4] (0, -5) -- (6, -2.2) node[below,pos=.2]{$M_{21}$};

\node at (3, 1) {\Large $M_{k+1}$};
\node at (8, -1.5) {$V_1$};
\end{tikzpicture} 

\caption{$G$ is the input graph. On the left we show the matching constructed in the first pass. On the right we show the matching constructed in the $k+1$st pass.}\label{fig:pos-dem}
\end{figure}

Observe that after the first round, nodes in $U_1$ prefer nodes from $V_2$ to those from $V_1$. Moreover, nodes in $V_2$ are only connected to nodes in $U_1$ and there is a perfect matching between $V_2$ and a subset of $U_1$. Thus, the matching constructed between $U_1$ and $V_2$ in the next $k$ passes is the same as if we ran \textsc{$k$-pass Category-Advice} algorithm on the subgraph of $G$ induced by $U_1 \cup V_2$. This implies that $|M_{12}| \ge (F_{2k}/F_{2k+1}) |V_2| = (F_{2k}/F_{2k+1}) |U_2|$.

By Lemma~\ref{lem:ranking-mono}, in the $k+1$st pass, all nodes from $U_1$ that were not matched with $V_2$ will be matched with some nodes in $V_1$, i.e., $|U_1| = |M_{12}| + |M_{11}|$. Let $V_{11}$ be such a set, and let $V_{12} = V_1 \setminus V_{11}$. We would like to lower bound $|M_{21}|$. To that end we first lower bound the size of a maximum matching between $U_2$ and $V_{12}$. Since $U_2$ is only connected to $V_1$ and since we assume there is a perfect matching, a maximum matching between $U_2$ and $V_1$ is of size $|U_2|$. Thus, the size of a maximum matching between $U_2$ and $V_{12}$ is at least $|U_2|-|V_{11}|$. Also, observe that $|V_{11}| = |V_1|-|V_{12}|$ and $|V_{12}| = |M_{12}|$. Therefore, the size of a maximum matching between $U_2$ and $V_{12}$ is at least $|U_2|- (|V_1|-|M_{12}|) = |U_2| - |U_1| + |M_{12}|$ (note that $|U_1| = |V_1|$). Finally, observe that in the last round, the algorithm constructs a maximal matching between $U_{2}$ and $V_{12}$ guaranteeing that $|M_{21}| \ge (1/2) (|U_2|-|U_1|+|M_{12}|)$.

Putting it all together, we obtain
\begin{align*}
|M_{k+1}| &= |M_{11}| + |M_{12}| + |M_{21}| = |U_1| - |M_{12}| + |M_{12}| + |M_{21}| \\
          &\ge |U_1| + \frac{1}{2} \left( |U_2| - |U_1| + |M_{12}| \right) \ge \frac{1}{2} \left(|U_2| + |U_1| + \frac{F_{2k}}{F_{2k+1}} |U_2| \right) \\
          &= \frac{1}{2} \left( n + \frac{F_{2k}}{F_{2k+1}} (n-|M_1|) \right) = \left( \frac{1}{2} + \frac{F_{2k}}{F_{2k+1}} \right) n - \frac{F_{2k}}{2F_{2k+1}} |M_1|.\\
\end{align*}

By Lemma~\ref{lem:k-pass-mono} we also have that $|M_{k+1}| \ge |M_1|$. Thus, we derive
\[ |M_{k+1}| \ge \max \left\{ |M_1|, \left( \frac{1}{2} + \frac{F_{2k}}{F_{2k+1}} \right) n - \frac{F_{2k}}{2F_{2k+1}} |M_1| \right\}.\]
This implies that $|M_{k+1}| \ge \frac{1/2 + F_{2k}/F_{2k+1}}{1+F_{2k}/(2F_{2k+1})} n = \frac{F_{2k+1}+F_{2k}}{2F_{2k+1}+F_{2k}} n = \frac{F_{2(k+1)}}{F_{2(k+1)+1}} n$.
\end{proof}

\subsection{Negative Result}

In this subsection, we construct a family of bipartite graphs $(G_k)_{k=1}^\infty$ with the following properties:
\begin{enumerate}
\item $G_k$ has exactly $F_{2k+1}$ online nodes and $F_{2k+1}$ offline nodes.
\item $G_k$ has a perfect matching.
\item The \textsc{$k$-pass Category-Advice} algorithm finds a matching of size $F_{2k}$.
\item For all $k' > k$, the \textsc{$k'$-pass Category-Advice} algorithm finds a matching of size $F_{2k}+1$.
\end{enumerate}
\begin{remark}
The last property allows us to conclude that the approximation ratio of \textsc{$k$-pass Category-Advice} converges to the inverse of the golden ratio even when $k$ is allowed to depend on $n$ arbitrarily.
\end{remark}
This family of graphs shows that our analysis of the \textsc{$k$-pass Category-Advice} algorithm is tight. First, we describe the construction and then prove the above properties.

The construction of the $G_k$ is recursive. The base case is given by $G_1$, which is depicted in Figure~\ref{fig:basecase}. The online nodes are shown on the left whereas the offline nodes are on the right. The online nodes always arrive in the order shown on the diagram from top to bottom, and the initial permutation $\sigma$ of the offline nodes is also given by the top to bottom ordering of the offline nodes on the diagram.

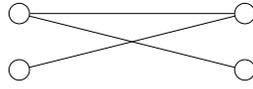
\begin{figure}[h]
\centering
\scalebox{0.75}{

\begin{tikzpicture}

\node[fill=white,circle,draw] (a) at (-2,1) {};
\node[fill=white,circle,draw] (b) at (-2,0) {};
\node[fill=white,circle,draw] (a2) at (2,1) {};
\node[fill=white,circle,draw] (b2) at (2,0) {};

\draw (a) -- (a2);
\draw (a) -- (b2);
\draw (b) -- (a2);
\end{tikzpicture}

}

\caption{$G_1$ is the graph used for the basis of the induction.}\label{fig:basecase}
\end{figure}

In the recursive step, we assume that $G_k$ has been constructed, and we show how to construct $G_{k+1}$. The online nodes $U$ of $G_{k+1}$ are partitioned into three disjoint sets $U = U_1 \cup U_2 \cup U_3$ such that $|U_1| =|U_3|= F_{2k+1}$ and $|U_2| = F_{2k}$. Similarly, the offline nodes $V$ of $G_{k+1}$ are partitioned into three disjoint sets $V = V_1 \cup V_2 \cup V_3$ such that $|V_1| = |V_3| = F_{2k+1}$ and $|V_2| = F_{2k}$. There is a copy of $G_k$ between $U_1$ and $V_3$. $U_2$ and $V_2$ are connected by parallel edges. There is a complete bipartite graph between $U_1$ and $V_1$, as well as between $U_2$ and $V_1$. Finally, $U_3$ is connected to $V_1$ by parallel edges. The construction is depicted in Figure~\ref{fig:recursive}.

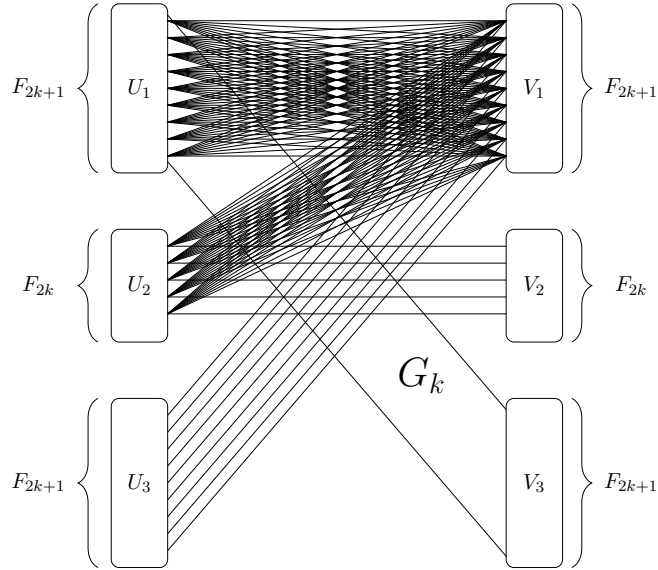
\begin{figure}[h]
\centering
\scalebox{0.75}{

\begin{tikzpicture} 
\draw[rounded corners] (0,0) rectangle (-1,-3) node[pos=.5] {$U_1$}; 
\draw[rounded corners] (0,-4) rectangle (-1,-6) node[pos=.5] {$U_2$};
\draw[rounded corners] (0, -7) rectangle (-1, -10) node[pos=.5] {$U_3$};
\draw[rounded corners] (7,0) rectangle (6,-3) node[pos=.5] {$V_1$}; 
\draw[rounded corners] (7,-4) rectangle (6,-6) node[pos=.5] {$V_2$};
\draw[rounded corners] (7, -7) rectangle (6, -10) node[pos=.5] {$V_3$};

\foreach \i in {1,...,5}
{
  \draw ($ (0,-4) + \i*(0,-0.3) $) -- ($ (6,-4) + \i*(0,-0.3) $);
}

\foreach \i in {1,...,9}
{
  \draw ($ (0,-7) + \i*(0,-0.3) $) -- ($ (6,0) + \i*(0,-0.3) $);
}

\foreach \i in {1,...,5}
{
  \foreach \j in {1,...,9}
  {
    \draw ($ (0,-4) + \i*(0,-0.3) $) -- ($ (6,0) + \j*(0,-0.3) $);
  }
}

\foreach \i in {1,...,9}
{
 \foreach \j in {1,..., 9}
 {
   \draw ($ (0,0) + \i*(0,-0.3) $) -- ($ (6,0) + \j*(0,-0.3) $);
 }
}

\draw (0, -0.2) -- (6, -7.2) node[pos=0.75,below,yshift=-20]{\huge $G_k$};
\draw (0, -2.8) -- (6, -9.8);

\draw [decorate,decoration={brace,amplitude=10},xshift=-1,yshift=0]
(-1.2,-3) -- (-1.2,0) node [black,midway,xshift=-30] {$F_{2k+1}$};

\draw [decorate,decoration={brace,amplitude=10},xshift=-1,yshift=0]
(-1.2,-6) -- (-1.2,-4) node [black,midway,xshift=-30] {$F_{2k}$};

\draw [decorate,decoration={brace,amplitude=10},xshift=-1,yshift=0]
(-1.2,-10) -- (-1.2,-7) node [black,midway,xshift=-30] {$F_{2k+1}$};

\draw [decorate,decoration={brace,amplitude=10},xshift=-1,yshift=0]
(7.2,0) -- (7.2,-3) node [black,midway,xshift=30] {$F_{2k+1}$};

\draw [decorate,decoration={brace,amplitude=10},xshift=-1,yshift=0]
(7.2,-4) -- (7.2,-6) node [black,midway,xshift=30] {$F_{2k}$};

\draw [decorate,decoration={brace,amplitude=10},xshift=-1,yshift=0]
(7.2,-7) -- (7.2,-10) node [black,midway,xshift=30] {$F_{2k+1}$};

\end{tikzpicture} 

}

\caption{$G_{k+1}$ depicts the inductive construction.}\label{fig:recursive}
\end{figure}

\begin{lemma}\label{lem:k-pass-main-neg}
Properties (1-4) mentioned at the beginning of this subsection hold for the $G_k$.
\end{lemma}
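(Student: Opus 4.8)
The proof is by induction on $k$, tracking the behavior of \textsc{$k$-pass Category-Advice} on the recursively defined $G_k$. I will first verify properties (1)--(4) for the base case $G_1$ by direct inspection, then set up the inductive step by analyzing how the passes on $G_{k+1}$ decompose across the blocks $U_1 \cup U_2 \cup U_3$ and $V_1 \cup V_2 \cup V_3$. The key structural facts I expect to need are: (i) counting — $|U_i|,|V_i|$ are $F_{2k+1}, F_{2k}, F_{2k+1}$, so $|U(G_{k+1})| = 2F_{2k+1}+F_{2k} = F_{2k+1}+F_{2k+2} = F_{2k+3} = F_{2(k+1)+1}$, giving property (1); (ii) $G_{k+1}$ has a perfect matching — use the copy of $G_k$'s perfect matching between $U_1$ and $V_3$, the parallel edges $U_2$--$V_2$, and the parallel edges $U_3$--$V_1$, giving property (2).

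\textbf{The heart: tracking the passes.} For property (3) I will trace the first $k+1$ passes of the algorithm on $G_{k+1}$. In pass $1$, because $U_3$ arrives last (by the top-to-bottom arrival order) and is joined to $V_1$ only by parallel edges while $U_1$ and $U_2$ have complete bipartite access to $V_1$: the nodes of $U_1$ arrive first and, since within $U_1$ the subgraph to $V_3$ is exactly $G_1$'s first-pass behavior scaled up (actually the $G_k$ copy), one has to argue $U_1$ greedily grabs all of $V_1$ (via the complete bipartite part, since in pass $1$ the category function is constant so $\sigma$ is the identity and low-ranked $V_1$ vertices are preferred), then $U_2$ gets matched into... here I need to be careful. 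The intended behavior is that pass $1$ on $G_{k+1}$ matches $U_1$ and $U_2$ entirely into $V_1 \cup$ (something), leaving $U_3$ almost entirely unmatched, and that the $G_k$ copy inside is triggered only in later passes. Then after pass $1$, the vertices of $V_1$ are "matched" (category $2$/rank $-1$) so in pass $2$ preference shifts: now $U_3$'s parallel neighbors in $V_1$ are deprioritized relative to unmatched vertices, and the recursion kicks in so that the subsequent $k$ passes restricted to $U_1 \cup V_3$ run exactly \textsc{$k$-pass Category-Advice} on the embedded $G_k$, yielding $F_{2k}$ matched there; combined with the $F_{2k}$ from $U_2$--$V_2$ and appropriate bookkeeping on $V_1$, the total matched after $k+1$ passes is $F_{2k} + F_{2k} + F_{2k+1}$... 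I will need to reconcile this arithmetic with the target $F_{2(k+1)} = F_{2k+2} = F_{2k+1} + F_{2k}$, so the correct split is that $k+1$ passes match $F_{2k+1}$ vertices among $U_1\cup U_3$ into $V_1$, plus $F_{2k}$ from the $U_2$--$V_2$ parallel edges — wait, that overcounts $V_1$; the real accounting must have the $G_k$ copy produce exactly $F_{2k}$ of the $U_1$ nodes matched to $V_3$, freeing $F_{2k}$ nodes of $V_1$ for $U_3$, giving $|M_{k+1}| = F_{2k}\,(U_1\text{-to-}V_3) + F_{2k}\,(U_2\text{-to-}V_2) + (F_{2k+1}-F_{2k} + F_{2k}) = $ let me just say: the careful block-by-block count, using the inductive hypothesis $|M_k(G_k)| = F_{2k}$ at the appropriate level, must sum to $F_{2k+2}$, and checking this identity via $F_{2k+2} = F_{2k+1}+F_{2k}$ repeatedly is the routine-but-delicate part.

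\textbf{Property (4) and the main obstacle.} For property (4), I will invoke Lemma~\ref{lem:k-pass-mono} (monotonicity of matching size in $k$) together with a ``$+1$ and no more'' argument: one extra pass beyond $k$ lets exactly one augmenting path propagate — traceable to the base graph $G_1$, where the $3$rd pass would find one more edge than the $2$nd but the structure is such that after pass $k+1$ on $G_{k+1}$ no further improvement is possible beyond a single unit — and by the recursive embedding this single improvement is inherited but not amplified. I expect \textbf{the main obstacle} to be pass $1$'s behavior on $G_{k+1}$: I must prove precisely which vertices \textsc{Ranking} with the identity permutation matches, which requires a careful argument about how the complete-bipartite blocks $U_1$--$V_1$ and $U_2$--$V_1$ interact with the arrival order and with the embedded $G_k$ — in particular showing that in pass $1$ the $G_k$ copy is ``starved'' (its $U_1$ nodes preferentially match into $V_1$) so that the recursion only activates from pass $2$ onward, shifted by one pass relative to $G_k$'s own schedule. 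Establishing this clean ``one-pass delay'' of the recursion, and then verifying the Fibonacci arithmetic telescopes correctly, is where the real work lies; the perfect-matching and node-counting claims are immediate by construction.
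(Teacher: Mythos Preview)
Your overall scaffold (induction on $k$, same block decomposition, ``pass 1 saturates $V_1$, then the embedded $G_k$ runs one pass behind'') matches the paper's proof. Properties (1) and (2) are fine, and your identification of pass~1 as matching $U_1\to V_1$, $U_2\to V_2$, $U_3$ unmatched is correct.

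The genuine gap is in your accounting for the \emph{final} pass, specifically where $U_2$ goes. You write ``$F_{2k}\ (U_2\text{-to-}V_2)$'' and ``freeing $F_{2k}$ nodes of $V_1$ for $U_3$''. This is backwards. After pass~1, \emph{every} vertex in $V_1\cup V_2$ has category $-1$, so from pass~2 onward the relative order of $V_1$ versus $V_2$ inside that category is still the original $\sigma$, which places $V_1$ before $V_2$. Hence in every pass $\ge 2$, each $U_2$ node (adjacent to all of $V_1$ and one $V_2$ node) prefers $V_1$. In pass $k{+}1$, exactly $F_{2k}$ nodes of $U_1$ go to $V_3$ (inductive hypothesis on the embedded $G_k$), the remaining $F_{2k+1}-F_{2k}$ nodes of $U_1$ go to $V_1$, and then the $F_{2k}$ nodes of $U_2$ fill the remaining $F_{2k}$ slots of $V_1$. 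So $V_1$ is completely saturated \emph{before} $U_3$ arrives, and $U_3$ gets nothing. The final count is $|M_{k+1}| = |U_1|+|U_2| = F_{2k+1}+F_{2k} = F_{2k+2}$, as required. Your version would leave $F_{2k}$ vertices of $V_1$ free for $U_3$ and give too large a matching.

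The same correction drives property~(4): for $k'>k+1$, the embedded $G_k$ now gets $F_{2k}+1$ matched into $V_3$ by the inductive hypothesis, so $U_1$ takes $F_{2k+1}-F_{2k}-1$ of $V_1$, $U_2$ then takes $F_{2k}$ of $V_1$, leaving exactly one $V_1$ vertex for $U_3$; total $F_{2k+2}+1$. Your sketch of property~(4) via monotonicity plus ``one augmenting path'' is too vague to produce this exact count. Finally, the step you flagged as the ``main obstacle'' (pass~1 starving the $G_k$ copy) is actually immediate: $V_1$ precedes $V_3$ under $\sigma$ and $U_1$ is completely joined to $V_1$, so $U_1$ fills $V_1$ one-for-one; the subtle part is $U_2$'s behavior, not $U_1$'s.
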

\begin{proof}[Proof by induction on $k$.]

Base case: $k = 1$. We have that $G_1$ has 2 online and 2 offline nodes and $F_3 = 2$. It is easy to see that $G_1$ has a perfect matching. The \textsc{1-pass Category-Advice} algorithm is simply the regular deterministic greedy algorithm, which clearly returns a matching of size $1$ on $G_1$, and $F_2 = 1$. Lastly, \textsc{$2$-pass Category-Advice} finds a matching of size $F_2+1 = 2$ and adding more passes after that does not change the matching found by the algorithm.

Inductive step. Assume that properties (1-4) hold for $G_k$. First, observe that the number of online vertices of $G_{k+1}$ is equal to the number of offline vertices and is equal to
\[ F_{2k+1} + F_{2k} + F_{2k+1} = F_{2k+2} + F_{2k+1} = F_{2k+3} = F_{2(k+1)+1}.\]
By inductive assumption, $G_k$ has a perfect matching. Therefore, $U_1$ vertices can be matched with $V_3$ via the perfect matching given by $G_k$. In addition, $U_2$ can be matched with $V_2$ by parallel edges, and $U_3$ can be matched with $V_1$ by parallel edges as well. Thus, $G_{k+1}$ has a perfect matching. Thus, we proved properties 1 and 2. Recall, that the order in which online vertices $U$ appear is the top-to-bottom order in Figure~\ref{fig:recursive}. To prove the 3rd property, observe that in the first pass, the \textsc{$(k+1)$-pass Category-Advice} algorithm matches $U_1$ and $V_1$ by parallel edges, $U_2$ with $V_2$ by parallel edges, and leaves $U_3$ and $V_3$ unmatched. Since $V_3$ is only connected to the nodes $U_1$, in the next $k$ passes the behavior of the algorithm between $U_1$ and $V_3$ nodes is exactly that of the \textsc{$k$-pass Category-Advice} algorithm. Therefore, by the inductive assumption, the algorithm is going to match exactly $F_{2k}$ nodes from $U_1$ with the nodes in $V_3$. The remaining $F_{2k+1}-F_{2k}$ nodes from $U_1$ will be matched to the nodes in $V_1$ (since those are the only neighbors of $U_1$ nodes besides the nodes from $V_3$). Also note that the nodes from $U_2$ in all passes behave the same way -- they prefer $V_1$ nodes to $V_2$ nodes. Thus, since $V_1$ will have $F_{2k}$ nodes unmatched after processing all nodes of $U_1$ in the last round, all of $U_2$ nodes will be matched to $V_1$ in the last round. This implies that after processing $U_1$ and $U_2$ in the last round, all of $V_1$ nodes are matched. Therefore, none of $U_3$ nodes can be matched. Thus, the matching found by the \textsc{$(k+1)$-pass Category-Advice} algorithm on $G_{k+1}$ is of size $|U_1|+|U_2| = F_{2k+1}+F_{2k} = F_{2k+2} = F_{2(k+1)}$. The last property is proved similarly to the 3rd. Let $k' > k$ be given and we analyze the behavior of \textsc{$k'$-pass Category-Advice} on $G_{k+1}$. After the first pass, by inductive assumption, the next $k'-1$ passes are going to produce a matching of size $F_{2k}+1$ between $U_1$ and $V_3$. This means that in the last pass, $F_{2k+1}-(F_{2k}+1)$ vertices from $U_1$ will be matched with $V_1$. This, in turn, implies that $F_{2k}$ vertices from $U_2$ will be matched with $V_1$, leaving $1$ more vertex from $V_1$ to be matched with $U_3$. This results in the overall matching of size $F_{2k+1}+F_{2k}+1 = F_{2k+2}+1$.
\end{proof}

\section{Analysis of \textsc{MinDegree} in the Known IID Model}
\label{sec:stochastic}
In this section we prove a tight approximation ratio for a natural greedy algorithm in the known IID input model---\textsc{MinDegree}.   
Recall from Section~\ref{sec:prelim} that every algorithm achieving approximation ratio $\rho$ in the known IID model can be converted into a greedy algorithm achieving approximation ratio $\ge \rho$. Thus, we can consider greedy algorithms without loss of generality. Moreover, greedy algorithms satisfying natural consistency conditions (see Definition~\ref{def:consistency}) achieve approximation ratio at least $1-1/e$ (see Theorem~\ref{thm:known_iid_greedy_positive}). Ties may occur in a greedy algorithm when an online node has more than one available neighbor. A greedy algorithm can be made more specialized if we specify a tie-breaking rule. How important is the tie-breaking for the performance of a greedy algorithm in the known IID model? Turns out that it's very important and a good tie breaking rule can improve the approximation ratio.

Algorithms beating $1-1/e$ ratio are known in this model (\cite{Bahmani2010,Manshadi2011,HaeuplerMZ2011,Jaillet2014,BrubachSSX2016}). Although these algorithms are not stated as greedy algorithms, we can apply the general conversion to turn them into greedy algorithms, albeit with unnatural tie-breaking conditions. The tie-breaking rules of these algorithms often require polynomial time preprocessing of the type graph, thus they are feasible from the theoretical point of view, but not from a practical point of view for large type graphs. This motivates the study of conceptually simple and practical tie-breaking rules for greedy algorithms in the known IID model. We contribute to this by studying a natural tie-breaking rule.

\begin{remark}
In fact, it is easy to see that there is an optimal tie breaking rule that can be computed by dynamic programming. This leads to an optimal online algorithm in the known IID model. Unfortunately, the size of the dynamic programming table is exponentially large, so computing this optimal tie-breaking rule using such a 
dynamic program is not computationally feasible. Currently  no one knows how to analyze an optimal tie-breaking rule, but certain potentially sub-optimal and computationally feasible tie-breaking rules have been analyzed as discussed above.
\end{remark}

Our algorithm is \textsc{MinDegree}. The motivation for studying this algorithm is as follows. It is easy to see that in the adversarial setting we can take any greedy algorithm, modify the tie-breaking rule to always give more preference to the offline nodes of degree $1$, and this will not decrease the approximation ratio of the algorithm. Generalizing this, we may come to the conclusion that online vertices should give preference to the offline vertices of smaller degrees. The problem is that in the adversarial setting we do not know the degrees of the offline nodes a priori; however, in the known IID setting we can estimate the degrees of the offline vertices from the type graph. This is precisely what \textsc{MinDegree} formalizes. The algorithm is given a type graph $G=(U,V,E)$ as input. It keeps track of a set $S$ of currently ``active'', i.e., not yet matched, offline nodes. When a new node $\widetilde{u}$ arrives, it is matched to its active neighbor of minimum degree in the type graph. The pseudocode of \textsc{MinDegree} is presented in  Algorithm~\ref{alg:mindegree}.

\begin{algorithm}[!h]
\caption{The \textsc{MinDegree} algorithm.}
\label{alg:mindegree}
\begin{algorithmic}
\Procedure{MinDegree}{$G=(U,V,E)$}
\State{Let $S = V$ denote the set of active offline nodes}
\Repeat
\State{Let $\widehat{u}$ denote the newly arriving online node}
\State{Let $N(\widehat{u})= \{ v \in S \mid (u,v) \in E\}$}
\If{$N(\widehat{u}) \neq \emptyset$}
\State{Let $v = \argmin_{v \in N(\widehat{u})} \deg(v)$}
\State{Match $\widehat{u}$ with $v$}
\State{Remove $v$ from $S$}
\EndIf
\Until{all online nodes have been processed}
\EndProcedure
\end{algorithmic}
\end{algorithm}

\begin{remark}
Our algorithm does not fully break ties, i.e., \textsc{MinDegree} takes \emph{some} neighbor of currently minimum degree. In practice, it means that ties are broken in some arbitrary way, e.g., by names of vertices. In our theoretical analysis, this means that the adversary is more powerful, as it can decide how the algorithm breaks these ties.
\end{remark}

\textsc{MinDegree} 
is a conceptually simple and promising algorithm in the known IID setting. Indeed, a version of \textsc{MinDegree} has been extensively studied in the offline setting (see Besser and Poloczek \cite{BesserP17} and references therein). Unfortunately, in spite of having excellent empirical performance as well as excellent performance under various random input models, it has a bad worst-case approximation ratio of $1/2$ in the offline adversarial setting~\cite{BesserP17}. As far as we know, this algorithm has not been analyzed in the known IID model. We obtain a result that, in spirit, is similar to the offline result, but the proof is different. Namely, we show that \textsc{MinDegree} cannot achieve an approximation ratio better than $1-1/e$, which is guaranteed by any consistent greedy algorithm in the known IID model. 

\begin{theorem}
\[ \AR(\textsc{MinDegree}) = 1-\frac{1}{e}\]
The negative result holds no matter which rule is used to break (remaining) ties in \textsc{MinDegree}.

\end{theorem}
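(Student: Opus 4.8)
The statement splits into a lower bound $\AR(\textsc{MinDegree}) \ge 1-1/e$ and an upper bound $\AR(\textsc{MinDegree}) \le 1-1/e$ that must hold for every way of breaking the residual ties. The lower bound I would obtain directly from Theorem~\ref{thm:known_iid_greedy_positive}: it suffices to check that \textsc{MinDegree}, with its residual ties broken consistently (e.g.\ by vertex name), satisfies the consistency conditions of Definition~\ref{def:consistency}. This is immediate: if $N_{\pi'}(u) \subseteq N_\pi(u)$ and $u$ is matched to $u^\ast$ on $\widehat G(\pi)$, then $u^\ast$ has minimum type-graph degree in $N_\pi(u)$ (and is the smallest-named such vertex), so the same holds a fortiori in the subset $N_{\pi'}(u)\ni u^\ast$, whence $u$ is matched to $u^\ast$ on $\widehat G(\pi')$ as well. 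Theorem~\ref{thm:known_iid_greedy_positive} then yields $\AR(\textsc{MinDegree})\ge 1-1/e$ for any consistent tie-breaking rule, with no constraint on how ties among minimum-degree neighbours are resolved.

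For the upper bound the plan is to construct, for each $n$, a type graph $G_n$ on which \textsc{MinDegree} --- no matter how ties are broken --- matches at most a $1-1/e+o(1)$ fraction of $\mathbb{E}[\OPT(G_n)]$. The guiding idea is to turn \textsc{MinDegree}'s rule ``prefer the available neighbour of smallest type-graph degree'' against it: one designs $G_n$ so that the low-degree offline vertices form a pool that \textsc{MinDegree} is forced to exhaust first, whereas an optimal offline algorithm can instead cover those low-degree vertices ``cheaply'' (using types that can only reach them) while diverting the remaining arrivals onto a higher-degree ``backup'' part that \textsc{MinDegree}, clinging to low degree, never gets to in time. The target constant $1-1/e$ signals that the right construction should make \textsc{MinDegree}'s final matching size equal, up to lower-order terms, to the number of online types appearing at least once in the IID sample --- a standard occupancy (balls-into-bins) quantity equal to $(1-1/e)n\pm O(\sqrt n)$ in expectation and concentrated --- while $\mathbb{E}[\OPT(G_n)]=(1-o(1))n$ thanks to the redundancy of the backup part.

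Carrying this out, I would proceed as follows. (1) Fix the construction and its parameters, respecting the conventions of Section~\ref{sec:prelim} (uniform type distribution, $|\widehat U|=|U|$), and verify $\mathbb{E}[\OPT(G_n)] = (1-o(1))n$ by exhibiting, for a typical instance $\widehat G\sim\mu_{G_n}$, an almost-perfect matching: assign each low-degree vertex to a type that can only reach it, and cover the rest through the backup part, using a Hall/augmenting-path argument to handle the atypical instances. (2) Prove the key structural lemma: on $G_n$, regardless of tie-breaking, \textsc{MinDegree} devotes each type's first appearance to a low-degree vertex and reaches the backup part only $o(n)$ times, so its matching consists of the claimed low-degree vertices plus $o(n)$ backup vertices. (3) Identify this with $(\text{number of distinct types present}) + o(n)$ and invoke the occupancy expectation together with a bounded-differences (Azuma/McDiarmid) concentration inequality to get $\mathbb{E}[\textsc{MinDegree}(G_n)] = (1-1/e+o(1))\,n$. (4) Combine (1) and (3) with $\mathbb{E}[\OPT(G_n)]\to\infty$ and the definition of $\AR$ to conclude $\AR(\textsc{MinDegree})\le 1-1/e$, and pair with the lower bound for equality.

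The crux, and where I expect most of the work to go, is step (2): showing that an adversarial tie-breaking rule cannot route more than $o(n)$ arrivals into the backup part so as to beat $1-1/e$. Because \textsc{MinDegree} is greedy, an arrival whose only available neighbours lie in the backup part is forced to take one; the construction must therefore be arranged so that such forced backup matches are rare --- morally, by the time the low-degree pool empties the arrival stream is essentially over --- and making this precise will likely require either a careful worst-case scheduling argument or a fluid-limit analysis of the associated occupancy process, plus the concentration step to pass from a single random instance to the asymptotic ratio. Getting the balance of low-degree versus backup vertices exactly right, so that the ratio converges to $1-1/e$ rather than to some constant strictly between $1-1/e$ and $1$, is the subtle quantitative point.
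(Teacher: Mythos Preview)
Your lower bound is correct and matches the paper exactly: verify consistency and invoke Theorem~\ref{thm:known_iid_greedy_positive}.

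For the upper bound, however, your proposal is a plan rather than a proof, and the paper takes a route that sidesteps precisely the difficulty you flag in step~(2). Your idea is to \emph{exploit} the min-degree preference by engineering a low-degree pool that \textsc{MinDegree} empties first, and then to argue via an occupancy analysis that only $(1-1/e+o(1))n$ arrivals get matched. But you never give the construction, and you yourself identify the crux---controlling how many arrivals are forced into the backup part under \emph{arbitrary} residual tie-breaking---as unresolved. As stated, there is no way to check whether such a construction exists with the right parameters, and the ``number of distinct types appearing'' heuristic is not obviously the right invariant once the backup part is present and greedy is forced to use it.

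The paper's trick is to \emph{neutralize} the min-degree rule rather than exploit it. One starts from the Goel--Mehta type graph $G_{L,N}$ (an upper-triangular block structure) on which a greedy algorithm with a specific bad tie-breaking achieves only $1-1/e$. The problem is that in $G_{L,N}$ the offline degrees are not equal, so \textsc{MinDegree} would not follow that bad tie-breaking. The fix is to take $K$ disjoint copies of $G_{L,N}$ together with $N$ shared biclique gadgets $H_{j,L}$, wired so that every offline vertex in every copy ends up with the \emph{same} type-graph degree. Then \textsc{MinDegree}'s rule is vacuous on the copies, and the adversarial residual tie-breaking can simply reproduce the Goel--Mehta ordering, giving $(1-1/e)LN$ per copy; the gadgets absorb their own online arrivals with high probability (Chernoff), and letting $K\to\infty$ makes their contribution $o(LNK)$. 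A further small gadget (described in the Remark following the proof) forces the bad order even without adversarial tie-breaking, yielding the ``no matter which rule'' clause. This approach buys you a reduction to an already-known $1-1/e$ bound and avoids any new stochastic analysis; your occupancy/fluid-limit program, even if it can be made to work, would be substantially more labor for the same constant.
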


First observe that $\AR(\textsc{MinDegree}) \ge 1-1/e$ immediately follows from Theorem~\ref{thm:known_iid_greedy_positive}, since it is easy to see that \textsc{MinDegree} satisfies the consistency conditions. Thus, the rest of this section is dedicated to proving $\AR(\textsc{MinDegree}) \le 1-1/e$. Namely, we prove the following.

\begin{theorem}[\textsc{MinDegree} negative result]\label{mindegree_main_thm}
There is a family of type graphs witnessing
\[ \AR(\textsc{MinDegree}) \le 1-1/e.\]
This result holds no matter which rule is used to break (remaining) ties in \textsc{MinDegree}.
\end{theorem}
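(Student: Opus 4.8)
The easy inequality $\AR(\textsc{MinDegree}) \ge 1-1/e$ is immediate: \textsc{MinDegree} is a greedy algorithm and it plainly satisfies the consistency conditions of Definition~\ref{def:consistency} (shrinking the set of available neighbours can only shrink the set of minimum‑degree candidates), so Theorem~\ref{thm:known_iid_greedy_positive} applies. The whole content is therefore to build a family of type graphs $\{G_n\}$ with $\mathbb{E}_{\widehat G\sim\mu_{G_n}}[\textsc{MinDegree}(\widehat G)]\big/\mathbb{E}_{\widehat G\sim\mu_{G_n}}[\OPT(\widehat G)] \to 1-1/e$, and that is the plan. Recall that under the model's assumptions the number of online slots equals the number of types, so among $n$ i.i.d.\ draws the expected number of types that never appear is $(1/e+o(1))n$ — this occupancy fact is what we intend to have govern the loss.

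The guiding idea is to weaponise \textsc{MinDegree}'s preference for small‑degree offline vertices. In the basic gadget there is a ``contested'' offline vertex $c$ of small degree and a ``flexible'' online type $t$ with $c\in N(t)$; because $\deg(c)$ is small, every copy of $t$ that sees $c$ available is matched to $c$, whereas the optimum would rather spend copies of $t$ on $t$'s other neighbours and cover $c$ through another type. To make the decisive choice forced — hence the bound independent of the remaining tie‑breaking — I would inflate the degrees of $t$'s other neighbours strictly above $\deg(c)$ by attaching private ``padding'' types connected to those vertices only. Layering such gadgets, and fixing all sizes so that the instance graph $\widehat G$ has a fixed number of i.i.d.\ online slots over equally many types and $\Theta(n)$ offline vertices, one arranges that \textsc{MinDegree} can match at most one offline vertex per block of types that actually appears, while the optimum, re‑using repeated copies, still covers essentially every offline vertex; the shortfall then tends to $1/e$ of the optimum.

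Concretely the steps are: (i) define $G_n$, record its parameters and degree sequence, and check that at every decisive arrival \textsc{MinDegree}'s choice on the contested vertices is forced regardless of how remaining ties are broken, so that this part of its run is a deterministic function of the multiset of arrivals (the arrival order turning out to be irrelevant there); (ii) estimate $\mathbb{E}[\OPT(\widehat G)]$ by exhibiting an explicit near‑perfect matching of $\widehat G$ for a typical draw, together with the trivial (or a König‑type) upper bound, so that the only randomness that matters is which types appear; (iii) read off $\mathbb{E}[\textsc{MinDegree}(\widehat G)]$ as an occupancy sum of the form $\sum_j \Pr[\text{block }j\text{ appears}]$; (iv) let $n\to\infty$ in the resulting ratio of the two explicit expected values, using $(1-c/n)^{n}\to e^{-c}$, to obtain the limit $1-1/e$. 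Finally combine with the easy direction to conclude $\AR(\textsc{MinDegree})=1-1/e$. Note that no concentration argument is needed, since the definition of $\AR$ here is already in terms of ratios of expectations.

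The hard part is designing $G_n$ so that (i)–(iii) hold simultaneously: the type graph must be dense enough that the optimum can recycle repeated copies and cover almost all offline vertices (keeping $\mathbb{E}[\OPT]$ near the number of offline vertices), yet rigid enough that \textsc{MinDegree}'s degree rule provably routes every flexible type onto the ``wrong'' low‑degree vertex, and with the decisive degrees strictly separated so that no tie‑breaking choice can rescue it — and the numerology must be arranged so that the loss is exactly the occupancy constant $1/e$ of the optimum rather than some other constant $<1$. Once the construction is pinned down, the probabilistic part is routine coupon‑collector–style bookkeeping.
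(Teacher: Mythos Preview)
Your proposal outlines a reasonable strategy but does not actually construct the family of type graphs; by your own admission that construction is ``the hard part,'' and until it is written down the occupancy heuristic you invoke cannot be checked. In particular, the mechanism you sketch --- loss governed by the $\approx n/e$ types that never appear, with \textsc{MinDegree} matching ``at most one offline vertex per block of types that actually appears'' --- is not obviously consistent: types that fail to appear hurt $\OPT$ just as much as the algorithm unless the graph is dense enough for $\OPT$ to recycle repeats, but then repeated arrivals of a type are also matched \emph{somewhere} by \textsc{MinDegree} (to the next-lowest-degree available neighbour), not simply lost. The $1-1/e$ in the Goel--Mehta bound (Theorem~\ref{thm:simple_greedy_lb}) in fact arises from a different, though related, balls-in-bins calculation about how a bad greedy fills the offline blocks of $G_{L,N}$, not directly from occupancy of the type set; your plan does not yet explain why the numerology lands on $1-1/e$ rather than some other constant.

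The paper's proof is more concrete and takes a different tack. Rather than engineering strict degree separations, it glues $K$ disjoint copies of the Goel--Mehta graph $G_{L,N}$ to shared biclique gadgets $H_{j,L}$ whose effect is to \emph{equalize} all offline degrees within each copy. This renders the \textsc{MinDegree} rule vacuous on the copies, so the residual tie-breaking may be taken to coincide with the bad greedy of Theorem~\ref{thm:simple_greedy_lb}, which is then invoked as a black box to get $LN(1-1/e)+o(LN)$ per copy. A Chernoff bound shows the gadgets' own online arrivals are w.h.p.\ absorbed inside the gadgets and thus do not interfere, and sending $K\to\infty$ kills the gadgets' contribution to the ratio --- so your assertion that ``no concentration argument is needed'' is at odds with how the paper actually argues. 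The strict-separation idea you propose (pad degrees so that \textsc{MinDegree} is forced into the bad order regardless of the residual tie-breaking) does appear, but only in a short remark after the main proof, not as the primary construction.
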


\begin{figure}[h]
\centering

\begin{tikzpicture}[scale=0.5]

\def \L {2}
\def \N {4}
\def \a {8}
\def \offset {2}

\foreach \s in {1,...,\N}
{
	\foreach \t in {1,...,\L}
	{
		\pgfmathtruncatemacro\f{(\s-1)*\L+\t}
		\pgfmathsetmacro\r{\f+1.5*\s}
  		\node[fill=white,circle,draw,scale=0.5] (z_\f) at ({\offset+0},\r) {};
	}
}

\foreach \s in {1,...,\N}
{
	\foreach \t in {1,...,\L}
	{
		\pgfmathtruncatemacro\f{(\s-1)*\L+\t}
		\pgfmathsetmacro\r{\f+1.5*\s}
  		\node[fill=white,circle,draw,scale=0.5] (yt_\f) at ({\offset-7},\r) {};
	}
}

\begin{pgfonlayer}{background}

\foreach \s in {1,...,\N}
{
	\foreach \t in {\s,...,\N}
	{
		\foreach \q in {1,...,\L}
		{
			\foreach \p in {1,...,\L}
			{
			 	 \pgfmathsetmacro\resl{(\s-1)*\L+\q}
				 \pgfmathsetmacro\resr{(\t-1)*\L+\p}
				 \draw (yt_\resr) -- (z_\resl);
			}
		}
	}
}
\end{pgfonlayer}

\node[label] at ($ (yt_1)+(0,-1) $) {$ U $};
\node[label] at ($ (z_1)+(0,-1) $) {$ V $};
\foreach \s in {1,...,\N}
{
	\pgfmathsetmacro\res{\L*(\N-\s+1)}
	\node[label] at ($ (yt_\res)+(-1.4,-0.5) $){$U_\s$};
	\node[label] at ($ (z_\res)+(1,-0.5) $){$V_\s$};
}
\end{tikzpicture}

\caption{$G_{2,4}$.}\label{fig:gm24}
\end{figure}

To prove the above theorem we use the family of type graphs due to Goel and Mehta~\cite{GoelM2008} as a building block. Each graph in that family is indexed by two integers $L$ and $N$. Let $G_{L,N}(U,V,E)$ denote the $(L,N)$th graph in the family. The offline side is partitioned into $N$ blocks of size $L$: $V = \dot\bigcup_{i=1}^N V_i$. The online side is partitioned into $N$ blocks of size $L$, i.e., $U = \dot\bigcup_{i=1}^N U_i$. The edges are as follows: there is a complete biclique between $V_i$ and $U_j$ if and only if $i \ge j$. An example of this graph with $L=2$ and $N=4$ is depicted in Figure~\ref{fig:gm24}.

\begin{theorem}[Goel and Mehta~\cite{GoelM2008}]
\label{thm:simple_greedy_lb}
There is a tie-breaking criterion such that a simple greedy algorithm finds a matching of size $\le LN(1-1/e) + o(LN)$ on $G_{L,N}$ with high probability.
\end{theorem}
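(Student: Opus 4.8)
The plan is to exhibit one concrete tie-breaking rule and analyse the resulting greedy process on $G_{L,N}$ through a clean \emph{phase} decomposition, followed by a short concentration argument. The rule I would use: when an online node whose type lies in $U_j$ arrives and has at least one available neighbour, match it to an available neighbour lying in the offline block $V_i$ of the largest possible index $i$ (ties within that block are irrelevant). Equivalently, under the indexing in which all of $V_1$ precedes all of $V_2$ precedes $\cdots$, the greedy algorithm always matches to the available neighbour of largest name. Recall that each online arrival has a type uniform over the $LN$ types, so its block index is uniform over $[N]$ and these indices are independent across the $LN$ arrivals; write $A_j$ for the number of arrivals with block index $j$, so $\sum_j A_j = LN$.

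First I would record the deterministic structure of the greedy run under this rule. I claim the run proceeds in phases $N, N-1, N-2,\dots$, where during \emph{phase $i$} the blocks $V_{i+1},\dots,V_N$ are full, $V_i$ is not yet full, and $V_1,\dots,V_{i-1}$ are still empty. Indeed, in such a state an arrival of block index $j$ with $j\le i$ is matched inside $V_i$ (it is adjacent to $V_i$ since $i\ge j$, all higher blocks are full, so $V_i$ is its largest-index available neighbour), while an arrival of block index $j>i$ goes unmatched (its neighbours are $V_j,\dots,V_N$, all of which are full); in particular nothing is ever matched strictly below the active block, so the invariant is preserved and phase $i$ ends exactly when the $L$ slots of $V_i$ are exhausted. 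Consequently the number of arrivals consumed in phase $i$ is $X_i\sim\mathrm{NegBin}(L,\,i/N)$ — the number of $\mathrm{Bernoulli}(i/N)$ trials needed for $L$ successes — and, chopping the i.i.d.\ arrival stream at these stopping times, the $X_i$ are mutually independent (phase $N$ in fact has deterministic length $X_N=L$). Since the run consumes exactly $LN$ arrivals and nothing below the active block is matched, if $i^\star$ denotes the index of the phase that is active when the stream is exhausted, then the matching produced has size at most $(N-i^\star)L+L=(N-i^\star+1)L$. It therefore suffices to show that $i^\star$ is not much smaller than $N/e$.

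Next I would turn this into a concentration statement. Fix $\epsilon>0$ and set $j_1=\lceil(1/e-\epsilon)N\rceil$. Phase $j$ is entered iff phases $N,N-1,\dots,j+1$ are all completed within the $LN$ available arrivals, i.e.\ iff $T_{j+1}:=\sum_{k=j+1}^{N}X_k\le LN$; hence it is enough to show $T_{j_1+1}>LN$ with high probability. Since $j_1=\Omega(N)$, every $X_k$ with $k>j_1$ satisfies $\mathbb{E}[X_k]=LN/k=O(L)$ and $\mathrm{Var}(X_k)\le LN^2/k^2=O(L)$, so $\mathrm{Var}(T_{j_1+1})=O(LN)$; meanwhile $\mathbb{E}[T_{j_1+1}]=LN\sum_{k=j_1+1}^{N}\tfrac1k=LN\bigl(\ln(N/j_1)+o(1)\bigr)\ge(1+\delta)LN$ for some $\delta=\delta(\epsilon)>0$ (because $\ln\tfrac{1}{1/e-\epsilon}>1$), once $L,N$ are large. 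Chebyshev's inequality then gives $\Pr[T_{j_1+1}\le LN]=O\!\bigl(1/(\delta^2 LN)\bigr)\to0$, so with high probability phase $j_1$ is never entered, i.e.\ $i^\star>j_1\ge(1/e-\epsilon)N$, and the greedy matching has size at most $(N-i^\star+1)L\le(1-1/e+\epsilon)LN+L$. Letting $\epsilon\downarrow0$ and absorbing the additive $L$ yields the claimed bound $LN(1-1/e)+o(LN)$. (One can check by the same estimate from below that the matching is also $\ge LN(1-1/e)-o(LN)$, so this example is tight; and $\OPT(G_{L,N})=LN-o(LN)$ follows from a Hall-deficiency bound on the i.i.d.\ instance, showing the example is meaningful — but neither fact is needed for the stated theorem.)

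The step I expect to require the most care is making the phase decomposition fully rigorous: proving that the invariant ``nothing is matched strictly below the active block'' is genuinely preserved (this hinges on the triangular adjacency $V_i\sim U_j\iff i\ge j$ interacting correctly with the largest-index-first rule, and on nothing ever returning to an emptied-out region) and that, conditioned on the phase boundaries, the arrivals inside a phase form a fresh i.i.d.\ stream, so that the $X_i$ really are independent negative binomials. Once the run is seen to be this tidy sequence of independent renewal-type phases, the rest is a routine Chebyshev (or Chernoff/Bernstein) estimate, and one should note that we only ever need to control $T_{j_1+1}$ with $j_1+1=\Omega(N)$, so the heavy tails of the early, small-index phases never enter the argument.
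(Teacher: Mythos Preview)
The paper does not prove this theorem: it is quoted as a result of Goel and Mehta and used as a black box in the proof of Theorem~\ref{mindegree_main_thm}. The only thing the paper adds is the remark, immediately following the statement, that the implicit tie-breaking rule in~\cite{GoelM2008} is exactly ``match to the available offline block $V_i$ of largest index $i$'' --- which is precisely the rule you chose. So there is nothing in the paper to compare your argument against beyond that one sentence.

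Your proof itself is correct. The phase decomposition is valid because the adjacency is upper-triangular ($U_j\sim V_i$ iff $i\ge j$) and the rule always fills the highest available block, so no arrival ever touches a block strictly below the currently active one; the invariant is genuinely preserved. Extending the $LN$ i.i.d.\ arrivals to an infinite stream is the clean way to make the $X_i$ honest independent negative binomials, and the event $\{T_{j_1+1}>LN\}$ depends only on the first $LN$ arrivals, so there is no coupling issue. The harmonic-sum computation and the Chebyshev bound are routine and correct; in particular you only ever sum variances over indices $k=\Omega(N)$, so the bound $\mathrm{Var}(T_{j_1+1})=O(LN)$ is legitimate. The final ``let $\epsilon\downarrow 0$'' step deserves one more word: what you have actually shown is that for every fixed $\epsilon>0$ the matching is at most $(1-1/e+\epsilon)LN+L$ with probability $1-o(1)$, which is exactly the meaning of the $o(LN)$ in the statement (or, if one insists on a single explicit error term, a slow diagonalisation $\epsilon=\epsilon(LN)\to 0$ does it).
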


In fact, the implicit tie-breaking criterion in the work of Goel and Mehta on $G_{L,N}$ is equivalent to breaking ties by maximum degree, i.e., online nodes from $U_j$ are matched with a neighboring node from $V_i$ with largest possible $i$. Thus, the online nodes try to fill in the spaces on the $V$ side from bottom up to prevent future online nodes from having neighbors. The high level idea behind our construction is to use $K$ independent copies of $G_{L,N}$ and $N$ shared gadgets. The gadgets are used to make the offline degrees the same on each copy of $G_{L,N}$ without affecting the size of the matching found by the algorithm on a particular copy of $G_{L,N}$. Then we can invoke an adversarial argument to say that \textsc{MinDegree} on the augmented $G_{L,N}$ behaves exactly like the Goel and Mehta greedy 
algorithm on the original $G_{L,N}$. As the number of copies $K$ goes to infinity, the contribution of the shared gadgets to the size of the overall matching goes to zero. Thus, we get the approximation ratio of $1-1/e$ for \textsc{MinDegree}.

We denote our type graph by $G_{L,N,K}$. Let $G_{L,N}^i$ denote the $i$th copy of $G_{L,N}$ in our type graph for $i \in [K]$. Gadget $j$, denoted by $H_{j,L}$, simply consists of a biclique with the offline side of size $L + o(L)$ and the online side of size $L$. We let $H_L$ denote $\dot\bigcup_{j=1}^N H_{j,L}$. For each $i$, we connect the offline side of $k$th block of $G_{L,N}^i$ via a biclique with the online side of $H_{j,L}$ for $k \le j$. Figure~\ref{fig:our24} shows an example, where the offline side has the same number of nodes as the online side in each gadget for simplicity. In Figure~\ref{fig:our24}, black nodes indicate the online side and white nodes indicate the offline side.

\begin{figure}[h]
\centering

\begin{tikzpicture}[scale=0.5]

\def \L {2}
\def \N {4}
\def \a {8}
\def \offset {2}

\foreach \s in {1,...,\N}
{
	\foreach \t in {1,...,\L}
	{
		\pgfmathtruncatemacro\f{(\s-1)*\L+\t}
		\pgfmathsetmacro\r{\f+1.5*\s}
  		\node[fill=white,circle,draw,scale=0.5] (hz_\f) at ({\offset+10},\r) {};
	}
}

\foreach \s in {1,...,\N}
{
	\foreach \t in {1,...,\L}
	{
		\pgfmathtruncatemacro\f{(\s-1)*\L+\t}
		\pgfmathsetmacro\r{\f+1.5*\s}
  		\node[fill=black,circle,draw,scale=0.5] (hy_\f) at ({\offset+8},\r) {};
	}
}

\foreach \s in {1,...,\N}
{
	\foreach \t in {1,...,\L}
	{
		\pgfmathtruncatemacro\f{(\s-1)*\L+\t}
		\pgfmathsetmacro\r{\f+1.5*\s}
  		\node[fill=white,circle,draw,scale=0.5] (z_\f) at ({\offset+5},\r) {};
	}
}
\foreach \s in {1,...,\N}
{
	\foreach \t in {1,...,\L}
	{
		\pgfmathtruncatemacro\f{(\s-1)*\L+\t}
		\pgfmathsetmacro\r{\f+1.5*\s}
  		\node[fill=black,circle,draw,scale=0.5] (y_\f) at ({\offset+0},\r) {};
	}
}
\begin{pgfonlayer}{background}

\foreach \s in {1,...,\N}
{
	\foreach \t in {\s,...,\N}
	{
		\foreach \q in {1,...,\L}
		{
			\foreach \p in {1,...,\L}
			{
			 	 \pgfmathsetmacro\resl{(\s-1)*\L+\q}
				 \pgfmathsetmacro\resr{(\t-1)*\L+\p}
				 \pgfmathsetmacro\resh{(\t-1)*\L+\q}
				 \draw (y_\resr) -- (z_\resl);
				 \draw (hy_\resr) -- (hz_\resh);
				 \draw (hy_\resl) -- (z_\resr);
			}
		}
	}
}
\end{pgfonlayer}

\draw[dashed,rounded corners] ($(y_1)+(-.5,-.5)$) rectangle ($(z_\a)+(.5,+.5)$);
\draw[dashed,rounded corners] ($(hz_1)+(.5,-.5)$) rectangle ($(hy_\a)+(-.5,+.5)$);

\node[label] at ($ (z_1)+(-2,-1) $) {$\text{Copy of } G_{L,N} $};
\node[label] at ($ (hz_1)+(-1,-1) $) {$\text{Gadgets } H_L$};
\foreach \s in {1,...,\N}
{
	\pgfmathsetmacro\res{\L*(\N-\s+1)}
	\node[label] at ($ (hz_\res)+(+1.5,-0.5) $){$H_{\s,L}$};
}

\end{tikzpicture}

\caption{$L = 2, N= 4$.}\label{fig:our24}
\end{figure}

\begin{proof}[Proof of Theorem~\ref{mindegree_main_thm}]

Since each offline node in each copy $G_{L,N}^i$ has the same degree, we may assume that \textsc{MinDegree} breaks ties by matching an online node from block $j$ with an offline node of block $k$ for the largest possible $k$. This is the same behavior as in Goel and Mehta~\cite{GoelM2008} analysis. 
Assume that $L, N,$ and $K$ are large enough. By the Chernoff bound, with high probability the number of online nodes generated from $H_{j,L}$ is bounded by $L + o(L)$. Thus, by the choice of our gadget, with high probability all nodes from $H_{j,L}$ can be matched to the offline nodes within $H_{j,L}$. Therefore, the online nodes generated from gadgets do not have affect on any of the independent copies of $G_{L,N}$. Moreover, the expected number of nodes generated from $G_{L,N}^i$ is $LN$. Invoking the Chernoff bound again, we see that with high probability for each $i$ we generate at most $LN + o(LN)$ nodes from $G_{L,N}^i$. Since independent copies $G_{L,N}^i$ do not share any edges except with the gadget, the matchings constructed by our algorithm on independent copies are independent. Since the behavior of our algorithm on each copy of $G_{L,N}$ is identical to the behavior of a simple greedy algorithm, our algorithm for each $G_{L,N}^i$ constructs a matching of expected size $LN(1-1/e)+o(LN)$. Thus, overall our algorithm finds a matching of expected size $LNK(1-1/e) + o(LNK)$. Moreover, it is easy to see that a matching of size at least $LNK - o(LNK)$ is possible. The claimed approximation ratio follows.
\end{proof}

\begin{remark}
The claim of Theorem~\ref{mindegree_main_thm} can be strengthened by taking away the power of  the adversary in breaking ties. We can force the unfavorable tie-breaking used in the above proof by introducing an additional biclique gadget with $N + o(N)$ offline nodes and $N$ online nodes. We modify the degrees in $G_{L,N}^i$ in a natural way: the offline nodes from block $j$ of $G_{L,N}^i$ get connected to $N-j$ nodes of the online side of the new gadget. This forces the \textsc{MinDegree} algorithm to process blocks $G_{L,N}^i$ from ``bottom to top'' (see Figure~\ref{fig:gm24}), which is exactly the wrong order that results in $LNK(1-1/e)+o(LNK)$ matching. It is easy to see that the only other affect this additional gadget introduces is to increase the size of the matching found by \textsc{MinDegree}, as well as $\OPT$, by a lower order term $N+o(N) = o(LNK)$.
\end{remark}

\section{A Hybrid Algorithm in the Priority Model}
\label{sec:priority}
We propose a conceptually simple greedy algorithm for bipartite matching. Our algorithm is a natural hybrid between two  well-studied greedy algorithms for bipartite matching---\textsc{Ranking} (see Section~\ref{sec:prelim}) and \textsc{MinGreedy} (due to Tinhofer~\cite{Tinhofer84}). \textsc{MinGreedy} is an offline algorithm that selects a random vertex of minimum degree in the input graph and matches it with a random neighbor, removes the matched vertices, and proceeds. Despite excellent empirical performance and excellent performance under certain random input models, the algorithm achieves approximation ratio $1/2$ in the adversarial input setting (see \cite{BesserP17} and references therein). Algorithm~\ref{algorithm:mingreedy} shows the pseudocode of a natural adaptation of \textsc{MinGreedy} to bipartite graphs $G=(U,V,E)$. The algorithm picks a random node of minimum degree from $U$ and matches it to a random neighbor from $V$. Observe that this algorithm can be realized as a 
fully randomized priority algorithm, where the ordering of the input items is by increasing degree with ties broken randomly.

\begin{algorithm}
\caption{The \textsc{MinGreedy} algorithm.}\label{algorithm:mingreedy}
\begin{algorithmic}
\Procedure{MinGreedy}{$G=(U,V,E)$}
\Repeat
\State{Let $d = \min \{\deg(i) \mid i \in U \}$ and $S = \{i \in U \mid \deg(i) = d\}$}
\State{Pick $i \in S$ uniformly at random}
\State{Let $N(i)$ be the set of neighbors of $i$}
\If{$N(i) = \emptyset$}
\State{$i$ remains unmatched}
\State{Delete $i$ from $G$}
\Else
\State{Match $i$ with $j$ chosen from $N(i)$ uniformly at random}
\State{Delete $i$ and $j$ from $G$}
\EndIf
\Until{$U=\emptyset$}
\EndProcedure
\end{algorithmic}
\end{algorithm}

Karp, Vazirani, Vazirani~\cite{KarpVV90} exhibited a family of graphs, on which \textsc{Ranking} achieves its worst approximation ratio $1-1/e$. The biadjacency matrix of the $n$th graph in this family is an $n \times n$ upper triangular matrix with all $1$s at and above the diagonal. Interestingly, \textsc{MinGreedy} finds a perfect matching on these graphs. Thus, it is natural to consider the performance of an algorithm that combines \textsc{Ranking} and \textsc{MinGreedy}. This is our proposed \textsc{MinRanking} algorithm (see Algorithm~\ref{algorithm:minranking}). In \textsc{MinRanking}, a random permutation $\pi$ of vertices $V$ is initially sampled. Then, nodes in $U$ are processed in the increasing order of their current degrees with ties broken randomly. When a node $u$ is processed, it is matched with its neighbor appearing earliest in the ordering  $\pi$.

\begin{algorithm}
\caption{The \textsc{MinRanking} algorithm.}\label{algorithm:minranking}
\begin{algorithmic}
\Procedure{MinRanking}{$G=(U,V,E)$}
\State{Pick a permutation $\pi : V\rightarrow V$ uniformly at random}
\Repeat
\State{Let $d = \min \{\curdeg(i) \mid i \in U \}$ and $S = \{i \in U \mid \curdeg(i) = d\}$}
\State{Pick $i \in S$ uniformly at random}
\State{Let $N(i)$ be the set of neighbors of $i$}
\If{$N(i) = \emptyset$}
\State{$i$ remains unmatched}
\State{Delete $i$ from $G$}
\Else
\State{Match $i$ with $j = \argmin_k \{\pi(k) \mid k \in N(i)\}$}
\State{Delete $i$ and $j$ from $G$}
\State{Update $\curdeg$}
\EndIf
\Until{$U=\emptyset$}
\EndProcedure
\end{algorithmic}
\end{algorithm}

\textsc{MinrRanking} modifies \textsc{MinGreedy} in the same way 
that the online 
\textsc{Ranking} algorithm modifies the seemingly more natural online randomized algorithm that simply matches an online vertex to an available neighbor uniformly at random which surprisingly was shown to be (asymptotically) a 
$1/2$ approximation. So it is hopeful that \textsc{MinRanking} can improve upon \textsc{MinGreedy}.
Like \textsc{MinGreedy}, our algorithm can be implemented and analyzed in the 
fully randomized adaptive priority model \cite{BorodinNR2003}. Since our algorithm is a generalization of \textsc{Ranking}, its asymptotic approximation ratio is at least $1-1/e \approx 0.6321$. The main result of this section is that the asymptotic approximation ratio of this algorithm is at most $1/2 + 1/(2e)\approx 0.6839$, as witnessed by the family of graphs due to Besser and Poloczek \cite{BesserP17}. 

\begin{theorem}\label{thm:priority_intro_thm}
\[ 1-\frac{1}{e} \le \AR(\textsc{MinRanking}) \le \frac{1}{2} + \frac{1}{2e}.\]
\end{theorem}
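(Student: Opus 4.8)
The lower bound $\AR(\textsc{MinRanking}) \ge 1 - 1/e$ is immediate: \textsc{MinRanking} simulates \textsc{Ranking} in the sense that, for any fixed adversarial ordering of $U$, the priority ordering chosen by \textsc{MinRanking} is just \emph{some} ordering of the online vertices, and conditioned on that ordering the matching decisions are exactly those of \textsc{Ranking} with the random permutation $\pi$ of $V$. Since \textsc{Ranking} achieves $1-1/e$ against an adversarial ordering of $U$ (Karp--Vazirani--Vazirani), so does \textsc{MinRanking}. The real content is the upper bound $\AR(\textsc{MinRanking}) \le 1/2 + 1/(2e)$, and the plan is to exhibit the Besser--Poloczek family of bipartite graphs (Figure~\ref{figure:besser-poloczek}) and compute $\mathbb{E}[\textsc{MinRanking}(G)]$ on it.

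The approach I would take is the following. First, describe the $n$th Besser--Poloczek graph $G_n$ explicitly, identify its unique (or near-unique) perfect matching so that $\OPT(G_n) = |V| = \Theta(n)$, and understand the min-degree structure: which online vertices have minimum current degree at the start, and how that structure evolves as vertices get deleted. The key point exploited by Besser--Poloczek in the deterministic setting is that giving priority to small-degree vertices forces the algorithm down a bad path; here we must track the \emph{randomized} version. So second, I would set up a recursion or a direct probabilistic computation: condition on the random tie-breaking among minimum-degree online vertices and on the random permutation $\pi$ of $V$, and compute the expected size of the matching produced. I expect the graph to be built from a chain or block structure of ``parallel-edge gadgets'' glued to high-degree vertices (much like the $G_{L,N}$ and $G_k$ constructions earlier in the paper), engineered so that with constant probability a linear fraction of offline vertices become permanently unmatchable. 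Third, I would compute the resulting expectation: the algorithm is guaranteed to match some ``safe'' half of the vertices, and an additional $1/e$-fraction of the remaining half is matched in expectation because of the \textsc{Ranking}-style random permutation on the subgadgets that survive — yielding $\mathbb{E}[\textsc{MinRanking}(G_n)] \le (1/2 + 1/(2e))\,\OPT(G_n) + o(\OPT(G_n))$.

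The main obstacle, and where the bulk of the work lies, is the second step: exactly analyzing how the minimum-degree priority rule interacts with the random permutation $\pi$ on the Besser--Poloczek graphs. Unlike pure \textsc{Ranking}, the priority ordering here is itself data-dependent and changes adaptively as vertices are deleted (the $\curdeg$ values shift), so one cannot simply quote a known \textsc{Ranking} bound on a fixed subgraph. I would handle this by arguing that the min-degree rule deterministically peels off the parallel-edge gadgets first — because gadget vertices have degree smaller than the ``hub'' vertices — and that on the gadgets the behaviour is forced (parallel edges leave no choice), so the only randomness that matters for the surviving portion is the relative order of a small set of offline ``hub'' vertices under $\pi$; on that surviving portion the algorithm degenerates to \textsc{Ranking} on a triangular-type instance, contributing the $1/e$ term. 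The delicate part is bookkeeping the degree dynamics carefully enough to justify ``the gadgets are always processed first'' and to show the bad event (a linear loss) happens with the stated probability; a second subtlety is that \textsc{MinRanking} only partially breaks ties (it picks \emph{some} min-degree vertex), so as in the \textsc{MinDegree} analysis one should check the bound holds for the adversarially worst tie-break, i.e.\ the claimed inequality is over the worst-case choice among minimum-degree vertices as well as over the worst graph in the family.
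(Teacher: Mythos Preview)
Your lower bound argument is fine and matches the paper. Your high-level plan for the upper bound --- analyze \textsc{MinRanking} on the Besser--Poloczek graphs $G_b$ --- is also correct. But the concrete mechanism you propose for where the $1/e$ comes from is wrong, and the gap is substantive.

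You write that on the gadgets ``the behaviour is forced (parallel edges leave no choice)'' and that the only randomness that matters is afterwards, on a ``triangular-type instance'' among hub vertices. This misreads the Besser--Poloczek construction. The minimum-degree vertices that get processed first are those in $S_{2,L}$; each such vertex has degree $b+1$, consisting of \emph{one} parallel edge into $S_{1,R}$ \emph{and} $b$ biclique edges into its block $S_{2,R}^{(i)}$. So nothing is forced: the random permutation $\pi$ decides, for each $S_{2,L}^{(i)}$ vertex in turn, whether it grabs its parallel neighbour in $S_{1,R}$ or one of the $b$ biclique neighbours in $S_{2,R}^{(i)}$. The entire $1/e$ term comes from this choice, not from any subsequent triangular \textsc{Ranking} instance. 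After the $S_{2,L}$ blocks are exhausted, the remaining matching between $S_{1,L}$ and $S_{2,R}$ is essentially determined (up to lower-order terms) by how many $S_{2,R}$ vertices are still free --- which is exactly the number of parallel-edge matches made earlier.

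What the paper actually does is isolate the subgraph induced by $S_{2,L}^{(i)} \cup S_{2,R}^{(i)} \cup S_{1,R}^{(i)}$, which it calls $H_{b,b}$: $b$ online vertices, each with one parallel edge and a biclique to $b$ offline vertices. It shows (via a symmetry lemma) that the order in which \textsc{MinRanking} processes the $U$-side of $H_{b,b}$ is irrelevant, then proves that \textsc{MinRanking} on $H_{n,n}$ produces the same matching as a ``\textsc{RHSGreedy}'' process that scans $V_1 \cup V_2$ in the order given by $\pi$. The number of parallel-edge matches then becomes a Markov chain whose drift yields the ODE $dy/dx = -x/(2x+y)$, and solving it gives $n/e + o(n)$ parallel matches. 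Summing over the $b$ blocks gives $\sum X_i \approx b^2/e$, hence total matching $\approx b^2 + b^2/e$ against $\OPT = 2b^2 + 2b$. None of this is a reduction to \textsc{Ranking} on a triangular graph; the differential-equation / Markov-chain step is the technical heart, and your proposal does not anticipate it.
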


This negative result shows that \textsc{MinRanking} falls short of the known bound for \textsc{Ranking} in the ROM model, where it achieves approximation ratio $0.696$~\cite{Mahdian2011}. From our result it follows that a deterministic ordering of the online nodes by non-decreasing degree (breaking ties by the given adversarial ordering of those nodes) will also fall short. That is (similar to the result in \cite{PenaB16} for deterministic decisions), a naive randomized ordering  can be better than a seemingly informed deterministic ordering.

\subsection{The Besser-Poloczek Graph Construction}
In this section we describe the family of bipartite graphs constructed by Besser and Poloczek in~\cite{BesserP17}. Each graph in the family is indexed by an integer parameter $b$. Let $G_b(L,R,E)$ denote the $b$th graph in the family. A note about notation: we use $L$ and $R$ instead of $U$ and $V$ in the definition of $G_b$. This is because we want to reserve $U$ and $V$  for a different graph that will play the central role in the analysis of \textsc{MinRanking}. Continuing with the definition of $G_b(L,R,E)$, each of the two sides $L$ and $R$ is partitioned into three sets:
\begin{itemize}
\item $L = S_{1,L} \dot\cup S_{2,L} \dot\cup S_{3,L}$
\item $R = S_{1,R} \dot\cup S_{2,R} \dot\cup S_{3,R}$
\end{itemize}
Blocks $S_{2,L}$ and $S_{2,R}$ are further partitioned into $b$ sets each:
\begin{itemize}
\item $S_{2,L} = \dot\bigcup_{i=1}^b S_{2,L}^{(i)}$
\item $S_{2,R} = \dot\bigcup_{i=1}^b S_{2,R}^{(i)}$
\end{itemize}
For convenience, we give numeric names to each vertex in $L$ and $R$:
\begin{itemize}
\item $S_{1,W} = \{1_{W}, 2_{W}, \ldots, b^2_{W}\}$ where $W\in \{L,R\}$
\item $S_{2,W}^{(i)} = \{(b^2+(i-1)b+1)_W,\ldots, (b^2+i b)_W\}$ where $W\in \{L,R\}$
\item $S_{3,W}= \{(2b^2+1)_W,\ldots, (2b^2+2b)_W\}$  where $W\in \{L,R\}$
\end{itemize}
Thus, we have $|L|=|R|=2b^2+2b$. Next, we describe the edges of $G_b$:
\begin{itemize}
\item There is a biclique between $S_{3,L}$ and $S_{1,R}$ and between $S_{3,R}$ and $S_{1,L}$.
\item $S_{1,R}$ and $S_{2,L}$ are connected by ``parallel'' edges, i.e., $\{i_R,(b^2+i)_L\} \in E$ for all $i \in [b^2]$.
\item $S_{1,L}$ and $S_{2,R}$ are connected by ``parallel'' edges, i.e., $\{i_L,(b^2+i)_R\} \in E$ for all $i \in [b^2]$.
\item $S_{3,L}$ and $S_{3,R}$ are connected by ``parallel'' edges, i.e., $\{i_R,i_L\} \in E$ for all $i \in \{2b^2+1,\ldots, 2b^2+2b\}$.
\item There is a biclique between $S_{2,L}^{(i)}$ and $S_{2,R}^{(i)}$ for each $i \in [b]$.
\end{itemize}

Figure~\ref{figure:besser-poloczek} depicts the Besser-Poloczek graph with $b=3$. We note that $G_b$ has a perfect matching.

\begin{claim}
$G_b$ has a perfect matching, i.e., a matching of size $2b^2+2b$.
\end{claim}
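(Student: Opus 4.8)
The plan is to exhibit an explicit perfect matching using only the three families of ``parallel'' edges, ignoring the bicliques entirely. First I would match $S_{1,R}$ to $S_{2,L}$ along the parallel edges $\{i_R,(b^2+i)_L\}$ for $i\in[b^2]$; since $S_{2,L}=\dot\bigcup_{i=1}^b S_{2,L}^{(i)}=\{(b^2+1)_L,\ldots,(2b^2)_L\}$, these $b^2$ edges form a perfect matching between $S_{1,R}$ and $S_{2,L}$. Symmetrically, the edges $\{i_L,(b^2+i)_R\}$ for $i\in[b^2]$ form a perfect matching between $S_{1,L}$ and $S_{2,R}$. Finally, the edges $\{i_R,i_L\}$ for $i\in\{2b^2+1,\ldots,2b^2+2b\}$ form a perfect matching between $S_{3,L}$ and $S_{3,R}$.

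Taking the union of these three edge sets, every vertex of $L=S_{1,L}\dot\cup S_{2,L}\dot\cup S_{3,L}$ and every vertex of $R=S_{1,R}\dot\cup S_{2,R}\dot\cup S_{3,R}$ is covered exactly once: $S_{1,L}$ and $S_{2,R}$ by the second family, $S_{2,L}$ and $S_{1,R}$ by the first, and $S_{3,L}$ and $S_{3,R}$ by the third, and these three pieces partition $L$ and $R$ respectively. Hence the union is a matching saturating both sides, i.e.\ a perfect matching, of size $b^2+b^2+2b=2b^2+2b=|L|=|R|$, as claimed.

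There is essentially no obstacle here; the only thing to be careful about is the bookkeeping --- verifying that the index ranges in the three parallel-edge families exactly exhaust the corresponding blocks and that those blocks themselves partition $L$ and $R$, which is immediate from the vertex-naming conventions ($|S_{1,W}|=b^2$, $|S_{2,W}|=b^2$, $|S_{3,W}|=2b$). One could alternatively invoke Hall's condition, but that would be overkill since an explicit saturating matching is already in hand.
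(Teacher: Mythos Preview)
Your proof is correct and takes essentially the same approach as the paper: both exhibit the explicit perfect matching given by the three families of parallel edges between $S_{1,R}$ and $S_{2,L}$, between $S_{1,L}$ and $S_{2,R}$, and between $S_{3,L}$ and $S_{3,R}$. Your version is simply more detailed about the index bookkeeping than the paper's one-line proof.
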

\begin{proof}
Using the parallel edges, match the vertices in $S_{1,R}$ with the vertices in $S_{2,L}$, the vertices in $S_{2,R}$ with the vertices in $S_{1,L}$, and the vertices in $S_{3,L}$ with the vertices in $S_{3,R}$.
\end{proof}

\begin{figure}[h]
\centering
\scalebox{0.75}{

\begin{tikzpicture}

\def \b {3}
\def \bb {6}
\def \a {9}
\def \offset {2}

\foreach \s in {1,...,\bb}
{
  \node (invl_\s) at (0,{\s+1.5}) {};
  \node (invr_\s) at ({2*\offset+11},{\s+1.5}) {};
}

\foreach \s in {1,...,\bb}
{
  \node[fill=black,circle,draw] (s3l_\s) at ({\offset+0},{\s+1.5}) {};
}

\foreach \s in {1,...,\a}
{
  \node[fill=white,circle,draw] (s1l_\s) at ({\offset+2},\s) {};
}

\foreach \s in {1,...,\a}
{
  \node[fill=black,circle,draw] (s2l_\s) at ({\offset+4},\s) {};
}

\foreach \s in {1,...,\a}
{
  \node[fill=white,circle,draw] (s2r_\s) at ({\offset+7},\s) {};
}

\foreach \s in {1,...,\a}
{
  \node[fill=black,circle,draw] (s1r_\s) at ({\offset+9},\s) {};
}

\foreach \s in {1,...,\bb}
{
  \node[fill=white,circle,draw] (s3r_\s) at ({\offset+11},{\s+1.5}) {};
}

\begin{pgfonlayer}{background}
\foreach \s in {1,...,\a}
{
  \draw (s2l_\s) -- (s1l_\s);
  \draw (s2r_\s) -- (s1r_\s);
}

\foreach \s in {1,...,\bb}
{
  \draw (s3l_\s) -- (invl_\s);
  \draw (s3r_\s) -- (invr_\s);
  \foreach \t in {1,...,\a}
  {
    \draw (s3l_\s) -- (s1l_\t);
    \draw (s3r_\s) -- (s1r_\t);
  }
}

\foreach \s in {1,...,\b}
{
	\foreach \t in {1,...,\b}
	{
		\foreach \q in {1,...,\b}
		{
			 \pgfmathsetmacro\resl{(\s-1)*\b+\t}
			 \pgfmathsetmacro\resr{(\s-1)*\b+\q}
			 \draw (s2r_\resr) -- (s2l_\resl);
		}
	}
}
\end{pgfonlayer}

\draw[dashed,rounded corners] ($(s3l_1)+(-.5,-.5)$) rectangle ($(s3l_\bb)+(+.5,+.5)$);
\draw[dashed,rounded corners] ($(s2l_1)+(-.5,-.5)$) rectangle ($(s2l_\a)+(+.5,+.5)$);
\draw[dashed,rounded corners] ($(s1l_1)+(-.5,-.5)$) rectangle ($(s1l_\a)+(+.5,+.5)$);
\draw[dashed,rounded corners] ($(s3r_1)+(-.5,-.5)$) rectangle ($(s3r_\bb)+(+.5,+.5)$);
\draw[dashed,rounded corners] ($(s2r_1)+(-.5,-.5)$) rectangle ($(s2r_\a)+(+.5,+.5)$);
\draw[dashed,rounded corners] ($(s1r_1)+(-.5,-.5)$) rectangle ($(s1r_\a)+(+.5,+.5)$);

\node[label] at ($ (s3l_1)+(0,-1) $) {$ S_{3,L} $};
\node[label] at ($ (s2l_1)+(0,-1) $) {$ S_{2,L} $};
\node[label] at ($ (s1l_1)+(0,-1) $) {$ S_{1,R} $};
\node[label] at ($ (s3r_1)+(0,-1) $) {$ S_{3,R} $};
\node[label] at ($ (s2r_1)+(0,-1) $) {$ S_{2,R} $};
\node[label] at ($ (s1r_1)+(0,-1) $) {$ S_{1,L} $};
\end{tikzpicture}

}

\caption{The Besser-Poloczek graph with parameter $b=3$. The edges on the left wrap around on the right connecting the nodes in $S_{3,L}$ to the nodes in $S_{3,R}$.}\label{figure:besser-poloczek}
\end{figure}
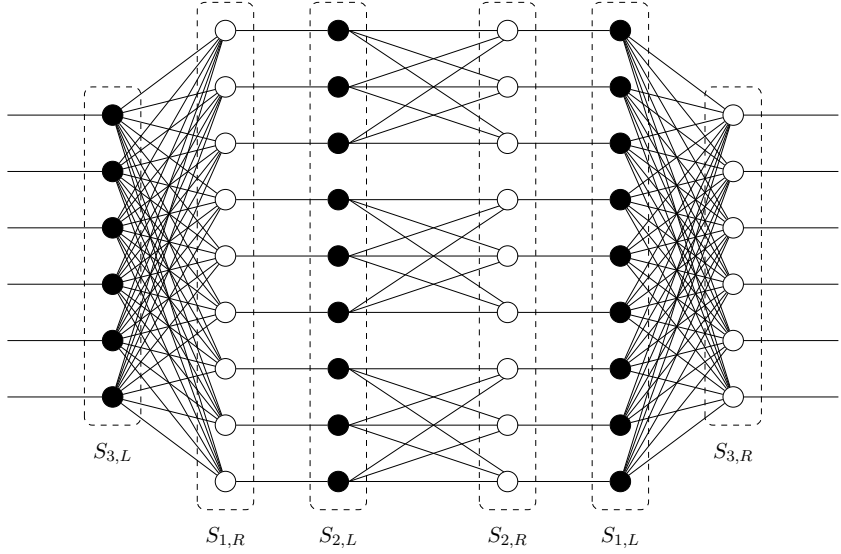

Besser and Poloczek~\cite{BesserP17} show that the graphs $G_b$ are asymptotically hardest for \textsc{MinGreedy}.

\begin{theorem}[Besser,Poloczek~\cite{BesserP17}]
The expected size of a matching constructed by \textsc{MinGreedy} on $G_b$ is $b^2+o(b^2)$. Thus, the worst case asymptotic approximation ratio of \textsc{MinGreedy} is $1/2$.
\end{theorem}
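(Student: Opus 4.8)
The plan is to follow \textsc{MinGreedy}'s execution on $G_b$ layer by layer, exploiting the fact that the three ``layers'' of $G_b$ have wildly different degrees. A direct count gives the initial degrees: every vertex of $S_{2,L}\cup S_{2,R}$ has degree $b+1$ ($b$ biclique neighbours in the opposite $S_2$-block plus one parallel pendant), every vertex of $S_{1,L}\cup S_{1,R}$ has degree $2b+1$, and every vertex of $S_{3,L}\cup S_{3,R}$ has degree $b^2+1$. Hence for $b\ge 2$ the minimum-degree set is exactly $S_2$, and \textsc{MinGreedy} starts there. The first thing I would prove is that \textsc{MinGreedy} exhausts $S_2$ entirely before ever touching an $S_1$- or $S_3$-vertex, and that it handles the $b$ blocks $S_{2,L}^{(i)}\cup S_{2,R}^{(i)}$ essentially one at a time: once the algorithm removes its first pair from block $i$, that block's surviving core vertices have degree $\le b$, strictly below the degree $b+1$ still held by untouched blocks, while processing block $i$ only lowers degrees in the $S_1$- and $S_3$-layers by a total of $O(b\log b)$ (bounded in the next step), leaving those layers far above degree $b+1$.

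Next I would analyse the processing of a single block. Together with its pendants, a block is a copy of $K_{b,b}$ with one pendant edge dangling off each of its $2b$ core vertices, the pendants landing on distinct vertices of the opposite $S_1$-block whose only \emph{other} live neighbours lie in the still-untouched $S_3$-layer. When \textsc{MinGreedy} picks a core vertex $v$ of current degree $d+1$ ($d$ live core neighbours plus its pendant), it matches $v$ to the pendant --- an \emph{escape} --- with probability $1/(d+1)$, and to a core neighbour otherwise. In the balanced case (no escapes) the escape probability after $k$ internal matches is $1/(b-k+1)$, so the expected number of escapes in a block is at most $\sum_{k=0}^{b-1}1/(b-k+1)=\sum_{j=2}^{b+1}1/j=O(\log b)$; a coupling/stochastic-domination argument then shows that the true run --- in which an escape removes a core vertex from only one side of the biclique, and so can only push later degrees down --- still has $O(\log b)$ expected pendant edges per block. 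Over all $b$ blocks this is $O(b\log b)=o(b^2)$ pendant edges in expectation, and since every one of the $2b^2$ core vertices ends up matched (a picked core vertex always has its pendant available, because that pendant's only non-$S_3$ neighbour is the core vertex itself, so no core vertex is discarded), \textsc{MinGreedy} places $b^2-O(b\log b)$ core-to-core edges, i.e.\ at most $b^2+O(b\log b)$ edges in total touch $S_2$.

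Finally I would read off the residual graph. Any vertex of $S_{1,L}$ still unmatched after the $S_2$-phase must have had its unique pendant in $S_{2,R}$ consumed by a core-to-core match (otherwise that pendant would have been matched to it), so its only live neighbours now lie in $S_{3,R}$; symmetrically for $S_{1,R}$ and $S_{3,L}$. Thus the $4b$ vertices of $S_{3,L}\cup S_{3,R}$ form a vertex cover of the residual graph, so the maximal matching \textsc{MinGreedy} finishes there has size at most $4b$. Hence $\mathbb{E}[\textsc{MinGreedy}(G_b)]\le b^2+O(b\log b)+4b=b^2+o(b^2)$. Since \textsc{MinGreedy} always returns a maximal matching, of size at least $\OPT(G_b)/2=b^2+b$, and since $\OPT(G_b)=2b^2+2b$, we obtain $\mathbb{E}[\textsc{MinGreedy}(G_b)]=b^2+o(b^2)$; the ratio on this family therefore tends to $1/2$, and combined with the trivial $1/2$ lower bound for any maximal-matching algorithm this gives $\AR(\textsc{MinGreedy})=1/2$.

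The main obstacle is the middle step: making the ``$O(\log b)$ escapes per block'' bound rigorous. An escape removes a core vertex from just one side of the block's biclique, so the core becomes unbalanced and the sequence of degrees of subsequently picked vertices no longer follows the clean balanced schedule; one must show this random sequence still passes through $\{1,2,\dots,b\}$ with each value visited $O(1)$ times in expectation, and one must also rule out that \textsc{MinGreedy}'s global minimum-degree rule, once degrees have been perturbed by escapes, starts interleaving several blocks in a way that inflates the escape count. Once that domination/concentration estimate is secured, the remaining degree bookkeeping and the vertex-cover argument are routine.
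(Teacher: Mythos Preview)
The paper does not prove this theorem; it is quoted as a result of Besser and Poloczek and cited without argument, so there is no ``paper's own proof'' to compare against.

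Your outline is sound and is essentially the standard route to this result. Two remarks. First, your residual argument is written for the \emph{two-sided} \textsc{MinGreedy} (minimum-degree vertex chosen from either side), which is indeed what Besser and Poloczek analyse; note however that the paper's Algorithm~\ref{algorithm:mingreedy} is a one-sided adaptation picking only from $U=L$. For that variant your vertex-cover step breaks: $S_{2,R}$ vertices are never \emph{picked}, so an unmatched $S_{1,L}$ vertex may still have a live pendant in $S_{2,R}$, and $S_{3,L}\cup S_{3,R}$ no longer covers the residual graph. The patch is easy: the $S_{2,L}$-phase contributes exactly $b^2$ edges; in the residual, $S_{1,L}$ can place at most $|S_{3,R}|=2b$ edges into $S_{3,R}$ plus at most $\sum_i e_i$ pendant edges into leftover $S_{2,R}$, and $S_{3,L}$ contributes at most $2b$ more, for $b^2+4b+\sum_i e_i$ total.

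Second, the obstacle you flag --- the $O(\log b)$ escape bound per block --- has a clean resolution in the one-sided setting, where the unbalancing you worry about does not occur. An escape removes one $S_{2,L}^{(i)}$ vertex and one $S_{1,R}$ vertex, neither of which is a neighbour of the surviving $S_{2,L}^{(i)}$ vertices; hence after $k$ core matches every remaining $S_{2,L}^{(i)}$ vertex has degree exactly $b-k+1$, regardless of the number of escapes so far. Since the $j$th pick in the block has seen at most $j-1$ core matches, its degree is at least $b-j+2$ and its escape probability at most $1/(b-j+2)$, giving $\mathbb{E}[e_i]\le\sum_{j=1}^{b}1/(b-j+2)=H_{b+1}-1$. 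The interleaving worry is also mild: a block can only be left before its first core match (degree still $b+1$), and the moment any block records a core match its degrees drop strictly below $b+1$, so it is completed before another block is entered.
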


In the next section we shall see that \textsc{MinRanking} finds a significantly larger matching on $G_b$.

\subsection{Analysis of \textsc{MinRanking} on the Besser-Poloczek Graphs}

This section is devoted to proving the following theorem, which immediately implies Theorem~\ref{thm:priority_intro_thm}.

\begin{theorem}
\label{theorem:main_theorem}
\textsc{MinRanking} running on the family of graphs $\{G_b\}$ has the asymptotic approximation ratio exactly $\frac{1}{2} + \frac{1}{2e}$.
\end{theorem}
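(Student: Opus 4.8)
The plan is to trace a run of \textsc{MinRanking} on $G_b$ through three phases governed by the current minimum degree on the processed side, reduce the size of the matching it outputs to a single random quantity --- the number of ``escapes'' out of the middle bicliques --- and then evaluate the expectation of that quantity by a \textsc{Ranking}-style recursion. I take the side processed in increasing-degree order to be $L$ and the side carrying the random permutation $\pi$ to be $R$; since $G_b$ is symmetric under interchanging $L$ and $R$, this is without loss of generality, and it is precisely why separate names $U,V$ are reserved for the auxiliary graph below.

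\textbf{Phase decomposition.} In $G_b$, the vertices of $S_{2,L},S_{2,R}$ have degree $b+1$, those of $S_{1,L},S_{1,R}$ have degree $2b+1$, and those of $S_{3,L},S_{3,R}$ have degree $b^2+1$. Processing a vertex of $S_{2,L}$ can only lower the degrees of other $S_{2,L}$ vertices while keeping every degree in $S_{1,L}$ and $S_{3,L}$ above $b+1$, so \textsc{MinRanking} processes all of $S_{2,L}$ (Phase 1), then all of $S_{1,L}$ (Phase 2), then $S_{3,L}$ (Phase 3). Every $v\in S_{2,L}$ is matched in Phase 1, because its parallel partner in $S_{1,R}$ is still available when $v$ is processed; call $v$ an \emph{escape} if it matches that partner rather than a vertex of $S_{2,R}$ inside its biclique, and let $X$ be the total number of escapes. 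Then Phase 1 consumes $b^2-X$ vertices of $S_{2,R}$ and $X$ vertices of $S_{1,R}$, the $X$ surviving vertices of $S_{2,R}$ are exactly the parallel partners of the $X$ vertices of $S_{1,L}$ whose partners survived, and a deterministic trace of Phases 2--3 --- the $b^2-X$ vertices of $S_{1,L}$ with dead partners have degree $2b$, so they go first, exhaust the $2b$ vertices of $S_{3,R}$, and their surplus stays unmatched; the $X$ with living partners then match along parallel edges; finally the $2b$ vertices of $S_{3,L}$ take $2b$ of the $b^2-X$ still-available vertices of $S_{1,R}$ --- yields $|\textsc{MinRanking}(G_b)|=b^2+X+4b$ whenever $X\le b^2-2b$. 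Since $X$ concentrates well below $b^2-2b$ and $\OPT(G_b)=2b^2+2b$, the theorem reduces to showing $\mathbb{E}[X]=b^2/e+o(b^2)$.

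\textbf{Counting escapes.} The middle of $G_b$ is a disjoint union, over $i\in[b]$, of gadgets, each a biclique $S_{2,L}^{(i)}\times S_{2,R}^{(i)}=K_{b,b}$ together with a perfect matching from $S_{2,L}^{(i)}$ to $b$ distinct vertices of $S_{1,R}$; this disjoint union is the auxiliary graph on $(U,V)$, and on it \textsc{MinRanking} makes its Phase-1 decisions exactly as \textsc{Ranking} does with offline ranking $\pi$. Writing $X=\sum_i X_i$, the key structural step is that $X_i$ is invariant under the (dynamically and randomly determined) order in which the $b$ left vertices of gadget $i$ are processed: non-escaping left vertices always consume the $\pi$-earliest still-available right vertex, so after $t$ of them have been processed and $C_t$ did not escape, the available right vertices are the $\pi$-latest $b-C_t$ of $S_{2,R}^{(i)}$, and the next left vertex escapes iff its parallel partner $\pi$-precedes all of them. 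Order-invariance makes all $\mathbb{E}[X_i]$ equal (by symmetry of the gadgets and uniformity of $\pi$) and lets us fix any processing order; then the escape probability of the $(t+1)$st vertex given $C_t$ equals, up to lower-order terms, the probability that its partner has $\pi$-rank at most $C_t+1$ among the $b+1$ vertices consisting of $S_{2,R}^{(i)}$ and the partner, that is $(C_t+1)/(b+1)$. Solving the resulting linear recursion $\mathbb{E}[C_{t+1}]=\mathbb{E}[C_t]\left(1-\tfrac{1}{b+1}\right)+\tfrac{b}{b+1}$ gives $\mathbb{E}[C_b]=b-b(1-\tfrac{1}{b+1})^{b}+o(b)$, hence $\mathbb{E}[X_i]=b-\mathbb{E}[C_b]=b(1-\tfrac{1}{b+1})^{b}+o(b)=b/e+o(b)$ and $\mathbb{E}[X]=b^2/e+o(b^2)$. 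Combining with the phase count, $\mathbb{E}[|\textsc{MinRanking}(G_b)|]/\OPT(G_b)\to(1+1/e)/2=\tfrac12+\tfrac1{2e}$.

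\textbf{Main obstacle.} The phase bookkeeping and the concentration of $X$ (a bounded-difference function of $\pi$, essentially a sum over $b$ nearly-independent gadgets) are routine. The delicate part, which I expect to be the real work, is twofold: decoupling one gadget's escape count from the interleaved, degree-driven processing of \textsc{MinRanking} across all $b$ gadgets during Phase 1 (this is exactly where order-invariance is needed), and justifying the identity $\Pr[\text{escape}\mid C_t]=(C_t+1)/(b+1)$ with the care familiar from the Karp--Vazirani--Vazirani / Birnbaum--Mathieu analyses of \textsc{Ranking} --- the conditional distribution of the next partner's $\pi$-rank among $S_{2,R}^{(i)}$ is exactly uniform only relative to the right sigma-algebra, or after revealing partners one at a time, and the $o(\cdot)$ errors accumulated over $b$ steps have to be controlled.
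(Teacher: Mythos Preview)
Your overall architecture matches the paper's: a phase decomposition of \textsc{MinRanking} on $G_b$ reducing the matching size to $b^2+X+O(b)$, followed by an analysis of the per-gadget escape count $X_i$ on the auxiliary graph $H_{b,b}$. Your Phase~2/3 bookkeeping is in fact more explicit than the paper's, and it is correct on the high-probability event $X\le b^2-2b$.

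There is, however, a genuine gap in your ``order-invariance'' step. You assert that $X_i$ is invariant under the processing order of $S_{2,L}^{(i)}$ for each fixed $\pi$. This is false. Take $b=3$ with $\pi$-order $p_1<r_1<p_2<r_2<p_3<r_3$ on $S_{2,R}^{(i)}\cup S_{1,R}^{(i)}$: processing $v_1,v_2,v_3$ gives one escape ($v_1$), whereas processing $v_3,v_2,v_1$ gives two escapes ($v_2$ and $v_1$). What \emph{is} true, and what the paper uses, is a distributional statement: within a block all remaining left vertices always share the same current degree (the paper's Lemma~5.2), so the tie-breaking induces a uniformly random processing order independent of $\pi$; by relabeling, averaging over this random order is the same as fixing the order and averaging over $\pi$. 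Your proposal needs exactly this symmetry argument in place of pathwise invariance, and once you have it the ``decoupling across gadgets'' concern you flag also dissolves, since different gadgets touch disjoint vertex sets on the right.

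For the actual evaluation of $\mathbb{E}[X_i]$, you and the paper diverge. The paper shows that \textsc{MinRanking} on $H_{b,b}$ produces the same matching as a right-hand-side greedy that reveals $V_1\cup V_2$ in $\pi$-order; this yields an \emph{exact} Markov chain with transition probabilities $x/(2x+y)$ and $(x+y)/(2x+y)$, whose fluid limit is solved via the differential-equation method to get $b/e+o(b)$. Your route---fix the $U$-order and track $C_t$ with the heuristic $\Pr[\text{escape}\mid C_t]=(C_t+1)/(b+1)$---is more elementary and lands on the same constant, but as you yourself note, the conditional law of $p_{t+1}$'s rank given $C_t$ is not exactly uniform (the $p_j$'s share the permutation with the $r_k$'s), so the recursion is only approximate. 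The paper's \textsc{RHSGreedy} reduction sidesteps this entirely and is worth knowing: it converts the awkward conditioning into an honest Markov chain.
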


We shall prove the above theorem in two steps. In the first step, we get tight bounds on the expected number of edges of relevant type in a matching constructed by \textsc{MinRanking} running on certain induced subgraphs of the Besser-Poloczek graphs. In the second step, we show how to combine the results for subgraphs and derive the expected size of a matching of the entire Besser-Poloczek graph.

We begin by analyzing the expected performance of \textsc{MinRanking} on a particular bipartite graph $H_{n,k}(U,V,W)$ parameterized by two integers $n$ and $k$ where $k \le n$. The relevance of this analysis to Theorem~\ref{theorem:main_theorem} will be evident later. We use a different capital letter for these graphs to distinguish them from the Besser-Poloczek graphs. The left side consists of $n$ vertices $U = \{1_U, \ldots, n_U\}$. The right side is partitioned into two blocks $V = V_1 \dot\cup V_2$ such that the first block contains $k$ vertices $V_1 = \{1_{V_1}, \ldots, k_{V_1}\}$ and the second block contains $n$ vertices $V_2=\{1_{V_2},\ldots, n_{V_2}\}$. The edges consist of biclique edges between $U$ and $V_1$ and parallel edges between $U$ and $V_2$, i.e., we have $\{i_U,i_{V_2}\} \in W$ for $i \in [n]$. Figure~\ref{figure:hnk} depicts this bipartite graph for $n = 5$ and $k = 3$.

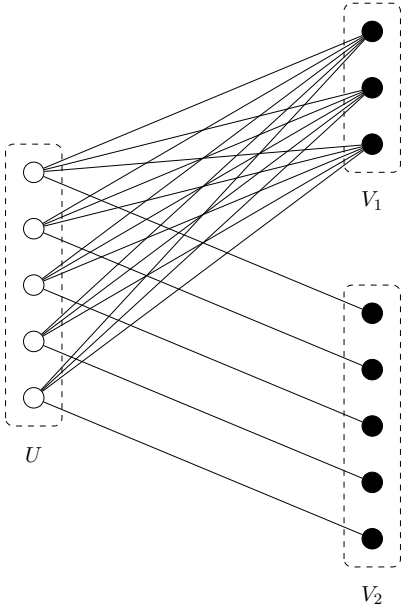
\begin{figure}[h]
\centering
\scalebox{0.75}{

\begin{tikzpicture}

\def \n {5}
\def \k {3}

\foreach \s in {1,...,\n}
{
  \node[fill=white,circle,draw] (u_\s) at (0,{\s+2.5}) {};
}

\foreach \s in {1,...,\n}
{
  \node[fill=black,circle,draw] (v2_\s) at (6,{\s}) {};
}

\foreach \s in {1,...,\k}
{
  \node[fill=black,circle,draw] (v1_\s) at (6,{\s+\k+4}) {};
}

\begin{pgfonlayer}{background}
\foreach \s in {1,...,\n}
{
  \draw (u_\s) -- (v2_\s);
}

\foreach \s in {1,...,\n}
{
  \foreach \t in {1,...,\k}
  {
    \draw (u_\s) -- (v1_\t);
  }
}
\end{pgfonlayer}

\draw[dashed,rounded corners] ($(u_1)+(-.5,-.5)$) rectangle ($(u_\n)+(+.5,+.5)$);
\draw[dashed,rounded corners] ($(v1_1)+(-.5,-.5)$) rectangle ($(v1_\k)+(+.5,+.5)$);
\draw[dashed,rounded corners] ($(v2_1)+(-.5,-.5)$) rectangle ($(v2_\n)+(+.5,+.5)$);

\node[label] at ($ (u_1)+(0,-1) $) {$ U $};
\node[label] at ($ (v1_1)+(0,-1) $) {$ V_{1} $};
\node[label] at ($ (v2_1)+(0,-1) $) {$ V_{2} $};
\end{tikzpicture}

}

\caption{The $H_{n,k}$ graph with $n = 5$ and $k = 3$.}\label{figure:hnk}
\end{figure}

Consider \textsc{MinRanking} running on $H_{n,k}$. For reasons that will become clear later, we are interested in obtaining sharp bounds on the expected number of $(U,V_2)$ edges in a matching output by $\textsc{MinRanking}$. Let $t \in \{0,1,\ldots,n\}$ denote the time, i.e., the iteration number of the main loop in \textsc{MinRanking} with $0$ being the time prior to the execution of the main loop. First, we make the following simple observation.

\begin{lemma}
\label{lemma:symmetry}
Consider \textsc{MinRanking} running on $H_{n,k}$. Then $\forall t \in \{0, 1, \ldots, n\}$ the subgraph remaining at time $t$ in \textsc{MinRanking} (disregarding isolated vertices) is isomorphic to $H_{n-t, k-k'}$ for some $k'\le k$. In particular, $\forall t \in \{0,1,\ldots, n\}$ every unmatched vertex in $U$ has the same current degree at time $t$.
\end{lemma}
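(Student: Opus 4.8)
### Proof Plan for Lemma~\ref{lemma:symmetry}

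The plan is to prove both statements by induction on $t$, tracking how a single iteration of the main loop of \textsc{MinRanking} transforms $H_{n,k}$. The base case $t = 0$ is trivial since the full graph is $H_{n,k} = H_{n-0,k-0}$, and clearly all vertices of $U$ have current degree $k+1$ (one parallel edge to $V_2$, plus $k$ biclique edges to $V_1$), so the ``same current degree'' assertion holds. For the inductive step, suppose that after $t$ iterations the remaining graph (ignoring isolated vertices) is isomorphic to $H_{n-t,k-k'}$ for some $k' \le k$; write $m = n-t$ and $\ell = k-k'$ so the remaining graph is $H_{m,\ell}$. By the inductive hypothesis every vertex of the remaining online side has current degree exactly $\ell + 1$.

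First I would handle the degenerate case $\ell = 0$: then $H_{m,0}$ is just a perfect matching on $m$ parallel edges, each online vertex has degree $1$, \textsc{MinRanking} picks some online vertex, matches it along its unique parallel edge, and the remaining graph is $H_{m-1,0}$, which is $H_{(m-1)-0, 0-0}$ of the right form, and again all online degrees are equal (all $1$). For $\ell \ge 1$: since all online vertices have the same current degree $\ell+1$, the set $S$ in \textsc{MinRanking} is the entire remaining online side, and the algorithm picks some online vertex $i_U$ uniformly at random. Its neighborhood is its one private $V_2$-partner together with all $\ell$ vertices of $V_1$. The algorithm matches $i_U$ to the $\pi$-earliest such neighbor $j$ and deletes both $i_U$ and $j$. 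Now I would split into the two cases according to whether $j$ lies in $V_2$ or in $V_1$. If $j \in V_2$ (i.e.\ $j$ is $i_U$'s parallel partner), then deleting $i_U$ and $j$ removes one $U$ vertex and one $V_2$ vertex, leaves $V_1$ untouched, leaves all biclique edges among the surviving vertices intact, and leaves each surviving $U$-vertex still attached to its own parallel partner; the result is exactly $H_{m-1,\ell}$. If $j \in V_1$, then deleting $i_U$ removes one $U$ vertex (and with it, its now-isolated parallel partner in $V_2$, which gets discarded), and deleting $j$ removes one $V_1$ vertex; the biclique between the surviving $U$ and the surviving $V_1$ is still complete, and the surviving parallel edges are intact, giving $H_{m-1,\ell-1}$. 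In either case the remaining graph after iteration $t+1$ is $H_{(n-(t+1)),\,k - k''}$ with $k'' \in \{k', k'+1\} \le k$, and since in $H_{m-1,\ell}$ (resp.\ $H_{m-1,\ell-1}$) every online vertex has the same current degree $\ell+1$ (resp.\ $\ell$), the second assertion is re-established.

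The only genuinely delicate point — and the one I would state carefully — is the bookkeeping about isolated vertices in the $j \in V_1$ case: when $i_U$ is removed, its unique $V_2$-neighbor becomes isolated and must be dropped when we say the remaining graph is ``isomorphic to $H_{m-1,\ell-1}$ disregarding isolated vertices''; one should check that no \emph{other} vertex becomes isolated (the surviving $U$-vertices still see all of the surviving $V_1$, and each still has its own $V_2$-partner, so $V_1$-vertices and $V_2$-partners remain non-isolated as long as $m-1 \ge 1$ and $\ell-1 \ge 1$ respectively; if $\ell - 1 = 0$ we land in the already-handled perfect-matching regime). Everything else is a routine verification that the five edge-types of $H_{n,k}$ restrict correctly, so I do not expect any real obstacle beyond this clerical care.
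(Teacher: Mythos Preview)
Your proposal is correct and follows essentially the same approach as the paper's proof: both argue by induction on $t$, with the trivial base case $t=0$, and in the inductive step split into the two cases according to whether the $U$-vertex is matched into $V_2$ (yielding $H_{m-1,\ell}$) or into $V_1$ (yielding $H_{m-1,\ell-1}$ after discarding the now-isolated $V_2$-partner). Your treatment is somewhat more detailed---you separate out the $\ell=0$ case and spell out why no other vertex becomes isolated---but the structure and key observations are identical to the paper's.
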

\begin{proof}
This follows by a simple induction on $t$. Isolated vertices do not affect the performance of \textsc{MinRanking}, so we freely disregard them whenever they occur. Case $t=0$ is trivial. By the induction hypothesis, at time $t$ we have a graph isomorphic to $H_{n-t,k-k'}$. In the inductive step, there are possibly two cases at time $t+1$. In the first case, a vertex in $U$ is matched with a vertex in $V_2$. After deleting both of these vertices, the remaining subgraph is isomorphic to $H_{n-t-1,k-k'} = H_{n-(t+1),k-k'}$. In the second case, a single vertex $u$ in $U$ is matched with a vertex in $V_1$ (if there is one remaining, i.e., if $k-k'\ge 1$). After deleting both of these vertices, the unique vertex in $V_2$ that used to be the neighbor of $u$ becomes isolated. Disregarding this isolated vertex, the remaining subgraph is isomorphic to $H_{n-t-1,k-k'-1}=H_{n-(t+1),k-(k'+1)}$.
\end{proof}
Therefore, by symmetry, we may assume from now on that \textsc{MinRanking} processes the vertices in $U$ in the order from $1_U$ to $n_U$. This assumption does not affect the expected number of $(U,V_2)$ edges in a matching output by \textsc{MinRanking}. 

\begin{definition}
Let \textsc{RHSGreedy} be the algorithm that receives vertices from $V_1 \cup V_2$ in a random order and matches the received vertex to the first available vertex in $U$, if there is one.
\end{definition}

It is easier to analyze the expected number of $(U,V_2)$ edges for \textsc{RHSGreedy} than for \textsc{MinRanking}. Fortunately, the two algorithms produce the same outputs.

\begin{lemma}
\label{lemma:equivalence}
For a given fixed permutation $\pi$ of vertices in $V_1 \cup V_2$ the matching constructed by \textsc{MinRanking} is exactly the same as the matching constructed by \textsc{RHSGreedy} on $H_{n,k}$. Consequently, the expected number of $(U,V_2)$ edges in the output is the same for the two algorithms.
\end{lemma}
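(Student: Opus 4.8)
The plan is to show that, for every fixed permutation $\pi$ of $V_1\cup V_2$, the two procedures produce the literally identical matching on $H_{n,k}$. By Lemma~\ref{lemma:symmetry} we may assume \textsc{MinRanking} processes the vertices of $U$ in the order $1_U,\dots,n_U$, since the remaining graph at every time $t$ is isomorphic to some $H_{n-t,k-k'}$, all vertices of $U$ always share the same current degree, and internal ties among equal-degree vertices of $U$ are irrelevant to the final edge set. The key structural observation is that in $H_{n,k}$ the neighbourhood of $i_U$ is $V_1\cup\{i_{V_2}\}$: every vertex of $U$ sees \emph{all} of $V_1$ together with its single private $V_2$-partner. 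Consequently, when \textsc{MinRanking} reaches $i_U$ and $i_U$ still has an available neighbour, it matches $i_U$ to the $\pi$-earliest available vertex among $V_1\cup\{i_{V_2}\}$. The whole argument is to reinterpret this rule from the perspective of the right-hand side.

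First I would set up the comparison inductively on the number of iterations, maintaining the invariant that after processing the first $t$ vertices $1_U,\dots,t_U$ (in \textsc{MinRanking}) and, correspondingly, after \textsc{RHSGreedy} has revealed exactly the vertices of $V_1\cup V_2$ whose partners lie in $\{1_U,\dots,t_U\}$ \emph{in the relative $\pi$-order}, the two partial matchings coincide, the same subset $S_t\subseteq U$ is matched, and the same subset of $V_1$ is used up. The crux is choosing the right coupling between the two random orders: there is really only one source of randomness, namely $\pi$ on $V_1\cup V_2$, and \textsc{MinRanking}'s only other randomness (the tie-break inside $U$) does not affect the output by the paragraph after Lemma~\ref{lemma:symmetry}. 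So I would run \textsc{RHSGreedy} with exactly that same $\pi$ and argue the matchings agree pointwise.

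For the inductive step, consider the moment \textsc{MinRanking} handles $i_U$. If $i_{V_2}$ is $\pi$-earlier than every still-available vertex of $V_1$, then $i_U\mapsto i_{V_2}$; I must check \textsc{RHSGreedy} also matches this pair. When $i_{V_2}$ is revealed to \textsc{RHSGreedy}, its only neighbour is $i_U$, so it matches $i_U$ iff $i_U$ is still free at that point — and $i_U$ was free until now because \textsc{MinRanking} processes $U$ in increasing order, which by the invariant corresponds exactly to \textsc{RHSGreedy} having so far touched only vertices whose $U$-partner index is $<i$. Conversely, if the $\pi$-earliest available neighbour of $i_U$ is some $v\in V_1$, then $i_U\mapsto v$ in \textsc{MinRanking}; in \textsc{RHSGreedy}, $v$ (a vertex of the $V_1$-biclique) when revealed matches to the smallest-index available vertex of $U$, which the invariant forces to be $i_U$, since all of $1_U,\dots,(i-1)_U$ are already accounted for and $i_U$ is still free. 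In either branch the vertex $i_{V_2}$ (when not used) becomes isolated on the $U$-side exactly as in the proof of Lemma~\ref{lemma:symmetry}, keeping the $H_{\cdot,\cdot}$ structure intact for the invariant to carry over. Finally, every vertex of $V_1$ that is $\pi$-revealed to an already-saturated $U$ is simply dropped by \textsc{RHSGreedy}, matching the fact that \textsc{MinRanking} never revisits a matched $U$-vertex.

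The main obstacle I anticipate is purely bookkeeping: making the correspondence between ``\textsc{MinRanking} has processed $1_U,\dots,t_U$'' and ``\textsc{RHSGreedy} has processed a specific $\pi$-prefix'' precise, because \textsc{RHSGreedy} reveals $V_1$- and $V_2$-vertices interleaved according to $\pi$, whereas \textsc{MinRanking} is driven by the $U$-order. The resolution is to note that a $V_2$-vertex $i_{V_2}$ is "relevant" to \textsc{RHSGreedy}'s decision about $i_U$ only at the instant it is revealed, and a $V_1$-vertex is relevant only up to the first time every $U$-vertex preceding it (in index) is saturated; tracking which of $V_1$ is still available is exactly the parameter $k-k'$ from Lemma~\ref{lemma:symmetry}, so the two views stay synchronized. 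Once the pointwise agreement of matchings is established for every fixed $\pi$, equality of the expected number of $(U,V_2)$ edges follows by taking expectations over $\pi$.
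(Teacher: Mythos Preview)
Your approach is essentially the same as the paper's: induct and split into two cases according to whether the vertex matched to the current $U$-vertex lies in $V_1$ or in $V_2$. The paper organizes this as a recursion on the graph---handle $1_U$, delete the matched pair together with any resulting isolated vertex to obtain a smaller $H_{n',k'}$, and recurse---which neatly sidesteps the synchronization issue you flag as the ``main obstacle.''

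Your stated invariant, however, is not well-defined: ``\textsc{RHSGreedy} has revealed exactly the vertices of $V_1\cup V_2$ whose partners lie in $\{1_U,\dots,t_U\}$'' does not describe a prefix of $\pi$, since \textsc{RHSGreedy} reveals vertices in $\pi$-order regardless of who their eventual partner is (and some $V_2$-vertices have no partner at all). The fix is simply to take as inductive hypothesis that both algorithms agree on the edges incident to $1_U,\dots,(i-1)_U$, with no attempt to synchronize clocks. The observation that then makes your Case~2 go through---which you do not spell out---is that for every $j<i$, the partner $v^{(j)}$ of $j_U$ necessarily precedes $v$ in $\pi$: indeed $v$ was still available when \textsc{MinRanking} processed $j_U$, yet \textsc{MinRanking} chose $v^{(j)}$ as the $\pi$-earliest available neighbour, so $\pi(v^{(j)})<\pi(v)$. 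Hence by the time \textsc{RHSGreedy} reveals $v$ it has already revealed every $v^{(j)}$ and, by the hypothesis, matched it to $j_U$; since moreover every $V_1$-vertex $\pi$-before $v$ is one of the $v^{(j)}$ and $i_{V_2}$ is $\pi$-after $v$, the vertex $i_U$ is free and is the first available, as you claim. With this adjustment your argument is complete and coincides with the paper's.
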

\begin{proof}
Let $v_1, \ldots, v_{n+k}$ be the vertices in $V_1 \cup V_2$ listed in the order given by $\pi$. By Lemma~\ref{lemma:symmetry}, we may assume that the order in which \textsc{MinRanking} considers vertices in $U$ is given by $1_U, \ldots, n_U$. Note that \textsc{RHSGreedy} has $n+k$ iterations, since in iteration $i$ it considers $v_i$, while \textsc{MinRanking} has $n$ iterations, since in iteration $i$ it considers $i_U$. Thus, the two algorithms run in different time domains. However, if we ignore iterations in which \textsc{RHSGreedy} considers isolated vertices, it is easy to see that both algorithms have $n$ iterations. We prove that the matching constructed by \textsc{MinRanking} is the same as the matching constructed by \textsc{RHSGreedy} by induction on the number of  $U$ vertices matched  by \textsc{MinRanking}.

Let $v_j$ denote the vertex matched by \textsc{MinRanking} with $1_U$. The vertex $1_U$ can be matched either with a vertex in $V_1$ or with $1_{V_2}$. We consider these two cases separately. 

Suppose that $v_j \in V_1$. By the definition of \textsc{MinRanking}, there exists $k >j$ such that $1_{V_2} = v_k$, and for all $i < j$ we have $v_i \in V_2$. Therefore, \textsc{RHSGreedy} on this input matches each $v_i \in V_2$ with its unique neighbor in $U$ for $i < j$. Then $v_j$ will be matched with $1_U$, as it will be the first available neighbor of $v_j$ in $U$. Note that removing the two matched vertices together with $1_{V_2}$ (which becomes isolated) does not affect the matchings constructed by the two algorithms on the resulting subgraphs. Thus, we can eliminate these vertices and proceed inductively.

Now, suppose that $v_j \in V_2$. By the definition of \textsc{MinRanking}, for all $i < j$ we have $v_i \in V_2$. Arguing as above, \textsc{RHSGreedy} on this input matches each $v_i \in V_2$ with its unique neighbor in $U$ for $i \le j$. In particular, $v_j$ is matched with $1_U$. Eliminating the two matched vertices does not affect the rest of the matching constructed by the two algorithms. Thus, we can proceed inductively.

We showed that in both cases both algorithms include the same edge $(1_U, v_j)$ in their outputs, and we can proceed inductively after removing $1_U, v_j$ and possibly $1_{V_2}$. 
\end{proof}

\begin{lemma}
\label{lemma:markov}
Consider \textsc{RHSGreedy} running on $H_{n,n}$. Let $X_n(t)$ denote the number of vertices in $U$ unmatched at iteration $t$ (ignoring trivial iterations with isolated vertices). Let $Y_n(t)$ denote the number of vertices in $V_2$ that have been matched up to and including iteration $t$ (ignoring trivial iterations with isolated vertices). Then $(X_n(t),Y_n(t))$ is a Markov chain with the state space $\{0,1,\ldots, n\}^2$ and the following transition probabilities:
\[ P((X_n(t+1),Y_n(t+1))=(x-1,y) \mid (X_n(t),Y_n(t))=(x,y)) = \frac{x+y}{2x+y},\text{ and}\]
\[ P((X_n(t+1),Y_n(t+1))=(x-1,y+1) \mid (X_n(t),Y_n(t))=(x,y)) = \frac{x}{2x+y},\]
where $ t \in \{0, 1, \ldots, n-1\}$. The initial state is $(X_n(0),Y_n(0)) = (n,0)$.
\end{lemma}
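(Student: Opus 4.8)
The plan is to verify that $(X_n(t), Y_n(t))$ is indeed a Markov chain by checking that the next state depends only on the current values $(x,y)$, and then to derive the two transition probabilities from the structure of \textsc{RHSGreedy} running on $H_{n,n}$. First I would invoke Lemma~\ref{lemma:symmetry} and Lemma~\ref{lemma:equivalence}: by Lemma~\ref{lemma:equivalence} it suffices to analyze \textsc{RHSGreedy}, and by Lemma~\ref{lemma:symmetry} (applied with $k=n$) the remaining graph at any time $t$, after discarding isolated vertices, is isomorphic to $H_{x, x}$ where $x = X_n(t)$ is the number of unmatched $U$-vertices. The key observation is that in $H_{n,n}$ the number of remaining $V_1$-vertices and the number of remaining (non-isolated) $V_2$-vertices are both exactly $x$ at time $t$: each matched $U$-vertex consumes exactly one $V_1$-vertex or one $V_2$-vertex (and in the latter case isolates nothing extra since $V_2$ is matched by parallel edges — actually a $V_1$-match isolates one $V_2$-vertex, but starting from $k=n$ the bookkeeping still yields $x$ remaining $V_1$ and $x$ remaining live $V_2$). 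So the ``shape'' of the residual instance is determined by $x$ alone, which is what drives the Markov property.

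Next I would compute the one-step transition. At a nontrivial iteration we reveal the next vertex $v$ of $V_1 \cup V_2$ in the random order $\pi$, restricted to the vertices not yet eliminated. Among the live right-side vertices there are $x$ in $V_1$ and $x$ live vertices in $V_2$, but we must be careful: the already-matched $V_2$-vertices and the isolated $V_2$-vertices still appear in $\pi$; however, an isolated or already-matched vertex, when it is the next element of $\pi$, produces a trivial iteration which we ignore. So conditioning on reaching the next nontrivial iteration, $v$ is uniform over the $x + x = 2x$... wait — here the count is $x$ live $V_1$-vertices and $x$ live $V_2$-vertices whose unique $U$-neighbor is still unmatched. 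Hmm, but the lemma's denominator is $2x + y$, not $2x$. The resolution is that $Y_n(t) = y$ counts $V_2$-vertices already matched; those are exactly the ones that have been "used up" and will cause trivial iterations, but the live $V_2$-vertices number $x$ (since in $H_{n,n}$, initially $|V_2| = n$ and $Y_n(t) + (\text{isolated } V_2) + (\text{live }V_2) = n$, while isolated $V_2 = n - x - y$). Re-examining: live $V_1$ = $x$, live $V_2$ = $x$... that gives $2x$, contradicting $2x+y$. I would instead recount using the correct invariant that at time $t$ the number of remaining $V_1$ vertices equals $x$ while the number of remaining (non-isolated, non-matched) $V_2$ vertices equals $x + y$ is wrong too. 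The honest approach: track that after $t$ nontrivial steps, $t = n - x$, the number of $(U,V_2)$-edges placed is $y$, so $t - y$ $U$-vertices were matched into $V_1$, leaving $n - (t-y) = x + y$ live $V_1$-vertices? No: $|V_1| = n$, we removed $t - y$ of them, leaving $n - t + y = x + y$. And live $V_2$-vertices: $|V_2| = n$; we matched $y$ of them and isolated $t - y$ of them (one per $V_1$-match), leaving $n - y - (t-y) = n - t = x$. So the next nontrivial vertex is uniform over $x + y$ live $V_1$-vertices and $x$ live $V_2$-vertices, total $(x+y) + x = 2x + y$ — matching the claimed denominator.

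With that invariant pinned down, the transitions are immediate: with probability $\frac{x+y}{2x+y}$ the revealed vertex $v$ lies in $V_1$, it matches the first available $U$-vertex (one exists since $x \ge 1$ in a nontrivial step), so $X$ decreases by $1$ and $Y$ is unchanged — giving the first transition. With probability $\frac{x}{2x+y}$, $v$ lies in (the live part of) $V_2$; its unique $U$-neighbor is available (that is precisely what "live" means), so $v$ matches it, $X$ decreases by $1$ and $Y$ increases by $1$ — giving the second transition. Since these probabilities depend only on the current $(x,y)$, the Markov property holds, with state space contained in $\{0,\dots,n\}^2$ and initial state $(X_n(0), Y_n(0)) = (n, 0)$ by definition. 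The main obstacle, and the part I would write out most carefully, is establishing the counting invariant "$x+y$ live $V_1$-vertices and $x$ live $V_2$-vertices remain after $n-x$ nontrivial steps with $y$ parallel edges placed," since getting the denominator $2x+y$ right is the whole content of the lemma; everything else is bookkeeping that follows from Lemmas~\ref{lemma:symmetry} and~\ref{lemma:equivalence}.
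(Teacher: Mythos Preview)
Your proposal is correct and, once the dust settles, takes essentially the same approach as the paper: count that after $t=n-x$ nontrivial steps with $y$ parallel edges placed, there remain $x+y$ live $V_1$-vertices and $x$ live $V_2$-vertices, so the next nontrivial vertex falls in $V_2$ with probability $x/(2x+y)$. The paper's proof is just the clean version of your final paragraph, stated directly without the false starts; note also that the lemma is already about \textsc{RHSGreedy}, so invoking Lemmas~\ref{lemma:symmetry} and~\ref{lemma:equivalence} is unnecessary here (and your initial claim that the residual is $H_{x,x}$ is indeed wrong---it is $H_{x,x+y}$, as your later bookkeeping shows).
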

\begin{proof}
Let $X_n(t) = x$ and $Y_n(t) = y$. Then the number of vertices from $V_2$ that remain as possible candidates for a match with a vertex in $U$ is $x$. The number of vertices that have been matched in $V_1$ is $n-x-y$. Thus, the number of vertices in $V_1$ that remain as possible candidates for a match with a vertex in $U$ is $n-(n-x-y) = x+y$. In the next nontrivial iteration, the number of remaining $U$ vertices goes down by one, hence $X_n(t+1) = x-1$. The number of matched vertices in $V_2$ can either remain the same or increase by one. \textsc{RHSGreedy} matches a vertex from $V_2$ with a vertex in $U$ if and only if a vertex from $V_2$ appears next in the permutation of the remaining vertices from $V_1 \cup V_2$, which happens with probability $x/(2x+y)$. Therefore, we have
\[ P((X_n(t+1),Y_n(t+1))=(x-1,y+1) \mid (X_n(t),Y_n(t))=(x,y)) = \frac{x}{2x+y}.\]
The probability of the other case, i.e., $Y_n(t+1) = y+1)$, is simply $1 - x/(2x+y) = (x+y)/(2x+y)$.
\end{proof}

By Lemma~\ref{lemma:equivalence}, the expected number of $(U,V_2)$ edges in a matching constructed by \textsc{MinRanking} is equal to $\mathbb{E}(Y_n(n))$. The asymptotic behavior of $\mathbb{E}(Y_n(n))$ is captured by a solution to a related ordinary differential equation -- this is an application of the differential equation method (see \cite{Kurtz70}, \cite{Wormald1995}, \cite{Wormald1999}). We get the following result.

\begin{lemma}
\label{lemma:helper_for_main_theorem}
Let the setting be as in Lemma~\ref{lemma:markov}. For every $\epsilon > 0$ we have
\[ (1-\epsilon)\frac{n}{e} + o(n) \le \mathbb{E}(Y_n(n)) \le \frac{n}{e} + o(n).\]
We remark that $\mathbb{E}(Y_n(n))$ is also equal to the expected number of $(U,V_2)$ edges in a matching constructed by \textsc{MinRanking}.
\end{lemma}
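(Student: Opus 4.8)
The plan is to analyze the chain of Lemma~\ref{lemma:markov} by the differential equation method. In that setting each of the $n$ nontrivial iterations of \textsc{RHSGreedy} on $H_{n,n}$ decreases $X_n$ by exactly one, so $X_n(t)=n-t$ and the chain reduces to the single coordinate $Y_n(t)$, with $Y_n(0)=0$, increments in $\{0,1\}$, and
\[
\mathbb{E}[\,Y_n(t+1)-Y_n(t)\mid Y_n(t)=y\,]\;=\;\frac{n-t}{2(n-t)+y}\;=\;f\!\left(\tfrac tn,\tfrac yn\right),\qquad f(s,z)=\frac{1-s}{2(1-s)+z}.
\]
First I would solve the ODE $z'(s)=f(s,z(s))$ with $z(0)=0$. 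The substitution $z=(1-s)\phi$ separates variables and gives the implicit solution $\ln(\phi+1)-\tfrac{1}{\phi+1}=\ln\tfrac{1}{1-s}-1$ together with $z=\tfrac{\phi}{\phi+1}\,e^{1/(\phi+1)-1}$; from this one reads off that $z$ is increasing and concave on $[0,1]$ and that $z(s)\to 1/e$ as $s\to 1^-$, so $z(1)=1/e$.

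Next I would apply Wormald's differential equation method (see \cite{Wormald1999}) with drift $f$ and a domain $D$ chosen to straddle the whole trajectory while keeping the denominator $2(1-s)+z$ bounded away from $0$. Concretely, take $D=\{(s,z): -\eta<s<1+\eta,\ \ell(s)<z<2\}$ where $\ell$ is a smooth curve with $\ell(0)<0$, $\ell(1)>0$, and $\ell(s)<z(s)$ for all $s\in[0,1]$; such $\ell$ exists since, by concavity, $z(s)\ge s/e$ on $[0,1]$. On $D$ we have $2(1-s)+z\ge c_1>0$, hence $f$ is Lipschitz on $D$; the increments are bounded by $1$; and the trend is exactly $f$. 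Moreover the solution $z(\cdot)$ stays in the interior of $D$ for all $s\in[0,1]$ (it reaches $(1,1/e)$, not the singular corner $(1,0)$), so Wormald's theorem applies all the way to $t=n$ and yields: with probability $1-o(1)$, $Y_n(t)=n\,z(t/n)+o(n)$ uniformly for $0\le t\le n$. On this event $Y_n(n)=n/e+o(n)$, and since $0\le Y_n(n)\le n$ always, taking expectations gives $\mathbb{E}(Y_n(n))=n/e+o(n)$, which implies the claimed two-sided bound.

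I expect the only delicate point to be precisely the behaviour of $f$ near the corner $(1,0)$, where $f$ is not even continuous: a domain stopping short of $s=1$ would cost a spurious $\Theta_\delta(n)$ term in the upper bound over the last $\delta n$ steps. The fix above---letting the lower boundary $\ell$ of $D$ dip below zero near $s=0$ (where $f$ is smooth) but stay positive near $s=1$, exploiting $z(s)\ge s/e$---is what keeps the machinery valid on the entire interval. A more conservative alternative is to run Wormald only on $[0,(1-\delta)n]$ and combine it with the monotonicity of $Y_n$: this immediately gives the lower bound $\mathbb{E}(Y_n(n))\ge n\,z(1-\delta)+o(n)\ge(1-\epsilon)n/e+o(n)$ for $\delta$ small in terms of $\epsilon$, and the upper bound can then be recovered either by letting $\delta=\delta(n)\to 0$ slowly or by a short supermartingale argument based on the flow potential $\Psi(s,z):=$ the value at $s'=1$ of the ODE solution through $(s,z)$, which is constant along trajectories, satisfies $\Psi(1,z)=z$, and changes by only $O(1/n^2)$ in expectation per step once the chain is confined (via the lower bound) to a tube around the trajectory avoiding $(1,0)$.
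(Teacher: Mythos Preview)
Your proposal is correct and follows essentially the same route as the paper: both analyze the Markov chain of Lemma~\ref{lemma:markov} via the differential equation method, solve the same ODE (your substitution $z=(1-s)\phi$ produces exactly the paper's implicit solution $\ln(1+\phi)-\tfrac{1}{1+\phi}=-1+\ln\tfrac{1}{1-s}$ after the change of variables $x=n(1-s)$, $\tilde y=nz$, $\phi=\tilde y/x$), and extract the value $1/e$ at the endpoint. The only execution differences are that the paper parameterizes by $x=X_n(t)$ rather than by normalized time $s=t/n$, and reads off the endpoint value by an intermediate value argument on $f(z)=\ln(1+z)-\tfrac{1}{1+z}+1-\ln n$ at $z=n/e$ and $z=(1-\epsilon)n/e$, whereas you take the limit $\phi\to\infty$ directly. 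Your discussion of the Lipschitz domain and the singularity at $(1,0)$ is in fact more careful than the paper's, which simply invokes the method without verifying the hypotheses; your observation that $z$ is concave (hence $z(s)\ge s/e$) and therefore the trajectory avoids the bad corner is a clean way to justify running Wormald all the way to $s=1$.
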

\begin{proof}
Define $\Delta Y_n(t)  = Y_n(t) - Y_n(t-1)$ and $\Delta X_n(t) = X_n(t)-X_n(t-1)$. Then by Lemma~\ref{lemma:markov} we have
\[ \mathbb{E}(\Delta Y_n(t)) = \frac{X_n(t)}{2 X_n(t) + Y_n(t)}, \text{ and}\]
\[ \mathbb{E}(\Delta X_n(t)) = -1.\]
Thus, we have
\[ \frac{\mathbb{E}(\Delta Y_n(t))}{ \mathbb{E}(\Delta X_n(t))} = \frac{-X_n(t)}{2 X_n(t) + Y_n(t)}.\]
Applying the differential equation method (see \cite{Kurtz70}, \cite{Wormald1995}, \cite{Wormald1999}), the asymptotics of $Y_n(n)$ are captured by a solution to the following ordinary differential equation ($y$ is regarded as a differentiable function of $x$):
\[ \frac{d y}{d x} = \frac{-x}{2x+y}\]
with the initial condition $y(n) = 0$. Let $\tilde{y}(x)$ be defined implicitly as a solution to the following equation:
\[ \ln(1+\tilde{y}(x)/x) - \frac{1}{1+\tilde{y}(x)/x} = -1 + \ln(n)-\ln(x).\]
Then it is straightforward to verify that $\tilde{y}(x)$ is a solution to the above ODE. Note that the function $\tilde{y}(x)$ is undefined at $x=0$ (this essentially comes from the condition of Lemma~\ref{lemma:markov} that $t \le n-1$). However, we have $|\mathbb{E}(Y_n(n)) - \mathbb{E}(Y_n(n-1))| \le 1$, thus we can estimate the asymptotics of $\mathbb{E}(Y_n(n))$ by the asymptotics of $\mathbb{E}(Y_n(n-1))$, which amounts to finding the value of $\tilde{y}$ at $x = 1$:
\[  \ln(1+\tilde{y}(1)) - \frac{1}{1+\tilde{y}(1)} = -1 + \ln(n).\]
Define
\[ f(z) = \ln(1+z) - \frac{1}{1+z} +1 - \ln(n).\]
We have
\begin{align*}
f(n/e) &= \ln(1+n/e) - \ln(n/e) - \frac{1}{1+n/e}\\
&= \ln(1+e/n) - \frac{1}{1+n/e}\\
&= \left(\frac{e}{n}-\frac{e^2}{2n^2}+o(n^{-3})\right) - \left( \frac{e}{n}-\frac{e^2}{n^2}+o(n^{-3})\right)\\
&=\frac{e^2}{2n^2}+o(n^{-3})\\
&> 0\\
\end{align*}
where the third equality follows from the Taylor series for $\ln(1+x)$ and $1/(1+1/x)$ around 0, and the last inequality holds for large enough $n$.

Similarly, we have
\begin{align*}
f((1-\epsilon)n/e) &= \ln(1+(1-\epsilon)n/e) - \ln(n/e) - \frac{1}{1+(1-\epsilon)n/e}\\
&= \ln(1-\epsilon+e/n) - \frac{1}{1+(1-\epsilon)n/e}\\
&\le \ln(1-\epsilon/2) - \frac{1}{1+(1-\epsilon)n/e}\\
&\le -\epsilon/2 - o(n^{-1})\\
&<0\\
\end{align*}
where the first inequality holds for large enough $n$ since $\ln$ is an increasing function, the second inequality follows from $e^x \ge 1+x$, and the last inequality holds for large enough $n$.

Since $f(n/e) > 0$ and $f((1-\epsilon)n/e) < 0$, by the intermediate value theorem $f(z)$ has a root in the interval $((1-\epsilon)n/e+o(n), n/e+o(n))$. This is precisely the statement of this lemma.
\end{proof}

Now we are in a position to prove the main theorem of this section.

\begin{proof}[Proof of Theorem~\ref{theorem:main_theorem}] Fix $\epsilon > 0$. Consider \textsc{MinRanking} running on graph $G_b$. We introduce the following variables to denote the number of edges of certain types  in a matching constructed by \textsc{MinRanking}:
\begin{itemize}
\item $X_i$  denotes the number of $(S_{2,L}^{(i)},S_{1,R})$ edges,
\item $Y_i$ denotes the number of $(S_{2,L}^{(i)},S_{2,R})$ edges,
\item $Z$ denotes the number of $(S_{1,L},S_{2,R})$ edges.
\end{itemize}

\textsc{MinRanking} first matches nodes in $S_{2,L}$ since these nodes have the minimum degree of $b+1$. Once \textsc{MinRanking} matches a node in some block $S_{2,L}^{(i)}$, it will continue matching the nodes in the same block until that block is exhausted. The algorithm then moves on to the next block. This continues for at least $b-1$ blocks. In the unlikely scenario that all $b^2-b$ nodes of $S_{1,R}$ have been matched in this process, nodes from $S_{3,L}$ would have degree $b+1$. Therefore, after considering at least $b-1$ blocks from $S_{2,L}$ the algorithm might start matching nodes from other parts of $L$, but not sooner. The subgraph induced by $S_{2,L}^{(i)} \cup S_{2,R}^{(i)} \cup S_{1,R}^{(i)}$ is $H_{b,b}$, where $S_{1,R}^{(i)}$ consists of the nodes connected in parallel to $S_{2,L}^{(i)}$. Therefore, by Lemma~\ref{lemma:helper_for_main_theorem} we have
\begin{equation}\label{equation:expxi}
b^2/e+o(b^2) \ge \mathbb{E}\left[\sum_{i=1}^b X_i \right] \ge (1 - \epsilon) b^2/e + o(b^2)
\end{equation}
The algorithm will match all the vertices from the first $b-1$ blocks of $S_{2,L}$. Therefore, we have
\begin{equation}\label{equation:xisumyi}
b^2 \ge \sum_{i=1}^b X_i + Y_i \ge (b-1)b.
\end{equation}
Note that $X_i$ also counts the number of nodes from $S_{2,R}^{(i)}$ available to be matched with $S_{1,L}$. Therefore, disregarding nodes from $S_{3,L}$ the number of nodes matched between $S_{1,L}$ and $S_{2,R}$ is at least $\left(\sum_{i=1}^b X_i \right) - 2b$, i.e., we have
\begin{equation}\label{equation:zlb}
\left(\sum_{i=1}^b X_i \right) \ge Z \ge \left(\sum_{i=1}^b X_i \right) - 2 b.
\end{equation}
Putting all this together, the expected size of the matching output by \textsc{MinRanking} is
\begin{align*}
\mathbb{E} \left[ Z + \sum_{i=1}^b X_i+Y_i \right] &\ge \mathbb{E}[Z] + (b-1)b\\ 
&\ge \mathbb{E} \left[\sum_{i=1}^b X_i\right] -2b +(b-1)b \\
&\ge (1-\epsilon)b^2/e+b^2+o(b^2) \\
&= b^2(1+(1-\epsilon)/e) + o(b^2),
\end{align*}
where the first inequality is by~\eqref{equation:xisumyi}, the second inequality is by~\eqref{equation:zlb}, and the last inequality is by~\eqref{equation:expxi}. Similarly, we have
\[ \mathbb{E} \left[ Z + \sum_{i=1}^b X_i+Y_i \right] \le b^2(1+1/e)+o(b^2).\]
Since the graph has a perfect matching of size $2b^2+2b$, the asymptotic approximation ratio of \textsc{MinRanking} on $G_b$ is at least $1/2 + 1/(2e)$ and at most $1/2+(1-\epsilon)/(2e)$. Since $\epsilon > 0$ is arbitrary, the theorem follows.
\end{proof}

The expectation $\mathbb{E}_{\sigma, \pi} [\textsc{Ranking}(\sigma,\pi)]$  in the analysis of \textsc{MinRanking} is with respect to the joint 
distribution for the \textsc{Ranking} permutation $\pi$ of the offline nodes and the 
distribution on $\sigma$ induced by the tie breaking rule for the online nodes. 
The expectation can be written as $\mathbb{E}_{\sigma} [\mathbb{E}_{\pi} [\textsc{Ranking}(\sigma,\pi)| \sigma]]$, the expectation with respect to $\pi$ conditioned on the instantiations 
of $\sigma$. By averaging this implies that there is some fixed setting of $\sigma$ for which the inappropximation result holds. In this regard, our inapproximation for \textsc{MinRanking} is the analogue   
of Besser and Poloczek's priority analysis of the \textsc{MinGreedy} algorithm. 
Both results show that what seems to be a ``well-motivated'' deterministic 
(or random) ordering of the 
online vertices performs worse than a naive uniform random 
ordering of the
online vertices.

\section{Conclusion and Open Problems}
\label{sec:conclusion}
We have considered a number of ``online-based'' algorithms for the maximum bipartite matching problem. We believe that the algorithms considered in this paper all pass the intuitive ``you know it when you see it'' standard for conceptually simple algorithms. In particular, these algorithms take linear time in the number of edges  and are very easy to implement. Even given the restricted nature of these algorithms, it is a challenge to understand their performance. 

Our results for the MBM, in conjunction with the results in Poloczek \etal.~\cite{PoloczekSWZ17} for MaxSat show both the promise and limitations of conceptually simple algorithms. Many open problems are suggested by this work. Clearly, any problem studied in the competitive online literature can be considered within the expanded framework of conceptually simple algorithms that in some way can be considered as expansions of the online model. In particular, for what problems is there a 
general method for de-randomizing online algorithms? Is there a precise algorithmic model that lends itself to 
analysis and captures multi-pass algorithms? 
And in addition to worst case and stochastic analysis, how would any of the conceptually simple MBM algorithms perform ``in practice''.  

There are a number of more specific immediate questions that deserve consideration. We collect a few of them here.

\begin{openproblem}
Can we establish a MBM inapproximation result for {\it any} category algorithm (i.e. a 
two pass deterministic algorithm that 
online assigns a category to each input item which is then used as the advice in the second pass)? 
In particular, is the D{\"u}rr \etal. Category-Advice algorithm the best such algorithm using say 
$O(1)$ or  
$O(\log n)$  bits per input?

\end{openproblem}

\begin{openproblem}
Does there exist a determnistic mutli-pass linear time algorithm that 
matches or improves upon the $1-1/e$ approximation of the \textsc{Ranking} algorithm?

\end{openproblem}

\begin{openproblem}
What is the exact approximation ratio achieved by \textsc{MinRanking}? What is the best achievable ratio by any randomized  priority algorithm?
Note that Pena and Borodin \cite{PenaB16} show that 
{\it every} deterministic priority algorithm for MBM cannot asymptotically 
exceed 
the $1/2$ approximation that holds for any greedy algorithm.   
\end{openproblem}

\begin{openproblem}
Find a conceptually simple tie-breaking rule for the greedy algorithm that beats the $1-1/e$ approximation ratio in the known IID model.
\end{openproblem}

\begin{openproblem} 
Is there a randomized priority algorithm for MBM that achieves an approximation ratio better than that of \textsc{Ranking} in the ROM model? 
\end{openproblem}

\begin{openproblem}
Is there a simple multi pass algorithm for the  vertex weighted matching problem where the offline vertices are weighted? Note that Aggarwal \etal. \cite{AggarwalGKM11} show that a ``perturbed'' version of 
KVV achieves a $1-1/e$ approximation for this vertex weighted case.  
\end{openproblem}

\bibliography{bipartite}{}

\begin{thebibliography}{10}

\bibitem{AggarwalGKM11}
Gagan Aggarwal, Gagan Goel, Chinmay Karande, and Aranyak Mehta.
\newblock Online vertex-weighted bipartite matching and single-bid budgeted
  allocations.
\newblock In {\em Proceedings of the Twenty-Second Annual {ACM-SIAM} Symposium
  on Discrete Algorithms, {SODA} 2011, San Francisco, California, USA, January
  23-25, 2011}, pages 1253--1264, 2011.

\bibitem{AngelopoulosB2002}
Spyros Angelopoulos and Allan Borodin.
\newblock On the power of priority algorithms for facility location and set
  cover.
\newblock In {\em Proc. of APPROX}, pages 26--39, 2002.

\bibitem{AngelopoulosB2010}
Spyros Angelopoulos and Allan Borodin.
\newblock Randomized priority algorithms.
\newblock {\em Theoretical Computer Science}, 411(26):2542 -- 2558, 2010.

\bibitem{AronsonDFS1995}
Jonathan Aronson, Martin Dyer, Alan Frieze, and Stephen Suen.
\newblock Randomized greedy matching. {II}.
\newblock {\em Random Struct. Algorithms}, 6(1):55--73, 1995.

\bibitem{Bahmani2010}
Bahman Bahmani and Michael Kapralov.
\newblock Improved bounds for online stochastic matching.
\newblock In {\em Proc. of {ESA}}, pages 170--181, 2010.

\bibitem{BesserP2017}
Bert Besser and Matthias Poloczek.
\newblock Erratum to: Greedy matching: Guarantees and limitations.
\newblock {\em Algorithmica}, pages 1--4, 2017.

\bibitem{BesserP17}
Bert Besser and Matthias Poloczek.
\newblock Greedy matching: Guarantees and limitations.
\newblock {\em Algorithmica}, 77(1):201--234, 2017.

\bibitem{BirnbaumM2008}
Benjamin Birnbaum and Claire Mathieu.
\newblock On-line bipartite matching made simple.
\newblock {\em SIGACT News}, 39(1):80--87, March 2008.

\bibitem{BockenhauerKKK2017}
Hans-Joachim B{\"o}ckenhauer, Dennis Komm, Rastislav Kr\'alovi\v{c}, and
  Richard Kr\'alovi\v{c}.
\newblock On the advice complexity of the k-server problem.
\newblock {\em J. of Comput. and System Sciences}, 86:159 -- 170, 2017.

\bibitem{BorodinBL2005}
Allan Borodin, Joan Boyar, and Kim~S. Larsen.
\newblock {\em Priority Algorithms for Graph Optimization Problems}, pages
  126--139.
\newblock Springer Berlin Heidelberg, 2005.

\bibitem{BorodinIYZ2012}
Allan Borodin, Ioana Ivan, Yuli Ye, and Bryce Zimny.
\newblock On sum coloring and sum multi-coloring for restricted families of
  graphs.
\newblock {\em Theoretical Computer Science}, 418:1 -- 13, 2012.

\bibitem{BorodinNR2003}
Allan Borodin, Morten~N. Nielsen, and Charles Rackoff.
\newblock (incremental) priority algorithms.
\newblock {\em Algorithmica}, 37(4):295--326, 2003.

\bibitem{BrubachSSX2016}
Brian Brubach, Karthik~Abinav Sankararaman, Aravind Srinivasan, and Pan Xu.
\newblock {New Algorithms, Better Bounds, and a Novel Model for Online
  Stochastic Matching}.
\newblock In {\em Proc. of ESA}, pages 24:1--24:16, 2016.

\bibitem{Buchbinder2016}
Niv Buchbinder and Moran Feldman.
\newblock {Deterministic algorithms for submodular maximization problems}.
\newblock {\em SODA}, pages 392--403, 2016.

\bibitem{ChandraH01a}
Barun Chandra and Magn{\'{u}}s~M. Halld{\'{o}}rsson.
\newblock Greedy local improvement and weighted set packing approximation.
\newblock {\em J. Algorithms}, 39(2):223--240, 2001.

\bibitem{DavisI2009}
Sashka Davis and Russell Impagliazzo.
\newblock Models of greedy algorithms for graph problems.
\newblock {\em Algorithmica}, 54(3):269--317, 2009.

\bibitem{DeanG08}
Jeffrey Dean and Sanjay Ghemawat.
\newblock Mapreduce: simplified data processing on large clusters.
\newblock {\em Commun. {ACM}}, 51(1):107--113, 2008.

\bibitem{DevanurJK2013}
Nikhil~R. Devanur, Kamal Jain, and Robert~D. Kleinberg.
\newblock Randomized primal-dual analysis of ranking for online bipartite
  matching.
\newblock In {\em Proc. of SODA}, pages 101--107, 2013.

\bibitem{DuanP14}
Ran Duan and Seth Pettie.
\newblock Linear-time approximation for maximum weight matching.
\newblock {\em J. {ACM}}, 61(1):1:1--1:23, 2014.

\bibitem{Durr2016}
Christoph D{\"u}rr, Christian Konrad, and Marc Renault.
\newblock {On the Power of Advice and Randomization for Online Bipartite
  Matching}.
\newblock In {\em Proc. of {ESA}}, pages 37:1--37:16, 2016.

\bibitem{EggertKMS12}
Sebastian Eggert, Lasse Kliemann, Peter Munstermann, and Anand Srivastav.
\newblock Bipartite matching in the semi-streaming model.
\newblock {\em Algorithmica}, 63(1-2):490--508, 2012.

\bibitem{EpsteinLMS11}
Leah Epstein, Asaf Levin, Juli{\'{a}}n Mestre, and Danny Segev.
\newblock Improved approximation guarantees for weighted matching in the
  semi-streaming model.
\newblock {\em {SIAM} J. Discrete Math.}, 25(3):1251--1265, 2011.

\bibitem{EpsteinLSW13}
Leah Epstein, Asaf Levin, Danny Segev, and Oren Weimann.
\newblock Improved bounds for online preemptive matching.
\newblock In {\em Proc. of {STACS}}, pages 389--399, 2013.

\bibitem{FeigenbaumKMSZ05}
Joan Feigenbaum, Sampath Kannan, Andrew McGregor, Siddharth Suri, and Jian
  Zhang.
\newblock On graph problems in a semi-streaming model.
\newblock {\em Theor. Comput. Sci.}, 348(2-3):207--216, 2005.

\bibitem{FeldmanMVM2009}
Jon Feldman, Aranyak Mehta, Vahab Mirrokni, and S.~Muthukrishnan.
\newblock Online stochastic matching: Beating 1-1/e.
\newblock In {\em Proc. of FOCS}, pages 117--126, 2009.

\bibitem{GoelKK12}
Ashish Goel, Michael Kapralov, and Sanjeev Khanna.
\newblock On the communication and streaming complexity of maximum bipartite
  matching.
\newblock In {\em Proceedings of the Twenty-Third Annual {ACM-SIAM} Symposium
  on Discrete Algorithms, {SODA} 2012, Kyoto, Japan, January 17-19, 2012},
  pages 468--485, 2012.

\bibitem{GoelM2008}
Gagan Goel and Aranyak Mehta.
\newblock Online budgeted matching in random input models with applications to
  adwords.
\newblock In {\em Proc. of {SODA}}, pages 982--991, 2008.

\bibitem{HaeuplerMZ2011}
Bernhard Haeupler, Vahab~S. Mirrokni, and Morteza Zadimoghaddam.
\newblock Online stochastic weighted matching: Improved approximation
  algorithms.
\newblock In {\em Proc. of WINE}, pages 170--181, 2011.

\bibitem{Hall2002}
Magn{\'{u}}s~M. Halld{\'{o}}rsson, Kazuo Iwama, Shuichi Miyazaki, and Shiro
  Taketomi.
\newblock Online independent sets.
\newblock {\em Theor. Comput. Sci.}, 289(2):953--962, 2002.

\bibitem{HopcroftKarp1973}
John~E. Hopcroft and Richard~M. Karp.
\newblock An $n^{5/2}$ algorithm for maximum matchings in bipartite graphs.
\newblock {\em {SIAM} J. on Comput.}, 2(4):225--231, 1973.

\bibitem{Jaillet2014}
Patrick Jaillet and Xin Lu.
\newblock Online stochastic matching: New algorithms with better bounds.
\newblock {\em Mathematics of Operations Research}, 39(3):624--646, 2014.

\bibitem{Kapralov13}
Michael Kapralov.
\newblock Better bounds for matchings in the streaming model.
\newblock In {\em Proc. of {SODA}}, pages 1679--1697, 2013.

\bibitem{KarpVV90}
R.~M. Karp, U.~V. Vazirani, and V.~V. Vazirani.
\newblock An optimal algorithm for on-line bipartite matching.
\newblock In {\em Proc. of {STOC}}, pages 352--358, 1990.

\bibitem{KonradMM12}
Christian Konrad, Fr{\'{e}}d{\'{e}}ric Magniez, and Claire Mathieu.
\newblock Maximum matching in semi-streaming with few passes.
\newblock In {\em Proc. of {APPROX} and {RANDOM}}, pages 231--242, 2012.

\bibitem{Kurtz70}
Thomas~G. Kurtz.
\newblock Solutions of ordinary differential equations as limits of pure jump
  markov processes.
\newblock {\em Journal of Applied Probability}, 7(1):49--58, 1970.

\bibitem{LucierS15}
Brendan Lucier and Vasilis Syrgkanis.
\newblock Greedy algorithms make efficient mechanisms.
\newblock In {\em Proc. of Conference on Economics and Computation, {EC}},
  pages 221--238, 2015.

\bibitem{Madry2013}
A.~Madry.
\newblock Navigating central path with electrical flows: From flows to
  matchings, and back.
\newblock In {\em Proc. of {FOCS}}, pages 253--262, 2013.

\bibitem{Madry2016}
A.~Madry.
\newblock Computing maximum flow with augmenting electrical flows.
\newblock In {\em Proc. of FOCS)}, pages 593--602, 2016.

\bibitem{Mahdian2011}
Mohammad Mahdian and Qiqi Yan.
\newblock Online bipartite matching with random arrivals: An approach based on
  strongly factor-revealing {LP}s.
\newblock In {\em Proc. of {STOC}}, pages 597--606, 2011.

\bibitem{Manshadi2011}
Vahideh~H. Manshadi, Shayan~Oveis Gharan, and Amin Saberi.
\newblock Online stochastic matching: Online actions based on offline
  statistics.
\newblock In {\em Proc. of {SODA}}, pages 1285--1294, 2011.

\bibitem{McGregor17}
Andrew McGregor.
\newblock Graph sketching and streaming: New approaches for analyzing massive
  graphs.
\newblock In {\em Proc. of Intern. Comput. Sci. Symp. in Russia, {CSR}}, pages
  20--24, 2017.

\bibitem{Mehta2013}
Aranyak Mehta.
\newblock Online matching and ad allocation.
\newblock {\em Foundations and Trends® in Theoretical Computer Science},
  8(4):265--368, 2013.

\bibitem{MuchaSankowski2004}
M.~Mucha and P.~Sankowski.
\newblock Maximum matchings via gaussian elimination.
\newblock In {\em Proc. of {FOCS}}, pages 248--255, 2004.

\bibitem{PenaB16}
Nicolas Pena and Allan Borodin.
\newblock On the limitations of deterministic de-randomizations for online
  bipartite matching and max-sat.
\newblock {\em CoRR}, abs/1608.03182, 2016.

\bibitem{PoloczekD2011}
Matthias Poloczek.
\newblock {\em Bounds on Greedy Algorithms for MAX SAT}, pages 37--48.
\newblock 2011.

\bibitem{PoloczekSWZ17}
Matthias Poloczek, Georg Schnitger, David~P. Williamson, and Anke~Van Zuylen.
\newblock Greedy algorithms for the maximum satisfiability problem: Simple
  algorithms and inapproximability bounds.
\newblock {\em SICOMP}.
\newblock Accepted for publication.

\bibitem{PoloczekW16}
Matthias Poloczek and David~P. Williamson.
\newblock An experimental evaluation of fast approximation algorithms for the
  maximum satisfiability problem.
\newblock In {\em Proc. of Intern. Symp. on Experimental Algorithms, {SEA}},
  pages 246--261, 2016.

\bibitem{Tinhofer84}
Gottfried Tinhofer.
\newblock A probabilistic analysis of some greedy cardinality matching
  algorithms.
\newblock {\em Annals of Operations Research}, 1(3):239--254, 1984.

\bibitem{Wormald1995}
Nicholas~C. Wormald.
\newblock Differential equations for random processes and random graphs.
\newblock {\em Ann. Appl. Probab.}, 5(4):1217--1235, 11 1995.

\bibitem{Wormald1999}
Nicholas~C. Wormald.
\newblock The differential equation method for random graph processes and
  greedy algorithms.
\newblock In {\em Lectures on approximation and randomized algorithms}, pages
  73--155. PWN, Warsaw, 1999.

\bibitem{YeB08}
Yuli Ye and Allan Borodin.
\newblock Priority algorithms for the subset-sum problem.
\newblock {\em J. Comb. Optim.}, 16(3):198--228, 2008.

\end{thebibliography}
\bibliographystyle{plain}

\end{document}